\newcommand{\myparagraph}[1]{\needspace{1\baselineskip}\medskip\noindent {\bf #1}}
\definecolor{penndarkestblue}{cmyk}{1,0.74,0,0.77}
\definecolor{penndarkerblue}{cmyk}{1,0.74,0,0.70}
\definecolor{pennblue}{cmyk}{0.99,0.66,0,0.57} 
\definecolor{pennlighterblue}{cmyk}{0.98,0.44,0,0.35}
\definecolor{pennlightestblue}{cmyk}{0.38,0.17,0,0.17} 
\definecolor{penndarkestred}{cmyk}{0,1,0.89,0.66}
\definecolor{penndarkerred}{cmyk}{0,1,0.88,0.55}
\definecolor{pennred}{cmyk}{0,1,0.83,0.42} 
\definecolor{pennlighterred}{cmyk}{0,1,0.6,0.24}
\definecolor{pennlightestred}{cmyk}{0,0.43,0.26,0.12} 
\definecolor{penndarkestgreen}{cmyk}{1,0,1,0.68}
\definecolor{penndarkergreen}{cmyk}{1,0,1,0.57}
\definecolor{penngreen}{cmyk}{1,0,1,0.44} 
\definecolor{pennlightergreen}{cmyk}{1,0,1,0.25}
\definecolor{pennlightestgreen}{cmyk}{0.43,0,0.43,0.13}
\definecolor{penndarkestorange}{cmyk}{0,0.65,1,0.49}
\definecolor{penndarkerorange}{cmyk}{0,0.65,1,0.33}
\definecolor{pennorange}{cmyk}{0,0.54,1,0.24} 
\definecolor{pennlighterorange}{cmyk}{0,0.32,1,0.13}
\definecolor{pennlightestorange}{cmyk}{0,0.15,0.46,0.06}
\definecolor{penndarkestpurple}{cmyk}{0,1,0.11,0.86}
\definecolor{penndarkerpurple}{cmyk}{0,1,0.13,0.82}
\definecolor{pennpurple}{cmyk}{0,1,0.11,0.71} 
\definecolor{pennlighterpurple}{cmyk}{0,1,0.05,0.46}
\definecolor{pennlightestpurple}{cmyk}{0,0.35,0.02,0.23}
\definecolor{pennyellow}{cmyk}{0,0.20,1,0.05} 
\definecolor{pennlightgray1}{cmyk}{0,0,0,0.05}
\definecolor{pennlightgray3}{cmyk}{0.01,0.01,0,0.18}
\definecolor{pennmediumgray1}{cmyk}{0.04,0.03,0,0.31}
\definecolor{pennmediumgray4}{cmyk}{0.08,0.06,0,0.54}
\definecolor{penndarkgray2}{cmyk}{0.09,0.07,0,0.71}
\definecolor{penndarkgray4}{cmyk}{0.1,0.1,0,0.92}
\def\Tr{\mathsf{T}}
\def\bbsigma{\boldsymbol{\sigma}}
\newtheorem{proposition}{\hspace{0pt}\bf Proposition}
\newtheorem{corollary}{\hspace{0pt}\bf Corollary}
\newtheorem{remark}{\hspace{0pt}\bf Remark}
\newtheorem{definition}{\hspace{0pt}\bf Definition}
\renewcommand\blue{\color{black}}
\begin{document}

\title{Invariance-Preserving Localized Activation Functions for Graph Neural Networks}

\author{Luana~Ruiz,~%\IEEEmembership{Student~Member,~IEEE,}
		Fernando~Gama,~%\IEEEmembership{Student~Member,~IEEE,}
        Antonio~G.~Marques~%\IEEEmembership{Senior~Member,~IEEE,}
        and~Alejandro~Ribeiro%,~\IEEEmembership{Member,~IEEE}% <-this % stops a space
\thanks{This work in this paper was supported by NSF CCF 1717120, ARO W911NF1710438, ARL DCIST CRA W911NF-17-2-0181, ISTC-WAS and Intel DevCloud, and Spanish MINECO grant TEC2016-75361-R. Preliminary results have been submitted for publication at the ICASSP19 conference \cite{ruiz18-local}. L. Ruiz, F. Gama and A. Ribeiro are with the Dept. of Electrical and Systems Eng., Univ. of Pennsylvania., A. G. Marques is with the Dept. of Signal Theory and Comms., King Juan Carlos Univ.  Email: \{rubruiz,fgama,aribeiro\}@seas.upenn.edu, antonio.garcia.marques@urjc.es.
}
}

% Headers:
\markboth{IEEE TRANSACTIONS ON SIGNAL PROCESSING (ACCEPTED)}%
{Invariance-Preserving Localized Activation Functions for Graph Neural Networks}

\maketitle

\begin{abstract}
Graph signals are signals with an irregular structure that can be described by a graph. Graph neural networks (GNNs) are information processing architectures tailored to these graph signals and made of stacked layers that compose graph convolutional filters with nonlinear activation functions. Graph convolutions endow GNNs with invariance to permutations of the graph nodes' labels. In this paper, we consider the design of trainable nonlinear activation functions that take into consideration the structure of the graph. This is accomplished by using graph median filters and graph max filters, which mimic linear graph convolutions and are shown to retain the permutation invariance of GNNs. We also discuss modifications to the backpropagation algorithm necessary to train local activation functions. The advantages of localized activation function architectures are demonstrated in \blue{four numerical experiments: source localization on synthetic graphs, authorship attribution of 19th century novels, movie recommender systems and scientific article classification.} In all cases, localized activation functions are shown to improve model capacity.
% EDICS:
% 66. NEG-SPGR Signal processing over graphs (filtering, transforms, etc) < NEG SIGNAL PROCESSING FOR NETWORKS AND GRAPHS
% 39. MLR-DEEP Deep learning techniques < MLR MACHINE LEARNING
\end{abstract}

\begin{IEEEkeywords}
deep learning, convolutional neural networks, graph signal processing, nonlinear graph filters, activation functions, max filters, median filters
\end{IEEEkeywords}

\IEEEpeerreviewmaketitle

%%%%%%%%%%%%%%%%%%%%%%%%%%%%%%%
%%% SECTION : Introduction  %%%
%%%%%%%%%%%%%%%%%%%%%%%%%%%%%%%

\section{Introduction} \label{sec:intro}

%!TEX root = nonlinear.tex

%%%%%%%%%%%%%%%%%%%%%%%%%%%%%%%
%%% SECTION : Introduction  %%%
%%%%%%%%%%%%%%%%%%%%%%%%%%%%%%%

With a local structure that repeats itself at every point, images and time signals are characterized by their regularity. However, many problems in contemporary information processing ---such as analyzing texts or designing recommender systems--- leverage data with support on irregular structures. These types of data are best represented as graph signals, which explains the growing interest in devising architectures capable of exploiting the structural information carried by the graph topology. In particular, graph convolutional neural networks (GNNs) \blue{\cite{bruna14-deepspectralnetworks, henaff2015deep, defferrard17-cnngraphs, kipf17-classifgcnn, ying18-pinterest, gama18-gnnarchit, xu2018how}} have been at the center of attention in an effort to reproduce the remarkable success that convolutional neural networks (CNNs) achieved in processing images and time signals \cite{kuo17-recos}.
%In particular, there have been significant efforts to reproduce the remarkable success that convolutional neural networks (CNNs) achieved in processing images and time signals \cite{kuo17-recos}. These efforts are in the form of graph convolutional neural networks (GNNs) \cite{gama18-gnnarchit, defferrard17-cnngraphs, kipf17-classifgcnn, bruna14-deepspectralnetworks, ying18-pinterest, ruiz18-local}. 

CNNs are made out of stacked layers that each comprise two basic operations: a bank of {\it trainable} linear convolutional filters and a {\it fixed} nonlinear activation function. GNNs retain this basic architecture but replace standard convolutions with graph convolutional filters \cite{gama18-gnnarchit}. Graph convolutional filters are built by reinterpreting a linear graph diffusion operator as a shift. Following this interpretation, a convolution is simply defined as the sum of scaled shift compositions. This allows us to further define linear transforms that are polynomials on the shift (diffusion) operator and which were seen to be proper generalizations of linear time invariant filters \cite{segarra17-linear}. Regarding activation functions, GNNs utilize the same functions as CNNs ---rectified linear units (ReLUs), sigmoids, and hyperbolic tangents---, and, like in CNNs, these functions are either applied locally \cite{defferrard17-cnngraphs, kipf17-classifgcnn} or within local neighborhoods when mixed with nonlinear pooling operations \blue{\cite{gama18-gnnarchit, henaff2015deep, bruna14-deepspectralnetworks}}.

Since the notion of a neighborhood in an image or a time signal is always the same, there is no reason to adapt the activation function for different tasks. However, the structure of a neighborhood may vary significantly from graph to graph. This motivates the design of activation functions that are {\it adapted} to the graph structure. In this paper, we propose not only to adapt the nonlinear activation function to the structure of the graph, but also to make it \blue{{\it multiresolution} and {\it trainable} by assigning linear weights to the value of the nonlinear function at $1, 2, \ldots, K$-hop node neighborhoods.} In particular, we introduce two types of local activation functions based on median and maximum graph filters \cite{segarra16-globalsip, segarra17-camsap}. Median and maximum graph filters are the graph signal processing (GSP) counterparts of the rank filters studied in traditional signal processing \cite{hodgson85-rank}. Similarly to linear graph filters, median and maximum filters encode graph structural information, but they do so through an implicit, nonlinear dependence on node neighborhoods of the graph. \blue{As such, they can be used to design multiresolution activation functions that learn to assign different weights to different neighborhood resolutions (Section \ref{sec:local}).} 

A fundamental consideration in the introduction of trainable activation functions is retaining the permutation invariance of GNNs \blue{\cite{ZouLerman18-Scattering, gama19-stability, gama19-neurips}}. \blue{Observe that this form of invariance is with respect to a reordering of the nodes' labels that retains the structure of the graph. This is different from the definition of permutation invariance used in set theory, which requires invariance with respect to all possible reorderings \cite{rezatofighi17-deepsetnet, zaheer17-deepsets}.} Indeed, since graph convolutional filters are polynomials on the diffusion operator, they can be readily shown to be invariant to permutations or node relabelings (Proposition \ref{prop_invariance}). This is an important property because it renders the processing of graph signals independent of the choice of node labels and is an effective way of exploiting the internal symmetries of graph signals. Pointwise activation functions are independent of the graph structure and, as such, do not affect permutation invariance. We will show here that nonlinear activation functions based on median and maximum filters, although local, are still invariant to permutations (Propositions \ref{nl_prop_invariance} and \ref{max_nl_prop_invariance}).

It is important to mention that the use of trainable activation functions in CNNs \cite{chandra04-sigmoid, he15-rect, goodfellow13-maxout, goodfellow16-deeplearn} and GNNs \cite{scardapane18-kernelactiv} has been pursued with success but that the existing literature focuses on learning pointwise activations functions. In particular, \cite{chandra04-sigmoid} uses a sigmoid function that is exponentiated by a trainable parameter to change the activation function slope in feedforward neural networks. In \cite{he15-rect}, the ReLU activation is paired with a regularizer whose importance is controlled by a trainable linear weight variable. Perhaps the best known example is that of maxout units \cite{goodfellow13-maxout, goodfellow16-deeplearn}, which replace pointwise max functions by the maximum among a number of filters at a given layer. In the specific case of activation functions for GNNs, the work in \cite{scardapane18-kernelactiv} proposes learning a pointwise activation function parametrized by a dictionary. All of these papers differ from the goal of our paper, which is to design activation functions that learn to assign different weights to different neighborhood resolutions. 

This paper is organized as follows. Section \ref{sec:gnn} describes a basic GNN architecture and lays the ground for the localized activation functions introduced in Section \ref{sec:local}. In Sections \ref{sbs:neigh_med} and \ref{sbs:neigh_max} respectively, we define a class of linearly parametrized multiresolution median and max graph filters, which are further interpreted as permutation invariant localized activation functions whose linear parameters can be trained to learn appropriate localized activation functions for GNNs. In Section \ref{sbs:backpropagation}, we address backpropagation training of the multi-hop median and multi-hop max operators. We conclude Section \ref{sec:bp} with considerations on the additional computational complexity incurred by the localized activation functions we propose (Section \ref{sbs:complexity}). In Section \ref{sec:sims}, the performance of our localized activation functions is evaluated on both synthetic and real-world datasets, including the problems of authorship attribution, \blue{movie recommendation systems and scientific article classification}. Concluding remarks are presented in Section \ref{sec:conclusions}.

\myparagraph{Notation.} Bold uppercase $\bbA$ is a matrix and bold lowercase $\bbx$ is a vector. We use $[\bbA]_{ij}$ to denote the $(i,j)$ entry of $\bbA$ and $[\bbx]_i$ for the $i$th entry of $\bbx$. Transposition of vectors and matrices are written as $\bbx^{\Tr}$ and $\bbA^\Tr$, respectively. The operation $\diag(\bbA)$ is a column vector with entries $[\diag(\bbA)]_i = [\bbA]_{ii}$. The operation $\diag(\bbx)$ yields a diagonal matrix with $[\diag(\bbx)]_{ii} = [\bbx]_{i}$. Calligraphic uppercase $\ccalT$ is a set with cardinality $|\ccalT|$.

%%%%%%%%%%%%%%%%%%%%%%%%%%%%%%%%%%%%%%%%%%%%%%%%%%%%%%
%%% SECTION : Graph Convolutional Neural Networks  %%%
%%%%%%%%%%%%%%%%%%%%%%%%%%%%%%%%%%%%%%%%%%%%%%%%%%%%%%

\section{Convolutional Processing of Graph Signals} \label{sec:gnn} 

%!TEX root = nonlinear.tex

%%%%%%%%%%%%%%%%%%%%%%%%%%%%%%%%%%%%%%%%%%%%%%%%%%%%%%%%%%%%%%%%%%
%%%   S   E   C   T   I   O   N   %%%%%%%%%%%%%%%%%%%%%%%%%%%%%%%%
%%%%%%%%%%%%%%%%%%%%%%%%%%%%%%%%%%%%%%%%%%%%%%%%%%%%%%%%%%%%%%%%%%
%
We consider graph signals $\bbx = [x_1, \ldots, x_N]^\Tr \in \reals^N$ in which the component $x_i$ is associated with the $i$th node of a weighted and directed graph $\ccalG$. The graph is composed of a vertex set $\ccalV = \{1,\dots,N\}$, an edge set $\ccalE \subseteq \ccalV \times \ccalV$ of ordered pairs $(i,j)$ and a weight function $\ccalA: \ccalE \to \mbR$ taking values $\ccalA(i,j) = a_{ij}$. The presence of the edge $(i,j)$ in the set $\ccalE$ is interpreted as an expectation that signal components $i$ and $j$ are close. The weight $a_{ij}$ measures the expected similarity between nodes, i.e., the larger $a_{ij}$, the more related components $x_i$ and $x_j$ should be. Associated with the graph $\ccalG$, we also define the weighted adjacency matrix $\bbA \in \reals^{N \times N}$ and the unweighted adjacency matrix $\bbN \in \{0,1\}^{N \times N}$. Both of these matrices have a sparsity pattern that matches that of the edge set of the graph by taking values $[\bbA]_{ij}=[\bbN]_{ij}=0$ for all $(j,i) \notin \ccalE$. At entries corresponding to edges of $\ccalG$, we have
\begin{equation} \label{eqn:adjacency_matrices}
   [\bbA]_{ij} = a_{ji}, \quad
   [\bbN]_{ij} = 1,      \quad 
   \text{for all\ } (j,i) \in \ccalE\ .
\end{equation}
A graph shift operator associated with $\ccalG$ is a sparse matrix $\bbS$ whose nonzero entries are either in the diagonal or at entries that match an edge of the graph,
\begin{equation} \label{eqn:gso_sparsity_pattern}
   [\bbS]_{ij} = s_{ij} \neq 0, \quad
   \text{if\ }  i=j, \
   \text{or\ } (j,i) \in \ccalE\ .
\end{equation}
The nonzero entry pattern in \eqref{eqn:gso_sparsity_pattern} is meant to abstract the properties that are shared between different matrix representations. The weighted adjacency $\bbA$ and the unweighted adjacency $\bbN$ defined in \eqref{eqn:adjacency_matrices} satisfy the restrictions placed by \eqref{eqn:gso_sparsity_pattern} on the shift operator $\bbS$. Other acceptable choices are the corresponding Laplacians $\bbL_\bbA := \diag(\bbA \bbone) - \bbA$ and $\bbL_\bbN := \diag(\bbN \bbone) - \bbN$ as well as the self loop adjacency $\bbM = \bbI + \bbN$. Using $\bbS$ as a stand-in for an arbitrary matrix representation of the graph avoids restricting attention to a particular selection. This is useful because while different matrix representations are of interest in different contexts, they can all be leveraged in a similar manner to process graph signals $\bbx$ using graph convolutions as we explain in Section \ref{sec_graph_convolutions}.

The graph shift operator induces neighborhoods in the graph. The 1-hop neighborhood of node $i$ is the set of nodes $\ccalN_i = \neighborhood{i}{1} := \{ j :(j,i) \in \ccalE \}$ that can be reached from $i$ by taking a single hop along an edge $(j,i)$. More generically, the $k$-hop neighborhood is the set of nodes that can be reached in exactly $k$ hops. This set is easily determined from the nonzero elements of the $k$th power of the unweighted adjacency matrix $\bbS = \bbN$,
% EQN
\begin{equation} \label{eqn:neigh}
   \neighborhood{i}{k} := 
        \left\{j \ :\ \big[\bbS^k\big]_{ij} \neq 0\right\}\ .
\end{equation}
Observe that, consistent with \eqref{eqn:neigh}, the $0$-hop neighborhood is the node itself since for $k=0$ we have $\bbS^k=\bbS^0=\bbI$. In general, $\ccalN_{i}^{l} \nsubseteq \ccalN_{i}^{k}$ for $l < k$ since a node may be reachable in exactly $k$ hops but not reachable in exactly $l < k$ hops. A sufficient condition for having 
$\ccalN_{i}^{l} \subseteq \ccalN_{i}^{k}$ for $l < k$ is that the shift operator be nonnegative with a full diagonal. An example of a shift operator with this property is the self loop unweighted adjacency matrix $\bbM=\bbI+\bbN$; see Remark \ref{rmk_choice_of_shift}. Further note that we can think of $\bbS^k$ itself as the shift operator of a graph. With this interpretation, the $k$-hop neighborhood of $\bbS$ is equivalent to the 1-hop neighborhood of $\bbS^k$. We will exploit this fact to simplify definitions in Section \ref{sec:local}.

%%%%%%%%%%%%%%%%%%%%%%%%%%%%%%%%%%%%%%%%%%%%%%%%%%%%%%%%%%%%%%%%%%
%%%   S   E   C   T   I   O   N   %%%%%%%%%%%%%%%%%%%%%%%%%%%%%%%%
%%%%%%%%%%%%%%%%%%%%%%%%%%%%%%%%%%%%%%%%%%%%%%%%%%%%%%%%%%%%%%%%%%
%
\subsection{Graph Convolutions}\label{sec_graph_convolutions}

A graph convolution is defined as a linear operator $\bbH$ that can be written as a polynomial in the shift operator operator $\bbS$ \blue{\cite{sandryhaila13-dspg, sandryhaila14-freq, shuman13-mag, du2018graph}}. Formally, for a given vector of coefficients $\bbh=[h_0, \ldots, h_{K-1}] \in \reals^K$ and a graph signal $\bbx$, the graph convolution of $\bbh$ and $\bbx$ is an operation whose outcome is the graph signal
% EQUATION
\begin{equation} \label{eqn:lsigf}
   \bbz %\  = \ \bbH\bbx 
        \  = \ \sum_{k=0}^{K-1} h_k \bbS^k\bbx
        \ := \ \bbh  *_\bbS \bbx
\end{equation}
where we have defined the graph convolution operator $*_\bbS$ to represent the linear transformation in \eqref{eqn:lsigf}. The graph convolution in \eqref{eqn:lsigf} shares the localization properties of regular convolutions since each of the terms in the polynomial performs operations that are localized to a specific neighborhood. Indeed, it is straightforward to see that we can have $[\bbS^k]_{ij}\neq 0$ only when $j$ is in the $k$-hop neighborhood of $i$. Consequently, the $i$th entry of the product $\bbS^k\bbx$ is only affected by the entries $x_j$ for which $j\in\neighborhood{i}{k}$. We can then think of the first polynomial term $h_0 \bbS^0\bbx$ as a nodewise operation, the second polynomial term  $h_1 \bbS^1\bbx$ as a 1-hop neighborhood operation, and, in general, the $k$th polynomial term as an operation localized to $(k-1)$-hop neighborhoods. This is akin to regular convolutional filters of order $K$ extending to no more than $K-1$ points in time. This property makes the graph convolution in \eqref{eqn:lsigf} a natural choice for the extension of convolutional neural networks to signals supported on graphs, as we discuss in the following section. 

%%%%%%%%%%%%%%%%%%%%%%%%%%%%%%%%%%%%%%%%%%%%%%%%%%%%%%%%%%%%%%%%%%
%%%   R   E   M   A   R   K   %%%%%%%%%%%%%%%%%%%%%%%%%%%%%%%%%%%%
%%%%%%%%%%%%%%%%%%%%%%%%%%%%%%%%%%%%%%%%%%%%%%%%%%%%%%%%%%%%%%%%%%
%
\begin{remark}\normalfont We point out that \eqref{eqn:lsigf} is also referred to as a linear shift invariant (LSI) filter because of its invariance with respect to the application of the shift operator. Namely, if the input $\bbx$ is replaced by the (shifted) input $\bbS\bbx$ the output shifts from $\bbz$ to \bbS\bbz. This is analogous to (convolutional) linear time invariant (LTI) filters in which a time shift of the input produces a time shift at the output. LSI filters also admit spectral representations in terms of graph Fourier transforms that are analogous to spectral representations of time invariant filters \cite{sandryhaila13-dspg}. The connections between LSI and LTI filters are deeper than simple analogies. Regular convolutions and regular Fourier transforms can be recovered if the graph shift operator is particularized to a cycle graph representing a periodic time axis.
\end{remark}

%%%%%%%%%%%%%%%%%%%%%%%%%%%%%%%%%%%%%%%%%%%%%%%%%%%%%%%%%%%%%%%%%%
%%%   S   E   C   T   I   O   N   %%%%%%%%%%%%%%%%%%%%%%%%%%%%%%%%
%%%%%%%%%%%%%%%%%%%%%%%%%%%%%%%%%%%%%%%%%%%%%%%%%%%%%%%%%%%%%%%%%%
%
\subsection{Graph Neural Networks}

Consider a training set $\ccalT = \{(\bbx_m, \bby_m)\}$ comprised of $|\ccalT|$ input-output samples $(\bbx_m, \bby_m)$. In each training example, the vector $\bbx_{m}$ is a graph signal and the vector $\bby_{m}$ is some observed output whose shape depends on the problem at hand. For instance, $\bby_{m}$ might be a class label in a classification problem, or another graph signal in the context of regression. The objective of learning is to find a representation of the training set that can produce output estimates $\hby (\bbx)$ for unknown inputs $\bbx \notin \ccalT$. Graph Neural Networks (GNNs) do so by composing computational layers that are themselves the composition of two distinct operations, namely: (i) a filter bank of graph convolutions, and (ii) a pointwise nonlinear activation function \cite{kipf17-classifgcnn}. Formally, each layer $\ell$ takes at its input a set of $F_{\ell-1}$ features $\bbx_{\ell-1}^f$, where the $\bbx_{\ell-1}^f$ are signals supported on $\ccalG$ to which we attach the graph shift operator $\bbS$. Each  feature is processed by a separate set of filter banks, and each filter bank consists of $F_{\ell}$ graph convolutional filters as described by equation \eqref{eqn:lsigf}. Denoting the filters' coefficients by $\bbh_\ell^{fg} \in \reals^K$, we can write the resulting $F_{\ell}$ convolutional features $\bbu_\ell^{f}$ as
\begin{equation} \label{eqn:gnn_linear_summary}
   \bbu_{\ell}^{f} 
       \ =\ \sum_{g=1}^{F_{\ell-1}} \left(\bbh^{f g}_{\ell}  *_\bbS \bbx \right)\ 
       \ =\ \sum_{g=1}^{F_{\ell-1}} \sum_{k=0}^{K-1} h^{f g}_{\ell k} \ \bbS^k \bbx_{\ell-1}^{g}\ .
\end{equation}

The convolutional features $\bbu_{\ell}^{f}$ are then fed into a scalar and nonlinear activation function to produce the $\ell$th layer's output features,
\begin{equation} \label{eqn:pointwise_nonlinearity}
   \bbx_{\ell}^{f} = \sigma \left(\bbu_{\ell}^{f}\right)
\end{equation}
which will then act as the subsequent layer's input features.

Beginning with $\bbx_0=\bbx$ as the first layer's ($\ell=1$) input, we proceed recursively through \eqref{eqn:gnn_linear_summary}-\eqref{eqn:pointwise_nonlinearity} until the last layer. In a system with $L$ layers, the GNN output is the collection of features $\bbx_L = [(\bbx_{L}^{1})^{\Tr},\ldots,(\bbx_{L}^{F_{L}})^{\Tr}]^{\Tr}$, and represents the output estimate $\hby(\bbx)=\bbx_L$. For future reference, we define the set $\ccalH = \{\bbh_{\ell}^{fg}\}_{\ell,f,g}$ grouping all the graph filter coefficients of the model to write the GNN as a mapping $\Phi : \bbx_0 \mapsto \bbx_L$ parametrized by $\bbS$ and by the coefficient set $\ccalH$:
\begin{equation} \label{eqn:GNN_transform}
   \Phi(\bbx; \bbS, \ccalH) 
       \ = \ \bbx_L 
       \ = \ \hby(\bbx)\ .
\end{equation}
In \eqref{eqn:GNN_transform}, $\bbS$ is given, while $\ccalH$ is determined by optimizing a loss function $\bbarJ = \sum_{\ccalT} J[\bby,\hby(\bbx_{m})]=\sum_{\ccalT} J[\bby,\Phi(\bbx_{m}; \bbS, \ccalH)]$ on the training set $\ccalT$.

GNNs are particular cases of neural networks (NNs) and generalizations of convolutional neural networks (CNNs). NNs are obtained through the composition of arbitrary linear operations with pointwise activation functions. In practice, they are hard to train and underperform architectures exploiting structural information because the number of parameters to learn is too large. CNNs resolve this problem for time signals, images, and other signals supported on regular domains by restricting arbitrary linear transformations to convolutional filter banks \cite{lecun15-deeplearning}. In graph domains, GNNs fulfill the same purpose by utilizing graph convolutions [cf. \eqref{eqn:lsigf} and \eqref{eqn:gnn_linear_summary}]. 

%\green{The linear part exploits the structure of the graph. The nonlinear doesn't. The goal of this paper is to design nonlinear activation functions tied to the structure of the graph.}

The pointwise nonlinear operators typically used in GNNs, however, neglect the graph structure. They process nodes homogeneously thus ignoring the different compositions of their heterogeneous neighborhoods. This is not unreasonable for the regular signals that are processed with CNNs because their underlying regular structures repeat themselves from node to node and layer to layer. On a generic irregular graph, however, the individual node connectivity profiles are important because they explain how each node interacts with the other nodes in the graph. In this context, summarizing nonlinear operations acting on node neighborhoods instead of individual nodes would exploit meaningful information that the mere application of a graph filter followed by a pointwise nonlinearity do not. Preliminary evidence for this ability comes from the better reconstruction properties of median (nonlinear) graph filters and the improvement in topology identification stemming from the use of nonlinear structural models \cite{shen18-online}. The goal of this paper is to design nonlinear local activation functions for GNNs that leverage the neighborhood structure of the graph. 

%%%%%%%%%%%%%%%%%%%%%%%%%%%%%%%%%%%%%%%%%%%%%%%%%
%%% SECTION : Localized activations%%%%%%%%%%%%%%
%%%%%%%%%%%%%%%%%%%%%%%%%%%%%%%%%%%%%%%%%%%%%%%%%

\section{Invariance-Preserving Local Activation Functions} \label{sec:local}

%!TEX root = nonlinear.tex

%%%%%%%%%%%%%%%%%%%%%%%%%%%
%%% SECTION : Localized NL %%%%
%%%%%%%%%%%%%%%%%%%%%%%%%%%

In pursuing nonlinear activation functions that exploit the graph structure it is important to begin by understanding why graph convolutions are suitable for processing graph signals. A partial answer to this question is the fact that processing signals with a GNN is independent of the graph labeling, as we formally state and prove in the following proposition. 

%%%%%%%%%%%%%%%%%%%%%%%%%%%%%%%%%%%%%%%%%%%%%%%%%%%%%%%%%%%%%%%%%%
%%%   P   R   O   P   O   S   I  T   I   O   N   %%%%%%%%%%%%%%%%%
%%%%%%%%%%%%%%%%%%%%%%%%%%%%%%%%%%%%%%%%%%%%%%%%%%%%%%%%%%%%%%%%%%
%
\begin{proposition}\label{prop_invariance}
Consider a graph signal $\bbx$ supported on a graph $\ccalG$ with shift operator $\bbS$. Let $\Phi(\bbx; \bbS, \ccalH)$ be the output of a graph neural network (GNN) with coefficient set $\ccalH$ [cf. \eqref{eqn:gnn_linear_summary}-\eqref{eqn:GNN_transform}]. If $\bbP$ is a permutation matrix, then
\begin{equation} \label{eqn_prop_invariance}
   \Phi\Big(\bbP^\Tr\bbx; \bbP^\Tr\bbS\bbP, \ccalH \Big) 
      = \bbP^\Tr\Phi\Big(\bbx; \bbS, \ccalH\Big)
\end{equation}
i.e., the output of the GNN is invariant to permutations.
\end{proposition}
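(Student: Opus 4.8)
The plan is to proceed by induction on the layers $\ell = 1, \ldots, L$, showing that each layer's feature map commutes with the permutation in the appropriate sense. The key observation is that a permutation matrix $\bbP$ is orthogonal, so $\bbP^\Tr = \bbP^{-1}$, and therefore $(\bbP^\Tr\bbS\bbP)^k = \bbP^\Tr\bbS^k\bbP$ for every $k \geq 0$. This single algebraic fact is what makes graph convolutions permutation-equivariant, and everything else is bookkeeping.

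**First I would** establish the base case and the inductive step together. Suppose that at the input to layer $\ell$ the features of the permuted problem are $\bbP^\Tr\bbx_{\ell-1}^g$ for each $g$ (this holds trivially at $\ell=1$ since $\bbx_0 = \bbx$ maps to $\bbP^\Tr\bbx$). Applying the convolutional update \eqref{eqn:gnn_linear_summary} with shift operator $\bbP^\Tr\bbS\bbP$ and the same coefficients $\ccalH$ gives
\begin{equation*}
   \bbu_{\ell}^{f}
      = \sum_{g=1}^{F_{\ell-1}} \sum_{k=0}^{K-1} h^{fg}_{\ell k}\, (\bbP^\Tr\bbS\bbP)^k\, (\bbP^\Tr\bbx_{\ell-1}^{g})
      = \bbP^\Tr \sum_{g=1}^{F_{\ell-1}} \sum_{k=0}^{K-1} h^{fg}_{\ell k}\, \bbS^k \bbx_{\ell-1}^{g},
\end{equation*}
using $(\bbP^\Tr\bbS\bbP)^k\bbP^\Tr = \bbP^\Tr\bbS^k$. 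Thus the convolutional features of the permuted problem equal $\bbP^\Tr$ times those of the original problem. **Next I would** apply the pointwise nonlinearity \eqref{eqn:pointwise_nonlinearity}: since $\sigma$ acts entrywise, it commutes with any coordinate permutation, i.e. $\sigma(\bbP^\Tr\bbv) = \bbP^\Tr\sigma(\bbv)$ for any vector $\bbv$. Hence $\bbx_\ell^f$ of the permuted problem is $\bbP^\Tr\bbx_\ell^f$ of the original, which closes the induction.

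**The conclusion** follows by reading off the output at layer $L$: stacking the features $\bbx_L^1, \ldots, \bbx_L^{F_L}$ and recalling \eqref{eqn:GNN_transform}, we get $\Phi(\bbP^\Tr\bbx; \bbP^\Tr\bbS\bbP, \ccalH) = \bbP^\Tr\Phi(\bbx; \bbS, \ccalH)$, which is exactly \eqref{eqn_prop_invariance}. **The only mild subtlety** — and the closest thing to an obstacle — is the stacking step if the output features are concatenated rather than kept as a list: one must be careful that the permutation acting on the stacked vector $\bbx_L$ is the block-diagonal $\mathrm{blkdiag}(\bbP, \ldots, \bbP)$ rather than $\bbP$ itself, or else treat the output as the tuple of per-feature signals so that $\bbP^\Tr$ acts on each. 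Since the GNN output is genuinely a collection of graph signals on the same node set, the per-feature viewpoint is the natural one and makes the statement clean; I would phrase the induction hypothesis at the level of individual features $\bbx_\ell^f$ to sidestep the issue entirely.
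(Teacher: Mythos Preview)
Your proposal is correct and follows essentially the same approach as the paper: show that graph convolutions are permutation-equivariant via the identity $(\bbP^\Tr\bbS\bbP)^k = \bbP^\Tr\bbS^k\bbP$, observe that pointwise nonlinearities commute with coordinate permutations, and then cascade through the layers. Your treatment is in fact a bit more explicit than the paper's (which writes the telescoping step somewhat loosely as ``$\bbP^k = \bbP$''), and your remark about the block-diagonal action on the stacked output is a valid clarification the paper glosses over.
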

%
%%%%%%%%%%%%%%%%%%%%%%%%%%%%%%%%%%%%%%%%%%%%%%%%%%%%%%%%%%%%%%%%%%
%%%   P   R   O   O   F   %%%%%%%%%%%%%%%%%%%%%%%%%%%%%%%%%%%%%%%%
%%%%%%%%%%%%%%%%%%%%%%%%%%%%%%%%%%%%%%%%%%%%%%%%%%%%%%%%%%%%%%%%%%
%
\begin{proof}
The result follows from the fact that the LSI filter in \eqref{eqn:lsigf} is invariant to permutations. To see this, consider the graph permutation $\bbS' = \bbP^\Tr\bbS\bbP$ and let $\bbx' = \bbP^\Tr \bbx$ be a signal on the permuted graph, with $\bbx$ its non-permuted counterpart. The application of \eqref{eqn:lsigf} to $\bbx'$ yields
% EQN 
\begin{equation*}% \label{eqn_prop_invariance_pf_10}
   \bbz' \  = \ \sum_{k=0}^{K-1} h_k {\bbS'}^k  \bbx'
         \  = \ \sum_{k=0}^{K-1} h_k (\bbP^\Tr\bbS\bbP)^k \bbP^\Tr\bbx\ .
\end{equation*}
Using the fact that $\bbP^k = \bbP$ and that permutation matrices are orthogonal, we get
% 
%\begin{equation} \label{eqn_prop_invariance_pf_10}
%   \bbz' \  = \ \sum_{k=0}^{K-1} h_k \bbP^\Tr\bbS^k\bbP \bbP^\Tr\bbx
%         \  = \ \sum_{k=0}^{K-1} h_k \bbP^\Tr\bbS^k \bbx
%\end{equation}
%
% 
\begin{equation*} %\label{eqn_prop_invariance_pf_10}
   \bbz' \  = \ \bbP^\Tr \bigg(\sum_{k=0}^{K-1} h_k \bbS^k\bbx\bigg)
         \  = \ \bbP^\Tr \bbz
\end{equation*}
which is simply the permutation by $\bbP$ of the output of filter \eqref{eqn:lsigf} applied to $\bbx$. As for activation functions, the fact that they are scalar automatically entails invariance to permutations. Thus, if the input to a GNN layer is permuted, its output will be permuted likewise. This is also true in architectures with multiple layers, where permutations will cascade from one layer to the other until they reach the output. \end{proof}

%%%%%%%%%%%%%%%%%%%%%%%%%%%%%%%%%%%%%%%%%%%%%%%%%%%%%%%%%%%%%%%%%%
%%%   F   I   G   U   R   E   %%%%%%%%%%%%%%%%%%%%%%%%%%%%%%%%%%%%
%%%%%%%%%%%%%%%%%%%%%%%%%%%%%%%%%%%%%%%%%%%%%%%%%%%%%%%%%%%%%%%%%%
%
\begin{figure*}[t]
	\centering
	\begin{subfigure}{0.25\textwidth}
		\centering
		\includegraphics[width=0.9\textwidth]{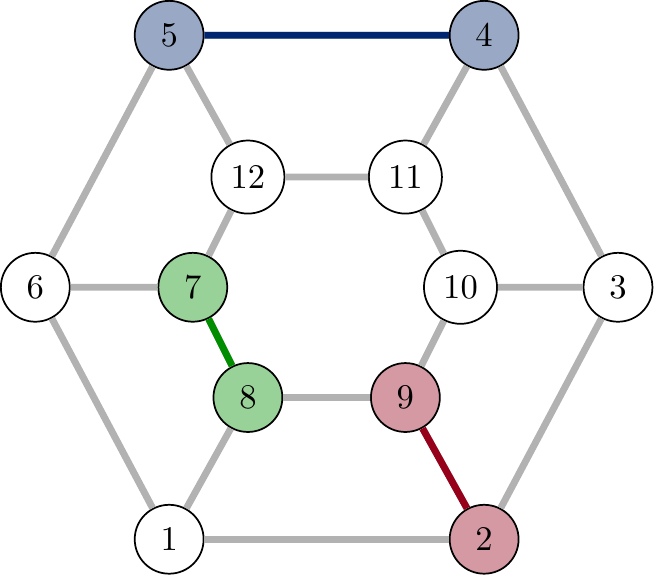}
		\caption{Graph $\ccalG$ and signal $\bbx$}
		\label{original}
	\end{subfigure}
		\hfill
	\begin{subfigure}{0.25\textwidth}
		\centering
		\includegraphics[width=0.9\textwidth]{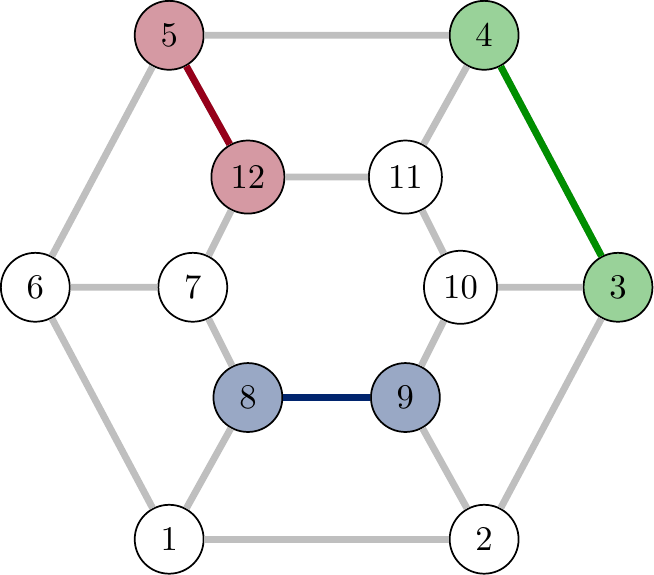} 
		\caption{Graph $\ccalG$ and permuted signal $\bbx'$}
		\label{permuted}
	\end{subfigure}
	\hfill
	\begin{subfigure}{0.25\textwidth}
		\centering
		\includegraphics[width=0.9\textwidth]{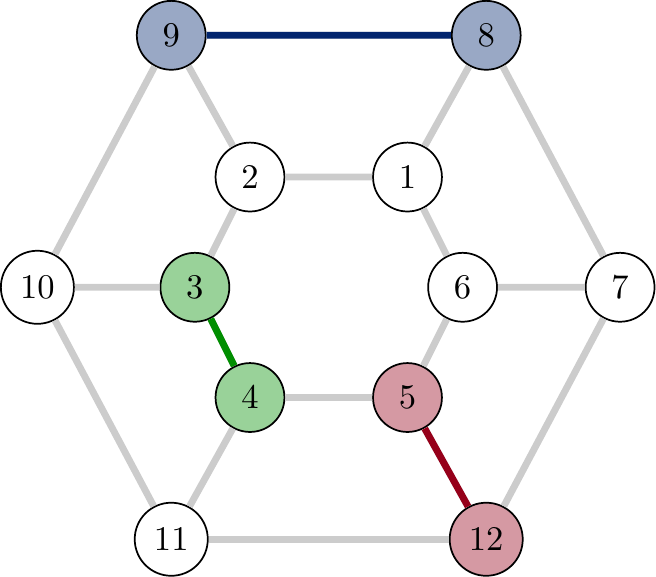} 
		\caption{Permuted graph $\ccalG'$ and permuted signal $\bbx'$}
		\label{permuted}
	\end{subfigure}
\caption{Permutation invariance of graph neural networks (GNNs). The output of a GNN is invariant to graph permutations (Proposition \ref{prop_invariance}). This not only means independence from labeling but it also shows that GNNs exploit internal signal symmetries. The signals on (a) and (b) are different signals on the same graph but they are permutations of each other -- interchange inner and outer hexagons and rotate $180^\circ$ (c.f. (c)). A GNN would learn how to classify the signal in (b) from seeing examples of the signal in (a). (Integer node labels ranging from 1 to 12. Signal values are represented by different colors.)} 
	\label{fig:permutation_invariance}
\end{figure*}	

%%%%%%%%%%%%%%%%%%%%%%%%%%%%%%%%%%%%%%%%%%%%%%%%%%%%%%%%%%%%%%%

The permutation invariance stated in Proposition \ref{prop_invariance} shows that the features that are learned by a GNN are independent of the labeling of the graph. But permutation invariance is also important because it means that GNNs exploit internal signal symmetries as we illustrate in Figure \ref{fig:permutation_invariance}. The graphs in Figure \ref{fig:permutation_invariance}-(a) and Figure \ref{fig:permutation_invariance}-(b) are the same, as indicated by the integer labels. The signals in Figure \ref{fig:permutation_invariance}-(a) and Figure \ref{fig:permutation_invariance}-(b) are different, as indicated by different colors. However, it is possible to permute the graph onto itself to make the signals match -- rotate $180^\circ$ degrees and pull it inside out (Figure \ref{fig:permutation_invariance}-(c)). It then follows from Proposition \ref{prop_invariance} that the output of a GNN applied to the signal on the left (a) is a corresponding permutation of the output of the same GNN applied to the signal on the right (b). This is beneficial because we can learn to process the signal on (a) from seeing examples of the signal on (b). Although most  graphs do not have perfect symmetries, existing perturbation analyses show that similar comments hold for transformations that are close to perturbations \cite{gama19-stability}. In the design of nonlinear, localized activation functions, our goal is to retain this property by making localized activation functions permutation invariant. 

%\red{In this section, we introduce localized activation functions as an alternative to regular pointwise activation functions in GNNs. Localized activation functions are defined as operators that take graph signal values at a given node and at some fixed-range neighborhood of this node to produce \textit{localized nonlinear combinations} of the input values. The advantage of this new class of activation functions is that the resulting nonlinear encodings carry information related not only to the input values, but also to the graph topology. Two types of localized activation functions are presented: the median and the max. These functions can be viewed as a class of median and max graph filters acting on neighborhoods of multiple depths, and define the median and max GNN architectures, which we describe next.} \blue{We can remove most of this now.}

%%%%%%%%%%%%%%%%%%%%%%%%%%%%%%%%%%%%%%%%%%%%%%%%%%
%%% SUBSECTION : Neighborhood-median activation %%
%%%%%%%%%%%%%%%%%%%%%%%%%%%%%%%%%%%%%%%%%%%%%%%%%%
%%
%

\subsection{Median GNNs} \label{sbs:neigh_med} 

Consider a set $\ccalX=\{x_1,\ldots, x_n\}$ and order the elements of $\ccalX$ so that $x_{[1]} \leq x_{[2]} \leq \ldots \leq x_{[n]}$. If we denote by $n \div 2$ the integer division of the cardinality of $\ccalX$ by 2, the median of the set $\ccalX$ is given by
\begin{equation} \label{eqn_set_median}
   \text{med}(\ccalX) = x_{[n \div 2 + 1]}\ .
\end{equation}
In order to define activation functions in terms of median graph filters, we begin by defining the median operator associated with a graph shift operator $\bbS$. 

%%%%%%%%%%%%%%%%%%%%%%%%%%%%%%%%%%%%%%%%%%%%%%%%%%%%%%%%%%%%%%%%%%%%%%%
%%%   D   E   F   I   N   I   T   I   O   N   %%%%%%%%%%%%%%%%%%%%%%%%%
%%%%%%%%%%%%%%%%%%%%%%%%%%%%%%%%%%%%%%%%%%%%%%%%%%%%%%%%%%%%%%%%%%%%%%%
%
\begin{definition}[Median operator] \label{def_median_operator}
Let $\bbS$ be a graph shift operator and let $\ccalN_i$ denote the neighborhoods induced by $\bbS$, each $i$, $1 \leq i \leq N$, being a node of the graph. The output of the median operator $\text{med} (\bbS, \cdot)$ applied to the graph signal $\bbx$ is the graph signal $\bbz  := \text{med} (\bbS, \bbx)$, whose components we write
\begin{equation} \label{eqn_def_median_operator}
   \big[\bbz\big]_i
         = \big[ \text{med} (\bbS, \bbx)\big]_i
         =       \text{med} \big( \{ x_j : j\in\ccalN_i \}\big)\ .
\end{equation} \end{definition}

%%%%%%%%%%%%%%%%%%%%%%%%%%%%%%%%%%%%%%%%%%%%%%%%%%%%%%%%%%%%%%%%%%%%%%%
%%%   M   A   I   N       M   A   T   T   E   R   %%%%%%%%%%%%%%%%%%%%%
%%%%%%%%%%%%%%%%%%%%%%%%%%%%%%%%%%%%%%%%%%%%%%%%%%%%%%%%%%%%%%%%%%%%%%%
%
Definition \ref{def_median_operator} implies that the median operator replaces the value of the analyzed signal $\bbx$ at each node by the median of the values of $\bbx$ in the corresponding neighborhood. In that sense, we can think of $\text{med} (\bbS, \cdot)$ as a nonlinear diffusion operator in which, instead of computing the average of neighboring values at each node, we compute their median. These two diffusions tend to yield similar values if neighboring values are symmetric around their mean. 

From the definition of the median operator, we can now define what we call the multiresolution median graph filter as a linear combination of medians associated with neighborhoods of different depths. We write it as follows.

%%%%%%%%%%%%%%%%%%%%%%%%%%%%%%%%%%%%%%%%%%%%%%%%%%%%%%%%%%%%%%%%%%%%%%%
%%%   D   E   F   I   N   I   T   I   O   N   %%%%%%%%%%%%%%%%%%%%%%%%%
%%%%%%%%%%%%%%%%%%%%%%%%%%%%%%%%%%%%%%%%%%%%%%%%%%%%%%%%%%%%%%%%%%%%%%%
%
\begin{definition}[Multiresolution median graph filter] \label{def_median_filter}
Given a graph shift operator $\bbS$ and a vector of $K+1$ filter coefficients $\bbw=[w_0, \ldots, w_K]^\Tr$, the output of the median graph filter with coefficients $\bbw$ applied to the signal $\bbx$ is the signal $\bbz$,
\begin{equation} \label{eqn_def_median_filter}
   \bbz  := \sum_{k=0}^{K} w_k\, \text{med} (\bbS^k, \bbx)\ .
\end{equation} \end{definition}

%%%%%%%%%%%%%%%%%%%%%%%%%%%%%%%%%%%%%%%%%%%%%%%%%%%%%%%%%%%%%%%%%%%%%%%
%%%   M   A   I   N       M   A   T   T   E   R   %%%%%%%%%%%%%%%%%%%%%
%%%%%%%%%%%%%%%%%%%%%%%%%%%%%%%%%%%%%%%%%%%%%%%%%%%%%%%%%%%%%%%%%%%%%%%
%
This definition is to be contrasted with that of the linear graph convolution in \eqref{eqn:lsigf}. In \eqref{eqn:lsigf}, each summand $\bbS^k\bbx$ represents a weighted average, so that the value of the signal at node $i$ is affected by the values of the signals at its $k$-hop neighbors. In \eqref{eqn_def_median_filter}, the summand $\text{med} (\bbS^k, \bbx)$ summarizes information from the same set, because the shift operator $\bbS^k$ is associated with a graph in which nodes are connected to their $k$-hop neighbors. Differently from regular linear filters, however, the signal values at neighboring nodes are not averaged, but summarized by the median operation. We can thus think of \eqref{eqn_def_median_filter} as a nonlinear convolutional filter constructed from median operations \blue{that retains the multiresolution aspect of conventional linear graph convolutions.} 

The median graph filter in Definition \ref{def_median_filter} is used here to define nonlinear activation functions. Specifically, consider a collection of median filter coefficients $\bbw_{\ell}^f$ and replace the pointwise activation functions in \eqref{eqn:pointwise_nonlinearity} by the \textit{local activation functions}
\begin{equation} \label{eqn_median_activation}
   \bbx_{\ell}^{f} 
       = \sum_{k=0}^{K}w_{\ell k}^f\,\text{med}(\bbS^k,\bbu_{\ell}^f)\ .
\end{equation}

A median GNN is a GNN with median activation functions, i.e., one in which the $\ell$th layer is defined by the composition of \eqref{eqn:gnn_linear_summary} and \eqref{eqn_median_activation}. We can interpret this architecture as the composition of two types of convolutional layers: linear convolutional layers [cf. \eqref{eqn:gnn_linear_summary}] and nonlinear convolutional layers based on median filters \eqref{eqn_median_activation}. Observe that the output of a median GNN is determined by the linear filter coefficients $\ccalH$ and the median graph filter coefficients $\ccalW = \{\bbw_{\ell}^{f}\}_{\ell,f}$. For future reference, we therefore define the median GNN map
\begin{equation} \label{eqn_median_GNN_transform}
   \Phi(\bbx; \bbS, \ccalH, \ccalW) 
       \ = \ \bbx_L 
       \ = \ \hby(\bbx)
\end{equation}
where $\bbx_L$ is obtained from the application of $L$ median GNN layers, each of which computes a linear convolution \eqref{eqn:gnn_linear_summary} followed by a median graph filter activation \eqref{eqn_median_activation}. That the definition of median graph filters parallels the definitions of graph convolutions is not an accidental choice. Choosing activation functions of this form is intended \blue{not only to encode the graph structure at multiple resolutions, but also} to preserve the permutation invariance of GNNs, as we formally state and prove next.

%%%%%%%%%%%%%%%%%%%%%%%%%%%%%%%%%%%%%%%%%%%%%%%%%%%%%%%%%%%%%%%%%%
%%%   P   R   O   P   O   S   I  T   I   O   N   %%%%%%%%%%%%%%%%%
%%%%%%%%%%%%%%%%%%%%%%%%%%%%%%%%%%%%%%%%%%%%%%%%%%%%%%%%%%%%%%%%%%
%
\begin{proposition}\label{nl_prop_invariance}
Consider a graph signal $\bbx$ supported on a graph with shift operator $\bbS$. Let $\Phi(\bbx; \bbS, \ccalH, \ccalW)$ be the output of a median GNN with linear filter coefficients $\ccalH$ and nonlinearity coefficients $\ccalW$ [cf. \eqref{eqn:gnn_linear_summary}-\eqref{eqn:GNN_transform}, \eqref{eqn_median_activation}. If $\bbP$ is a permutation matrix, then
\begin{equation} \label{eqn_nl_prop_invariance}
   \Phi\Big(\bbP^\Tr\bbx; \bbP^\Tr\bbS\bbP, \ccalH, \ccalW \Big) 
      = \bbP^\Tr\Phi\Big(\bbx; \bbS, \ccalH, \ccalW \Big)
\end{equation}
i.e., the output of the median GNN is invariant to permutations.
\end{proposition}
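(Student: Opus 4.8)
The plan is to reduce everything to a single equivariance identity for the median operator, exactly mirroring the structure of the proof of Proposition~\ref{prop_invariance}. Since a median GNN layer is the composition of a linear convolutional map \eqref{eqn:gnn_linear_summary} with a median activation \eqref{eqn_median_activation}, and since Proposition~\ref{prop_invariance} already establishes that the linear part commutes with relabelings (i.e. $\sum_g \bbh^{fg}_\ell *_{\bbP^\Tr\bbS\bbP} (\bbP^\Tr\bbx^g) = \bbP^\Tr \sum_g \bbh^{fg}_\ell *_\bbS \bbx^g$), it suffices to prove the one new ingredient: for any permutation matrix $\bbP$, any shift $\bbS$, any $k\ge 0$, and any signal $\bbu$,
\begin{equation} \label{eqn_median_equivariance}
   \text{med}\big((\bbP^\Tr\bbS\bbP)^k,\, \bbP^\Tr\bbu\big) \ = \ \bbP^\Tr\, \text{med}(\bbS^k, \bbu)\ .
\end{equation}
Granting \eqref{eqn_median_equivariance}, linearity of the sum $\sum_k w^f_{\ell k}(\cdot)$ gives equivariance of the whole activation \eqref{eqn_median_activation} (the coefficients $\ccalW$ are held fixed, just like $\ccalH$), and then one layer of the median GNN commutes with $\bbP^\Tr$; cascading through all $L$ layers as in Proposition~\ref{prop_invariance} yields \eqref{eqn_nl_prop_invariance}.

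The proof of \eqref{eqn_median_equivariance} rests on two observations. First, $(\bbP^\Tr\bbS\bbP)^k = \bbP^\Tr\bbS^k\bbP$ because $\bbP\bbP^\Tr = \bbI$, so it is enough to treat the case $k=1$ with $\bbS$ replaced by $\bbS^k$. Second, and this is the combinatorial heart of the argument, permuting the shift operator permutes the induced neighborhoods accordingly: if $\ccalN_i$ are the neighborhoods of $\bbS$ and $\ccalN'_i$ those of $\bbS' := \bbP^\Tr\bbS\bbP$, and if $\pi$ is the permutation with $[\bbP]_{ij} = 1 \iff j = \pi(i)$ (so $[\bbP^\Tr\bbu]_i = u_{\pi(i)}$), then $[\bbS']_{ij}\neq 0 \iff [\bbS]_{\pi(i)\pi(j)}\neq 0$, hence $j\in\ccalN'_i \iff \pi(j)\in\ccalN_{\pi(i)}$. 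Therefore the multiset of signal values seen at node $i$ on the permuted graph, namely $\{[\bbP^\Tr\bbu]_j : j\in\ccalN'_i\} = \{u_{\pi(j)} : \pi(j)\in\ccalN_{\pi(i)}\}$, equals the multiset $\{u_m : m\in\ccalN_{\pi(i)}\}$ seen at node $\pi(i)$ on the original graph. Since $\text{med}(\cdot)$ in \eqref{eqn_set_median} depends only on this multiset (it is the order statistic of the collected values and is invariant to the order in which they are listed), we get $[\text{med}(\bbS',\bbP^\Tr\bbu)]_i = \text{med}(\{u_m : m\in\ccalN_{\pi(i)}\}) = [\text{med}(\bbS,\bbu)]_{\pi(i)} = [\bbP^\Tr\text{med}(\bbS,\bbu)]_i$, which is \eqref{eqn_median_equivariance} for $k=1$; the general $k$ follows by the first observation.

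The step I expect to require the most care in writing cleanly is the index bookkeeping in the previous paragraph: pinning down the correspondence between the permutation matrix $\bbP$ and its underlying bijection $\pi$, tracking it through the conjugation $\bbP^\Tr\bbS\bbP$ and through the entrywise action on $\bbu$, and then verifying that ``same multiset of neighbor values'' really does hold with the right index on both sides. Everything else is routine: the identity $(\bbP^\Tr\bbS\bbP)^k = \bbP^\Tr\bbS^k\bbP$ is one line, linearity in the $w_k$'s is immediate, the linear-layer equivariance is quoted verbatim from Proposition~\ref{prop_invariance}, and the layer-to-layer cascade argument is word-for-word the one already given there. One minor subtlety worth a sentence: the ambiguity in which node wins a tie in the median (the $x_{[n\div 2 + 1]}$ convention) causes no trouble, since that convention selects a \emph{value} from the sorted list, and sorted lists of equal multisets are identical.
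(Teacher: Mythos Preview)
Your proposal is correct and follows essentially the same route as the paper: reduce to equivariance of a single layer, quote Proposition~\ref{prop_invariance} for the linear part, and establish the identity $\text{med}((\bbP^\Tr\bbS\bbP)^k,\bbP^\Tr\bbu)=\bbP^\Tr\,\text{med}(\bbS^k,\bbu)$ for the nonlinear part, then cascade. The only difference is one of detail: the paper simply asserts that the median operator ``is permutation invariant'' and moves on, whereas you actually supply the neighborhood/multiset argument (the bijection $\pi$, the equivalence $j\in\ccalN'_i\iff\pi(j)\in\ccalN_{\pi(i)}$, and the fact that $\text{med}$ depends only on the multiset of values). Your version is the more complete one; the paper's is a compressed form of the same proof.
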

%
%%%%%%%%%%%%%%%%%%%%%%%%%%%%%%%%%%%%%%%%%%%%%%%%%%%%%%%%%%%%%%%%%%
%%%   P   R   O   O   F   %%%%%%%%%%%%%%%%%%%%%%%%%%%%%%%%%%%%%%%%
%%%%%%%%%%%%%%%%%%%%%%%%%%%%%%%%%%%%%%%%%%%%%%%%%%%%%%%%%%%%%%%%%%
%
\begin{proof}
The result follows from Proposition \ref{prop_invariance} and the fact that the median activation function in \eqref{eqn_median_activation} is invariant to permutations. Consider the graph permutation $\bbS' = \bbP^\Tr\bbS\bbP$ and let $\bbx' = \bbP^\Tr \bbx$ be the corresponding permuted graph signal. Applying the median activation function to $\bbx'$, we get
% EQN 
\begin{equation*}% \label{eqn_nl_prop_invariance_pf_10}
%\begin{split}
   \bbz' \  = \ \sum_{k=0}^{K} \omega_k \text{med} (\bbS'^k, \bbx') = 
         \ \sum_{k=0}^{K} \omega_k \text{med} ((\bbP^\Tr)^k \bbS^k \bbP^k, \bbP^\Tr \bbx)
%\end{split}
\end{equation*}
but this expression can be simplified by observing that $\bbP^k = \bbP$, $\bbP^\Tr = \bbP^{-1}$ and that the graph median operators in equations \eqref{eqn_def_median_operator} and \eqref{eqn_def_median_filter} is permutation invariant. This yields
% 
%\begin{equation} \label{eqn_nl_prop_invariance_pf_10}
%   \bbz' \  = \ \sum_{k=0}^{K} \omega_k \bbP^\Tr \text{med}(\bbM^k, \bbP\bbP^\Tr \bbx)
%\end{equation}
%
% 
\begin{equation*} %\label{eqn_nl_prop_invariance_pf_10}
   \bbz' \  = \ \bbP^\Tr \bigg(\sum_{k=0}^{K} \omega_k \text{med} (\bbS^k, \bbx)\bigg)
         \  = \ \bbP^\Tr \bbz
\end{equation*}
a result that will also hold for multi-layered median GNNs as graph permutations cascade from each layer to the next.
\end{proof}

As nonscalar operators that take in values of a graph signal in localized neighborhoods around each node, median activation functions endow GNNs with the ability to extract nonlinear local features in addition to the linear features extracted through graph convolutions. This is something that pointwise activation functions, by construction, cannot do. Proposition \ref{nl_prop_invariance} establishes that the features extracted by median graph filters are invariant to permutations of the graph. As previously noted, this is a desirable property for activation functions in GNNs. Another important property of median filters is that they are differentiable with respect to their parameters. This is instrumental in facilitating their training, as we explain in Section \ref{sec:bp}. 

\begin{remark}\label{rmk_choice_of_shift} To keep the presentation simple, we have used the same shift operator in the definition of linear convolutional filters and median convolutional filters. This is not necessary. In our numerical experiments, we use weighted adjacency matrices, namely $\bbS=\bbA$, for the linear convolution, but we add an identity to the unweighted adjacency, namely $\bbS=\bbI + \bbN = \bbM$, for the median convolution. This choice of shift operator for the median convolution makes the neighborhood sets in \eqref{eqn_def_median_filter} nested. This is not necessarily true if we just choose the adjacency matrix as a shift operator, and it also makes neighborhoods more interpretable. Regardless of interpretation, we have observed that this mix of shift operators reduces test error. Further note that different shift operators associated with the same graph can be used at different layers or even for different features at the same layer. We have not seen advantages associated with this expansion of the representation space. \end{remark}
%Ask Alejandro if we should keep this

\begin{remark}[Alternative median graph filter definitions] The definition of median graph filters in \eqref{eqn_def_median_filter} does not make use of the shift operator weights because these are not used in the definition of the median operator in \eqref{eqn_def_median_operator}. These weights would be easy to incorporate. E.g., we could replace $x_j$ by $s_{ij}x_j$ in the computation of the median in \eqref{eqn_def_median_operator}. Or, more attuned to classical definitions of median filters, the signal value $x_j$ can be repeated within the neighborhood set a number of times proportional to the weight $w_{ij}$. Further note that \eqref{eqn_def_median_filter} performs a linear combination of different neighborhood medians. We could think of replacing this linear combination by a weighted median operation as well. All of these generalizations are possible and would retain the invariance claimed in Proposition \ref{nl_prop_invariance}. Their analysis and evaluation are beyond the scope of this paper. We refer interested readers to \cite{segarra16-globalsip, segarra17-camsap} for a comprehensive analysis of median filters for signals supported on graphs. \end{remark}

%%%%%%%%%%%%%%%%%%%%%%%%%%%%%%%%%%%%%%%%%%%%%%%%%%%
%%% SUBSECTION : Neighborhood-maximum activation %%
%%%%%%%%%%%%%%%%%%%%%%%%%%%%%%%%%%%%%%%%%%%%%%%%%%%
%%
%
\subsection{Max GNNs} \label{sbs:neigh_max}

Consider once again the set $\ccalX=\{x_1,\ldots, x_n\}$, and order the elements of $\ccalX$ so that $x_{[1]} \leq x_{[2]} \leq \ldots \leq x_{[n]}$. The max of the set $\ccalX$ is the last element of the ordered sequence, which we write
\begin{equation} \label{eqn_set_max}
   \text{max}(\ccalX) = x_{[n]}\ .
\end{equation}

If the elements of $\ccalX$ are the values of a graph signal $\bbx \in \mbR^{N}$ on the nodes of a graph $\ccalG$, it makes sense to define a max operator that takes the topology of $\ccalG$ into account. In particular, this will be necessary to define activation functions in terms of max graph filters. We thus define the max operator associated with the graph shift operator $\bbS$ as follows.

%%%%%%%%%%%%%%%%%%%%%%%%%%%%%%%%%%%%%%%%%%%%%%%%%%%%%%%%%%%%%%%%%%%%%%%
%%%   D   E   F   I   N   I   T   I   O   N   %%%%%%%%%%%%%%%%%%%%%%%%%
%%%%%%%%%%%%%%%%%%%%%%%%%%%%%%%%%%%%%%%%%%%%%%%%%%%%%%%%%%%%%%%%%%%%%%%
%
\begin{definition}[Max operator] \label{def_max_operator}
Let $\bbS$ be a graph shift operator and denote by $\ccalN_i$ the neighborhood induced by $\bbS$ at node $i$, with $1 \leq i \leq N$. The output of the max operator $\text{max} (\bbS, \cdot)$ applied to the graph signal $\bbx$ is the graph signal $\bbz  := \text{max} (\bbS, \bbx)$, whose components we write
\begin{equation} \label{eqn_def_max_operator}
   \big[\bbz\big]_i
         = \big[ \text{max} (\bbS, \bbx)\big]_i
         =       \text{max} \big (\{ x_j : j\in\ccalN_i \}\big)\ .
\end{equation} \end{definition}

%%%%%%%%%%%%%%%%%%%%%%%%%%%%%%%%%%%%%%%%%%%%%%%%%%%%%%%%%%%%%%%%%%%%%%%
%%%   M   A   I   N       M   A   T   T   E   R   %%%%%%%%%%%%%%%%%%%%%
%%%%%%%%%%%%%%%%%%%%%%%%%%%%%%%%%%%%%%%%%%%%%%%%%%%%%%%%%%%%%%%%%%%%%%%
%
The max operator in Definition \ref{def_median_operator} replaces the value of $x_i$ at each node by the maximum of the values of $\bbx$ in the corresponding neighborhood $\ccalN_i$. As such, $\text{max} (\bbS, \cdot)$ acts as a nonlinear diffusion operator, akin to the median operator introduced in Definition \ref{def_median_operator}.

With the definition of the max operator at hand, we can now define multiresolution max graph filters, which are linear combinations of max operators acting on neighborhoods of different resolutions. We write them as follows. 

%%%%%%%%%%%%%%%%%%%%%%%%%%%%%%%%%%%%%%%%%%%%%%%%%%%%%%%%%%%%%%%%%%%%%%%
%%%   D   E   F   I   N   I   T   I   O   N   %%%%%%%%%%%%%%%%%%%%%%%%%
%%%%%%%%%%%%%%%%%%%%%%%%%%%%%%%%%%%%%%%%%%%%%%%%%%%%%%%%%%%%%%%%%%%%%%%
%
\begin{definition}[Multiresolution max graph filter] \label{def_max_filter}
Given a graph shift operator $\bbS$ and a vector of $K+1$ filter coefficients $\bbw=[w_0, \ldots, w_K]^\Tr$, the output of the max graph filter with coefficients $\bbw$ applied to the signal $\bbx$ is the signal $\bbz$,
\begin{equation} \label{eqn_def_max_filter}
   \bbz  := \sum_{k=0}^{K} w_k\, \text{max} (\bbS^k, \bbx)\ .
\end{equation} \end{definition}

%%%%%%%%%%%%%%%%%%%%%%%%%%%%%%%%%%%%%%%%%%%%%%%%%%%%%%%%%%%%%%%%%%%%%%%
%%%   M   A   I   N       M   A   T   T   E   R   %%%%%%%%%%%%%%%%%%%%%
%%%%%%%%%%%%%%%%%%%%%%%%%%%%%%%%%%%%%%%%%%%%%%%%%%%%%%%%%%%%%%%%%%%%%%%
%
\blue{As was the case for the median graph filter in Definition \ref{def_median_filter}, in the max graph filter in Definition \ref{def_max_filter} the summands $\text{max} (\bbS^k, \bbx)$ combine information
in the same multiresolution fashion as the polynomial terms of LSI-GFs. To see this, notice that $\text{max} (\bbS^k, \bbx)$ takes as arguments the values of the graph signal on
%from 
the same sets of $k$-hop neighbors in which the LSI-GFs from \eqref{eqn:lsigf} calculate weighted averages of the graph signal.}
%, because the shift operator $\bbS^k$ is associated with a graph where nodes are connected to their $k$-hop neighbors.} 
However, while in LSI-GFs the signal components are averaged, in the graph max filter they are mapped to a single value by the nonlinear max operation. We can think of \eqref{eqn_def_max_filter} as another type of nonlinear convolutional filter, this one constructed from max operations. 

The max graph filter from Definition \ref{def_max_filter} is used to define a second nonlinear activation function ---the max activation function--- by considering a collection of max filter coefficients $\bbw_{\ell}^f$ and replacing the pointwise activation functions in \eqref{eqn:pointwise_nonlinearity} with max filters,
\begin{equation} \label{eqn_max_activation}
   \bbx_{\ell}^{f} 
       = \sum_{k=0}^{K}w_{\ell k}^f\,\text{max}(\bbS^k,\bbu_{\ell}^f)\ .
\end{equation}

From \eqref{eqn_max_activation}, we can then define a GNN with max activation functions in which the $\ell$th layer is a composition of \eqref{eqn:gnn_linear_summary} and \eqref{eqn_max_activation}. We call it the max GNN.

Max GNN layers compose two types of convolutional layers, linear convolutional layers [cf. \eqref{eqn:gnn_linear_summary}] and nonlinear convolutional layers based on max filters \eqref{eqn_max_activation}. The output of a max GNN is determined by the linear filter coefficients $\ccalH$ and the max graph filter coefficients $\ccalW = \{\bbw_{\ell}^{f}\}_{\ell,f}$. These parameterizations allow us to write the max GNN as the map
\begin{equation} \label{eqn_max_GNN_transform}
   \Phi(\bbx; \bbS, \ccalH, \ccalW) 
       \ = \ \bbx_L 
       \ = \ \hby(\bbx)
\end{equation}
where $\bbx_L$ is obtained by applying $L$ max GNN layers to $\bbx$. 
%Each layer performs linear convolutions \eqref{eqn:gnn_linear_summary} and a max graph filter activation \eqref{eqn_max_activation}.
\
Like their median filter counterparts, max graph filters are constructed in a way that preserves the permutation invariance of GNNs. This is formally stated and proved in Proposition \ref{max_nl_prop_invariance}.

%%%%%%%%%%%%%%%%%%%%%%%%%%%%%%%%%%%%%%%%%%%%%%%%%%%%%%%%%%%%%%%%%%
%%%   P   R   O   P   O   S   I  T   I   O   N   %%%%%%%%%%%%%%%%%
%%%%%%%%%%%%%%%%%%%%%%%%%%%%%%%%%%%%%%%%%%%%%%%%%%%%%%%%%%%%%%%%%%
%
\begin{proposition}\label{max_nl_prop_invariance}
Consider a graph signal $\bbx$ supported on a graph with shift operator $\bbS$. Let $\Phi(\bbx; \bbS, \ccalH, \ccalW)$ be the output of a max GNN whose linear filter coefficients are $\ccalH$ and whose nonlinearity coefficients are $\ccalW$ [cf. \eqref{eqn:gnn_linear_summary}-\eqref{eqn:GNN_transform}, \eqref{eqn_max_activation}]. If $\bbP$ is a permutation matrix, then
\begin{equation} \label{eqn_max_nl_prop_invariance}
   \Phi\Big(\bbP^\Tr\bbx; \bbP^\Tr\bbS\bbP, \ccalH, \ccalW \Big) 
      = \bbP^\Tr\Phi\Big(\bbx; \bbS, \ccalH, \ccalW \Big)
\end{equation}
i.e., the output of the max GNN is invariant to permutations.
\end{proposition}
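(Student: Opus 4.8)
The plan is to follow the same template as the proof of Proposition \ref{nl_prop_invariance}, since the only structural difference between a median GNN and a max GNN is that the median activation \eqref{eqn_median_activation} is replaced by the max activation \eqref{eqn_max_activation}. By Proposition \ref{prop_invariance} the linear convolutional part \eqref{eqn:gnn_linear_summary} of each layer is already permutation equivariant, so the whole task reduces to showing that the max activation commutes with relabelings; then the per-layer permutations cascade unchanged through all $L$ layers and \eqref{eqn_max_nl_prop_invariance} follows.

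First I would establish that the elementary max operator of Definition \ref{def_max_operator} is permutation equivariant, i.e. $\text{max}(\bbP^\Tr\bbS\bbP,\bbP^\Tr\bbx) = \bbP^\Tr\,\text{max}(\bbS,\bbx)$ for every permutation matrix $\bbP$. The point is that passing from $\bbS$ to $\bbS' = \bbP^\Tr\bbS\bbP$ sends the neighborhood $\ccalN_i$ induced by $\bbS$ to the neighborhood induced by $\bbS'$ of the node that $\bbP$ maps $i$ to, while the signal entries are permuted consistently by $\bbP^\Tr$; since $\text{max}$ in \eqref{eqn_set_max} is the maximum over an \emph{unordered} finite set, the maximum over the relabeled neighborhood of the relabeled signal equals the maximum over the original neighborhood of the original signal, now sitting at the relabeled node. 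This is exactly the statement invoked for $\text{med}$ in the median proof.

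Next I would lift this to the multiresolution max graph filter \eqref{eqn_def_max_filter}. Writing $\bbS' = \bbP^\Tr\bbS\bbP$ and $\bbx' = \bbP^\Tr\bbx$, I use $(\bbS')^k = (\bbP^\Tr)^k\bbS^k\bbP^k$ and then $\bbP^k = \bbP$ together with $\bbP^\Tr = \bbP^{-1}$ to get $(\bbS')^k = \bbP^\Tr\bbS^k\bbP$. Applying the equivariance of the elementary max operator to each summand and factoring the common $\bbP^\Tr$ out of the linear combination yields $\sum_{k=0}^{K} w_k\,\text{max}((\bbS')^k,\bbx') = \bbP^\Tr\sum_{k=0}^{K} w_k\,\text{max}(\bbS^k,\bbx)$, so the max activation in \eqref{eqn_max_activation} commutes with the permutation. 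Composing this with the linear-layer equivariance from Proposition \ref{prop_invariance} shows each max GNN layer is permutation equivariant, and cascading through the $L$ layers gives \eqref{eqn_max_nl_prop_invariance}.

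I do not expect a genuine obstacle: the argument is essentially identical to the median case, differing only in replacing $\text{med}$ by $\text{max}$. The one step that deserves a careful sentence rather than a one-line appeal is the first one — spelling out why conjugating $\bbS$ by $\bbP$ relabels neighborhoods the way it does, and why it is precisely the order-insensitivity of $\text{max}$ (shared with $\text{med}$) that makes equivariance hold; everything afterward is mechanical bookkeeping with $\bbP^k=\bbP$ and orthogonality.
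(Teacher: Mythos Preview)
Your proposal is correct and follows essentially the same approach as the paper's own proof: the paper likewise invokes Proposition~\ref{prop_invariance} for the linear part, states that the argument mimics the median case in Proposition~\ref{nl_prop_invariance} with $\text{max}$ in place of $\text{med}$, and concludes $\bbz' = \sum_{k=0}^{K} \omega_k\,\text{max}(\bbS'^k,\bbx') = \bbP^\Tr\bbz$ before noting the result cascades through layers. Your write-up is in fact somewhat more explicit than the paper's about why the elementary max operator is permutation equivariant, which is a reasonable addition.
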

%
%%%%%%%%%%%%%%%%%%%%%%%%%%%%%%%%%%%%%%%%%%%%%%%%%%%%%%%%%%%%%%%%%%
%%%   P   R   O   O   F   %%%%%%%%%%%%%%%%%%%%%%%%%%%%%%%%%%%%%%%%
%%%%%%%%%%%%%%%%%%%%%%%%%%%%%%%%%%%%%%%%%%%%%%%%%%%%%%%%%%%%%%%%%%
%
\begin{proof}
This follows from Proposition \ref{prop_invariance} and from the invariance to permutations of the max activation function in \eqref{eqn_max_activation}. 
Let $\bbS' = \bbP^\Tr\bbS\bbP$ be a graph permutation and let $\bbx' = \bbP^\Tr \bbx$ be the corresponding permuted graph signal.
\blue{This proof mimics the proof of Proposition \ref{nl_prop_invariance}, where we now use the fact that the maximum of a graph signal within its node neighborhoods is permutation invariant to show
\begin{equation*} %\label{eqn_max_nl_prop_invariance_pf_10}
%\begin{split}
   \bbz' \  = \ \sum_{k=0}^{K} \omega_k \text{max} (\bbS'^k, \bbx')
         \    \  = \ \bbP^\Tr \bbz\ .
%\end{split}
\end{equation*}}

%Let $\bbS' = \bbP^\Tr\bbS\bbP$ be a graph permutation and let $\bbx' = \bbP^\Tr \bbx$ be the corresponding permuted graph signal. Applying the max activation function to $\bbx'$ yields
% EQN 
%\begin{equation*}% \label{eqn_max_nl_prop_invariance_pf_10}
%\begin{split}
%   \bbz' \  = \ \sum_{k=0}^{K} \omega_k \text{max} (\bbS'^k, \bbx') =
%         \    \ \sum_{k=0}^{K} \omega_k \text{max} ((\bbP^\Tr)^k \bbS^k \bbP^k, \bbP^\Tr \bbx)\ .
%\end{split}
%\end{equation*}
%Using the fact that $\bbP^k = \bbP$, $\bbP^\Tr = \bbP^{-1}$ and that the maximum of a graph signal in its node neighborhoods is permutation invariant, we get
% 
%\begin{equation} \label{eqn_max_nl_prop_invariance_pf_10}
%   \bbz' \  = \ \sum_{k=0}^{K} \omega_k \bbP^\Tr \text{max}(\bbM^k, \bbP\bbP^\Tr \bbx)
%\end{equation}
%
% 
%\begin{equation*} %\label{eqn_max_nl_prop_invariance_pf_10}
%   \bbz' \  = \ \bbP^\Tr \bigg(\sum_{k=0}^{K} \omega_k \text{max} (\bbS^k, \bbx)\bigg)
%         \  = \ \bbP^\Tr \bbz\ .
%\end{equation*}
%
As was the case for median GNNs, the result above will also hold for max GNNs with multiple layers.
\end{proof}

Like their median counterparts, and unlike pointwise activation functions, max activations give GNNs the ability to extract nonlinear local features that cannot be extracted by application of linear graph convolutions alone. As seen in Proposition \ref{nl_prop_invariance}, the features extracted by max graph filters are also invariant to permutations of the graph. Finally, max filters are differentiable with respect to their parameters, which allows max activation functions to be trained (cf. Section \ref{sec:bp}).

\begin{remark}[Pooling] \label{rmk:pooling}
In deep neural network architectures, the representation dimension increases with the depth of the network and the number of features. To keep the representation dimension under control, many architectures implement pooling as an intermediate step between the convolutional filter banks and the nonlinearity.
Pooling is a summarizing two-step operation that reduces dimensionality by first computing local summaries of the signal within the graph equivalent of a window and then subsampling it. The summarizing operation can be either linear or nonlinear; the most common examples are average pooling and max-pooling. Because the graph windows on which it operates can be rescaled, pooling is also used for feature extraction at multiple resolutions. Permutation invariance is preserved if the subsampling operation is based on topological features of the graph such as the node degrees \cite{gama18-gnnarchit}. We also note that the nonlinear activation function and the pooling summarizing operation can be composed into one single localized activation function that precedes subsampling. Therefore, all design strategies of localized activation functions described herein might also be useful in designing summarizing functions in pooling operations. In the case of GNNs, pooling strategies have been proposed in \blue{\cite{gama18-gnnarchit,defferrard17-cnngraphs, bruna14-deepspectralnetworks, henaff2015deep}}.
\end{remark}

\section{Localized Activation Function Training} \label{sec:bp}

%!TEX root = nonlinear.tex

%%%%%%%%%%%%%%%%%%%%%%%%%%%
%%% SECTION : BP       %%%%
%%%%%%%%%%%%%%%%%%%%%%%%%%%

In contrast with pointwise nonlinearities, localized activation functions act on multiple nodal components at a time. As a result, the computations that the NN carries out both during the forward and backward passes of data need to be updated. In this section, we look at how localized activation functions affect the gradient updates in backpropagation training (Section \ref{sbs:backpropagation}) and discuss the additional computational complexity incurred by these operators (Section \ref{sbs:complexity}).

%%%%%%%%%%%%%%%%%%%%%%%%%%%%%%%%%%%%%%%%%%%%%%%%%%%%%%%%%%%%%%%%%%
%%%   F   I   G   U   R   E   %%%%%%%%%%%%%%%%%%%%%%%%%%%%%%%%%%%%
%%%%%%%%%%%%%%%%%%%%%%%%%%%%%%%%%%%%%%%%%%%%%%%%%%%%%%%%%%%%%%%%%%
%

\begin{figure*}[t]
	\centering
	\begin{subfigure}{.32\textwidth}
		\centering
		\includegraphics[width=\textwidth]{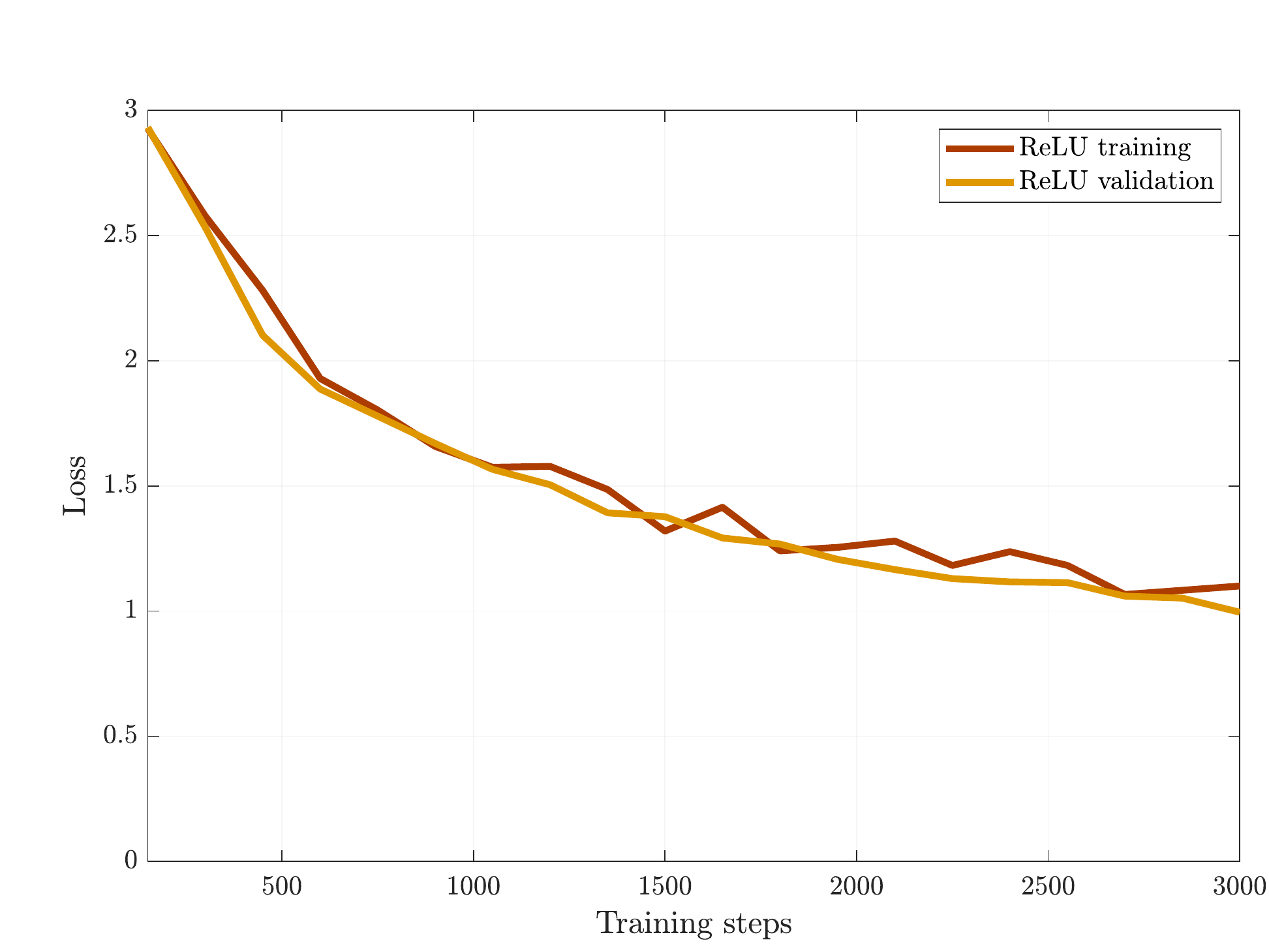}
		\caption{ReLU}
		\label{relu_val_training_loss}
	\end{subfigure}
	\hfill
	\begin{subfigure}{.32\textwidth}
		\centering
		\includegraphics[width=\textwidth]{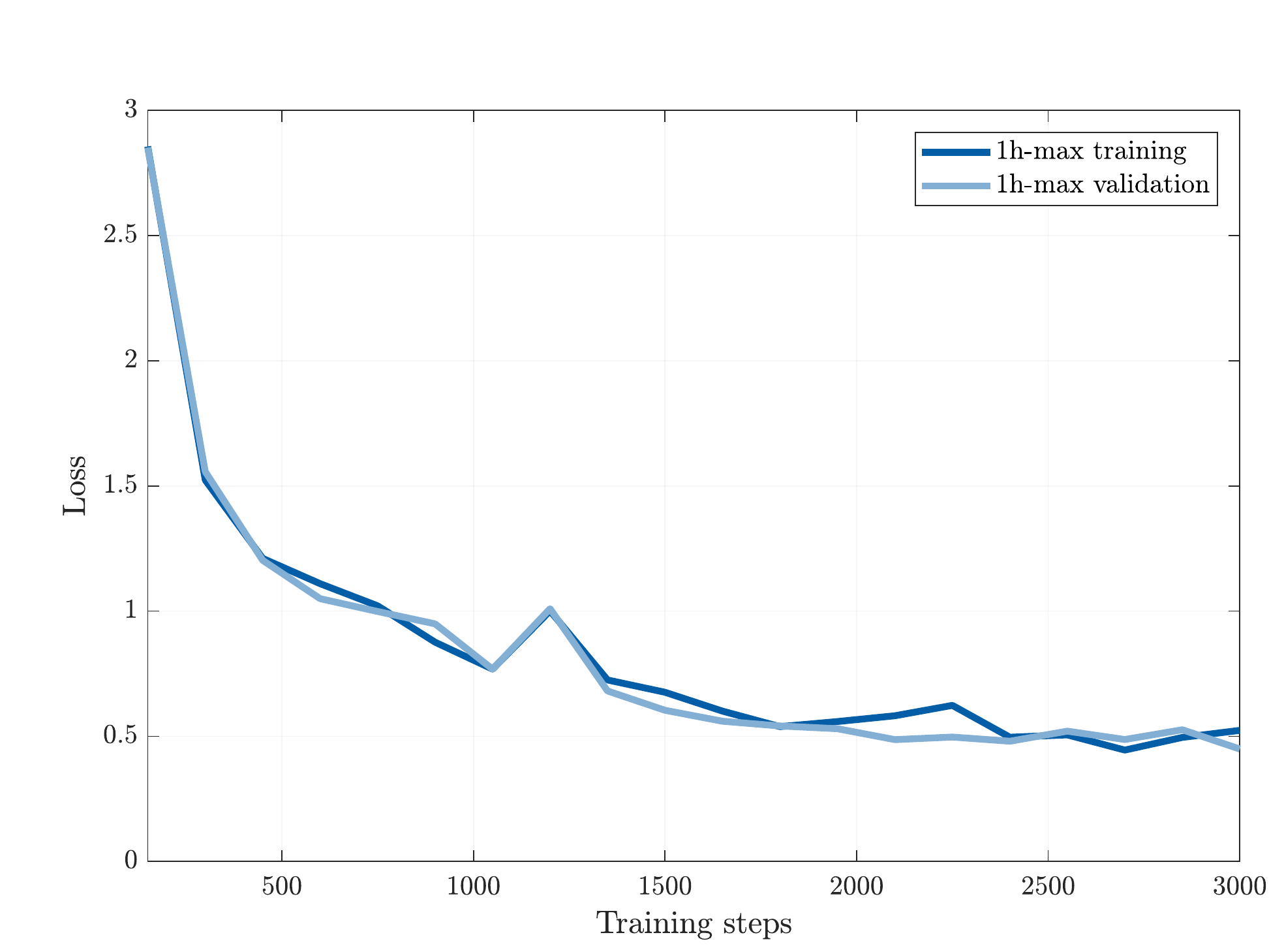} 
		\caption{1h-max}
		\label{1hmax_val_training_loss}
	\end{subfigure}
	\hfill
	\begin{subfigure}{.32\textwidth}
		\centering
		\includegraphics[width=\textwidth]{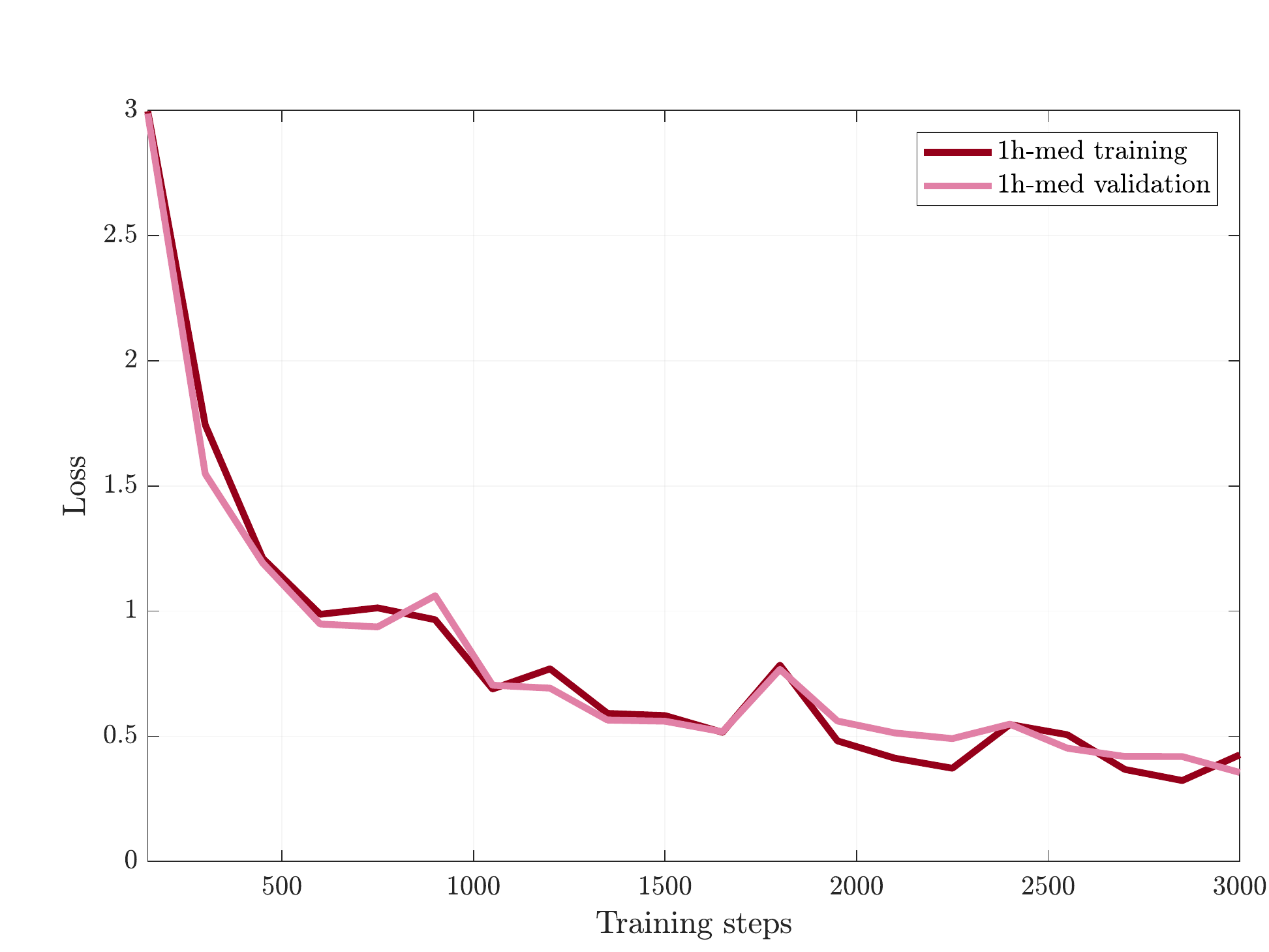} 
		\caption{1h-median}
		\label{1hmed_val_training_loss}
	\end{subfigure}
	\caption{Training and validation losses for ReLu \subref{relu_val_training_loss}, 1-hop max \subref{1hmax_val_training_loss} and 1-hop median activations \subref{1hmed_val_training_loss} \blue{in a source localization experiment with $20$ possible source nodes (classes) on an  Erd\H{o}s-R\'{e}nyi graph with $100$ nodes and edge probability $p=0.4$. The losses depicted correspond to the cross-entropy loss over $30$ training epochs. }}
	\label{fig:lossandacc}
\end{figure*}	

%%%%%%%%%%%%%%%%%%%%%%%%%%%%%%%%%%%%%%%%%%%%%%%%%%%%%%%%%%%%%%%

\subsection{Backpropagation} \label{sbs:backpropagation}

For ease of exposition, let us consider a single-layer GNN with a graph convolutional layer given by [cf. \eqref{eqn:gnn_linear_summary}]
% eqn:singleConvLayer
\begin{equation} \label{eqn:singleConvLayer}
    \bbu^{f} (\bbx)
        = \sum_{g=1}^{G} \left( \bbh^{fg} \ast_{\bbS} \bbx^{g} \right)
        = \sum_{g=1}^{G} \sum_{k=0}^{K-1} h_{k}^{fg} \bbS^{k} \bbx^{g}
\end{equation}
where the input has $G$ features $\bbx^{g}$, $g=1,\ldots,G$ and the output has $F$ features $\bbu^{f}$, $f=1,\ldots,F$. The set $\ccalH = \{h_{k}^{fg}\}_{f,g,k}$ groups the corresponding trainable parameters. In median and max GNNs, the linear operation is followed by a local activation function [cf. \eqref{eqn_median_activation}, \eqref{eqn_max_activation}]
% eqn:singleLocalActivation
\begin{equation} \label{eqn:singleLocalActivation}
    \bbz^{f}(\bbx) = \sum_{k'=0}^{K'} w_{k'}^{f} \bbsigma\left(\bbS^{k'}, \bbu^{f}(\bbx) \right)
\end{equation}
where $\bbsigma(\bbS^{k'},\cdot): \reals^{N} \to \reals^{N}$ represents the chosen local activation function ---either the median \eqref{eqn_median_activation} or max filters \eqref{eqn_max_activation}--- with trainable parameters $\ccalW = \{w_{k'}^{f}\}_{f,k'}$. The estimate $\hby(\bbx) = [\hby^{1}(\bbx)^{\Tr},\ldots,\hby^{F}(\bbx)^{\Tr}]^{\Tr}$ is then the output of this layer, with $\hby^{f}(\bbx) = \bbz^{f}(\bbx)$ for each $f=1,\ldots,F$. Given a training set $\ccalT = \{(\bbx_{m}, \bby_{m})\}$, we choose parameters $\ccalH$ and $\ccalW$ that minimize some total loss function $\bbarJ$ over $\ccalT$
% eqn:lossFunctionTraining
\begin{equation} \label{eqn:lossFunctionTraining}
    \bbarJ\big(\bby,\hby(\bbx) \big) = \sum_{\ccalT} J\big(\bby_{m}, \hby(\bbx_{m})\big)\ .
\end{equation}

In conventional GNNs, this is done by using the backpropagation algorithm. In the following proposition, we show that local activation functions are equally amenable to training via backpropagation, and give closed form expressions for the trainable parameters' gradient updates.

%%%%%%%%%%%%%%%%%%%%%%%%%%%%%%%%%%%%%%%%%%%%%%%%%%%%%%%%%%%%%%%%%%
%%%   P   R   O   P   O   S   I  T   I   O   N   %%%%%%%%%%%%%%%%%
%%%%%%%%%%%%%%%%%%%%%%%%%%%%%%%%%%%%%%%%%%%%%%%%%%%%%%%%%%%%%%%%%%
%
\begin{proposition}\label{prop:backprop_local}
Let $\ccalT = \{(\bbx_{m}, \bby_{m})\}$ be a training set. Let $\bbarJ(\bby, \hby(\bbx))$ be the loss function in \eqref{eqn:lossFunctionTraining}, where $\hby(\bbx) = [\hby^{1}(\bbx)^{\Tr},\ldots,\hby^{F}(\bbx)^{\Tr}]^{\Tr}$ is the output estimate of the single-layer GNN with convolutional layer \eqref{eqn:singleConvLayer} followed by the local activation layer \eqref{eqn:singleLocalActivation}, with $\hby^{f}(\bbx) = \bbz^{f}(\bbx)$ for each $f=1,\ldots,F$. At each round of the backpropagation algorithm, the learnable parameters $\ccalH = \{h_{k}^{fg}\}_{f,g,k}$ and $\ccalW = \{w_{k'}^{f}\}_{f,k'}$ are updated by calculating the derivatives
% align
\begin{align}
    \frac{\partial \ \bbarJ}{\partial h_{k}^{fg}} 
        & = \sum_{\ccalT} \frac{\partial \ J}{\partial \hby^{f}} 
            \sum_{k'=0}^{K'}  
                w_{k'}^{f} \bbP_{k'}
                \ \bbS^{k} \bbx_{m}^{g} 
          \label{eqn:partialH} \\
    \frac{\partial \ \bbarJ}{\partial w_{k'}^{f}}
        & = \sum_{\ccalT} \frac{\partial \ J}{\partial \hby^{f}} 
            \bbsigma \left(\bbS^{k'}, \bbu^{f}(\bbx_{m}) \right)
          \label{eqn:partialW}
\end{align}
where $\partial J/\partial \hby^{f} = [\partial J/\partial [\hby^{f}]_{1},\ldots, \partial J /\partial[\hby^{f}]_{N}] \in \reals^{1 \times N}$ is the gradient of the loss function $J$, and
with $\bbP_{k'} \in \reals^{N \times N}$ a binary matrix such that $[\bbP_{k'}]_{ij} = 1$ if node $j$ realizes the local activation $\bbsigma(\bbS^{k'}, \cdot)$ for node $i$, and zeros for every other $j$.
\end{proposition}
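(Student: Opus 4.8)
The plan is to compute the gradients by a direct application of the chain rule to the composition $\bbx \mapsto \bbu^f(\bbx) \mapsto \bbz^f(\bbx) = \hby^f(\bbx)$, treating the two parameter families $\ccalH$ and $\ccalW$ separately. The parameter $w_{k'}^f$ enters $\hby^f$ only as the linear coefficient multiplying $\bbsigma(\bbS^{k'},\bbu^f(\bbx))$, so $\partial \hby^f/\partial w_{k'}^f = \bbsigma(\bbS^{k'},\bbu^f(\bbx))$ and \eqref{eqn:partialW} follows immediately from $\partial\bbarJ/\partial w_{k'}^f = \sum_\ccalT (\partial J/\partial\hby^f)(\partial\hby^f/\partial w_{k'}^f)$. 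The harder part is \eqref{eqn:partialH}, for which I would write $\partial\bbarJ/\partial h_k^{fg} = \sum_\ccalT (\partial J/\partial\hby^f)(\partial\hby^f/\partial h_k^{fg})$ and then expand $\partial\hby^f/\partial h_k^{fg}$ using the chain rule through $\bbu^f$: since $\hby^f = \sum_{k'} w_{k'}^f\,\bbsigma(\bbS^{k'},\bbu^f)$, we get $\partial\hby^f/\partial h_k^{fg} = \sum_{k'} w_{k'}^f\, (\partial\bbsigma(\bbS^{k'},\bbu^f)/\partial\bbu^f)\,(\partial\bbu^f/\partial h_k^{fg})$, and from \eqref{eqn:singleConvLayer} we have $\partial\bbu^f/\partial h_k^{fg} = \bbS^k\bbx_m^g$.

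The key technical step is therefore to identify the Jacobian $\partial\bbsigma(\bbS^{k'},\bbu^f)/\partial\bbu^f$ with the binary matrix $\bbP_{k'}$. For both the median and the max operator, the $i$th output component is exactly one of the input components, namely $[\bbsigma(\bbS^{k'},\bbu^f)]_i = [\bbu^f]_{j^\star(i)}$ where $j^\star(i)\in\ccalN_i$ is the node in the $k'$-hop neighborhood of $i$ that realizes the median (resp.\ the max). Hence, at points where this selection is locally constant (i.e.\ almost everywhere, away from ties), the operator is locally a \emph{coordinate projection}: its Jacobian has a single $1$ in row $i$ at column $j^\star(i)$ and zeros elsewhere — which is precisely the matrix $\bbP_{k'}$ defined in the statement. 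I would state this as the crux observation, noting that it is what makes local activation functions differentiable with respect to $\bbu^f$ (and, composed with the linearity in $\bbu^f$, with respect to $\ccalH$). Substituting $\partial\bbsigma(\bbS^{k'},\bbu^f)/\partial\bbu^f = \bbP_{k'}$ into the chain-rule expansion gives $\partial\hby^f/\partial h_k^{fg} = \sum_{k'=0}^{K'} w_{k'}^f\,\bbP_{k'}\,\bbS^k\bbx_m^g$, and multiplying on the left by $\partial J/\partial\hby^f$ and summing over $\ccalT$ yields \eqref{eqn:partialH}.

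The main obstacle — really the only nontrivial point — is the non-differentiability of $\max$ and $\mathrm{med}$ at ties, i.e.\ the fact that $j^\star(i)$ is only piecewise constant. I would handle this the way it is handled for ReLU and max-pooling in standard backpropagation: observe that the tie set has measure zero, so the expressions \eqref{eqn:partialH}–\eqref{eqn:partialW} are valid as (sub)gradients holding off a measure-zero set, and any consistent tie-breaking rule (e.g.\ smallest index) makes $\bbP_{k'}$ well-defined everywhere; the choice does not affect the update in practice. A secondary bookkeeping point is that for $k'=0$ the operator $\bbsigma(\bbS^0,\cdot) = \bbsigma(\bbI,\cdot)$ is the identity and $\bbP_0 = \bbI$, which is consistent with the general formula, so no separate case is needed. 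With these remarks in place the proof is just the chain-rule assembly above.
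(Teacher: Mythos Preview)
Your proposal is correct and follows essentially the same route as the paper's proof: chain rule through $\bbu^f$, identification of the Jacobian $\partial\bbsigma(\bbS^{k'},\bbu^f)/\partial\bbu^f$ with the binary selection matrix $\bbP_{k'}$ (since median and max are coordinate projections), and linearity of $\bbu^f$ in $h_k^{fg}$ and of $\hby^f$ in $w_{k'}^f$. Your discussion of ties and of the $k'=0$ case is slightly more explicit than the paper's, which simply writes ``(sub-)derivative'' in passing, but the argument is otherwise the same.
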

%
%%%%%%%%%%%%%%%%%%%%%%%%%%%%%%%%%%%%%%%%%%%%%%%%%%%%%%%%%%%%%%%%%%
%%%   P   R   O   O   F   %%%%%%%%%%%%%%%%%%%%%%%%%%%%%%%%%%%%%%%%
%%%%%%%%%%%%%%%%%%%%%%%%%%%%%%%%%%%%%%%%%%%%%%%%%%%%%%%%%%%%%%%%%%
%
\begin{proof}
Let us start by proving \eqref{eqn:partialH}. We take the derivative of the total loss function $\bbarJ$ with respect to the scalar coefficient $h_{k}^{fg}$. Towards this end, denote the gradient of $J$ with respect to all the features by $\partial J/\partial \hby = [\partial J/\partial \hby^{1},\ldots,\partial J/\partial \hby^{F}] \in \reals^{1 \times NF}$ where each $\partial J / \partial \hby^{f} \in \reals^{1 \times N}$ is the gradient with respect to feature $f$. Denote the gradient of the output of the GNN with respect to the specific parameter $h_{k}^{fg}$ by $\partial \hby/\partial h_{k}^{fg} = [(\partial \hby^{1}/\partial h_{k}^{fg})^{\Tr},\ldots,(\partial \hby^{F}/\partial h_{k}^{fg})^{\Tr}]^{\Tr} \in \reals^{NF \times 1}$, where $\partial \hby^{f'}/\partial h_{k}^{fg} \in \reals^{N \times 1}$ for each $f'=1,\ldots,F$. We note that $\partial \hby^{f'}/\partial h_{k}^{fg} = \bbzero$ whenever $f' \neq f$. Applying the chain rule once, we get
% eqn:dLdh
\begin{equation} \label{eqn:dLdh}
    \frac{\partial \ \bbarJ}{\partial h_{k}^{fg}} 
        = \sum_{\ccalT}
            \frac{\partial \ J}{\partial \hby} 
            \left. \frac{\partial \ \hby }{\partial h_{k}^{fg}}\ \right|_{\bbx_{m}}
        = \sum_{\ccalT}
            \frac{\partial \ J}{\partial \hby^{f}} 
            \left. \frac{\partial \ \hby^{f} }{\partial h_{k}^{fg}}\ \right|_{\bbx_{m}}
\end{equation}
since $\partial \hby^{f'}/\partial h_{k}^{fg} = \bbzero$ whenever $f' \neq f$. Now, we focus on $\partial \hby^{f}/\partial h_{k}^{fg}$, and using \eqref{eqn:singleLocalActivation} and the chain rule once again, we get
% eqn:dydh
\begin{equation} \label{eqn:dydh}
\frac{\partial \ \hby^{f}}{\partial h_{k}^{fg}}
= \sum_{k'=0}^{K'} w_{k'}^{f} 
\frac{\partial \ \bbsigma(\bbS^{k'},\cdot)}{\partial \bbu^{f}} 
\frac{\partial \ \bbu^{f}}{\partial h_{k}^{fg}}
\end{equation}
where $\partial \bbsigma(\bbS^{k'},\cdot)/\partial \bbu^{f} \in \reals^{N \times N}$ is the Jacobian, with $[\partial \bbsigma/\partial \bbu^{f}]_{ij} = \partial [\bbsigma]_{i}/\partial [\bbu^{f}]_{j}$ being the corresponding (sub-)derivative. The local activation function $\bbsigma(\bbS^{k'},\bbu^{f})$ outputs a graph signal where each element is a nonlinear combination of the $k'$-hop neighbors of each node. In both cases (median and max), the function application is actually equivalent to selecting a value from the ones at the neighboring nodes (i.e. the output of $[\bbsigma(\bbS^{k'},\bbu^{f})]_{i}$ is the value of $[\bbu^{f}]_{j}$ for some $j \in \ccalN_{i}^{k'}$). Therefore, the (sub-)derivative of $[\bbsigma]_{i}$ with respect to the input graph signal $\bbu^{f}$ is a vector with a $1$ in the position corresponding to the nodes $j \in \ccalN_{i}^{k'}$ that generate the output of the function, and zeros elsewhere. This implies that $\partial \bbsigma(\bbS^{k'},\cdot)/\partial \bbu^{f} = \bbP_{k'} \in \reals^{N \times N}$ is a binary matrix with $[\bbP_{k'}]_{ij} = 1$ if node $j$ generates the output of the function at node $i$ and $0$ for all other $j$. Finally, since per \eqref{eqn:singleConvLayer} $\bbu^{f}$ is a linear function of $h_{k}^{fg}$,
% eqn:dudh
\begin{equation} \label{eqn:dudh}
\frac{\partial \ \bbu^{f}}{\partial h_{k}^{fg}} = \bbS^{k} \bbx^{g}\ .
\end{equation}
Using \eqref{eqn:dudh} together with the fact that $\partial \bbsigma(\bbS^{k'},\cdot)/\partial \bbu^{f} = \bbP_{k'}$ back in \eqref{eqn:dydh}, and, consequently, back in \eqref{eqn:dLdh}, completes the proof of \eqref{eqn:partialH}.

%%%%%%%%%%%%%%%%%%%%%%%%%%%%%%%%%%%%%%%%%%%%%%%%%%%%%%%%%%%%%%%%%%
%%%   F   I   G   U   R   E   %%%%%%%%%%%%%%%%%%%%%%%%%%%%%%%%%%%%
%%%%%%%%%%%%%%%%%%%%%%%%%%%%%%%%%%%%%%%%%%%%%%%%%%%%%%%%%%%%%%%%%%
%

\begin{figure*}[t]
	\centering
	\begin{subfigure}{.4\textwidth}
		\centering
		\includegraphics[width=\textwidth]{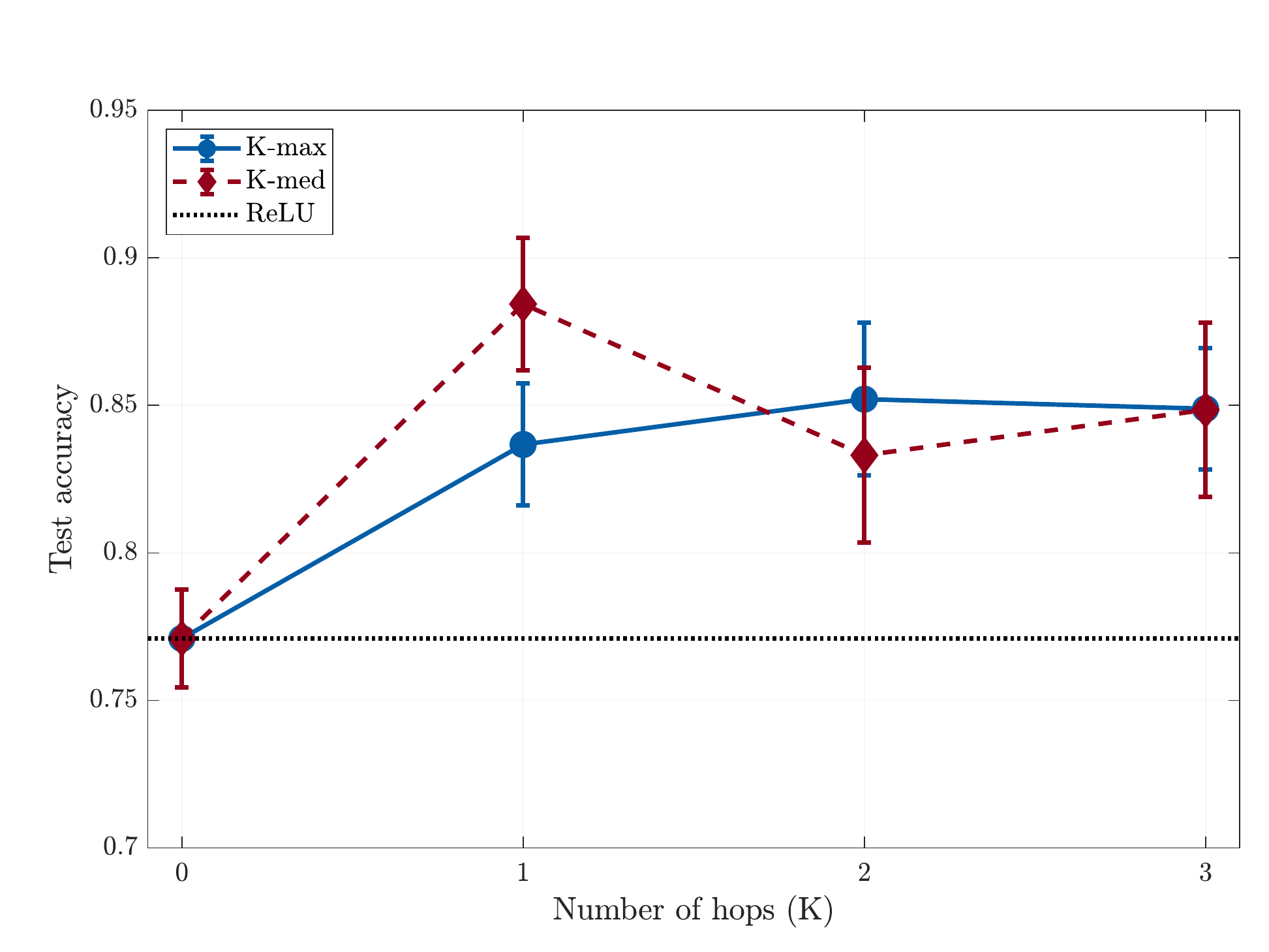}
		\caption{}
		\label{ER}
	\end{subfigure}
	\hspace{5em}
	\begin{subfigure}{.4\textwidth}
		\centering
		\includegraphics[width=\textwidth]{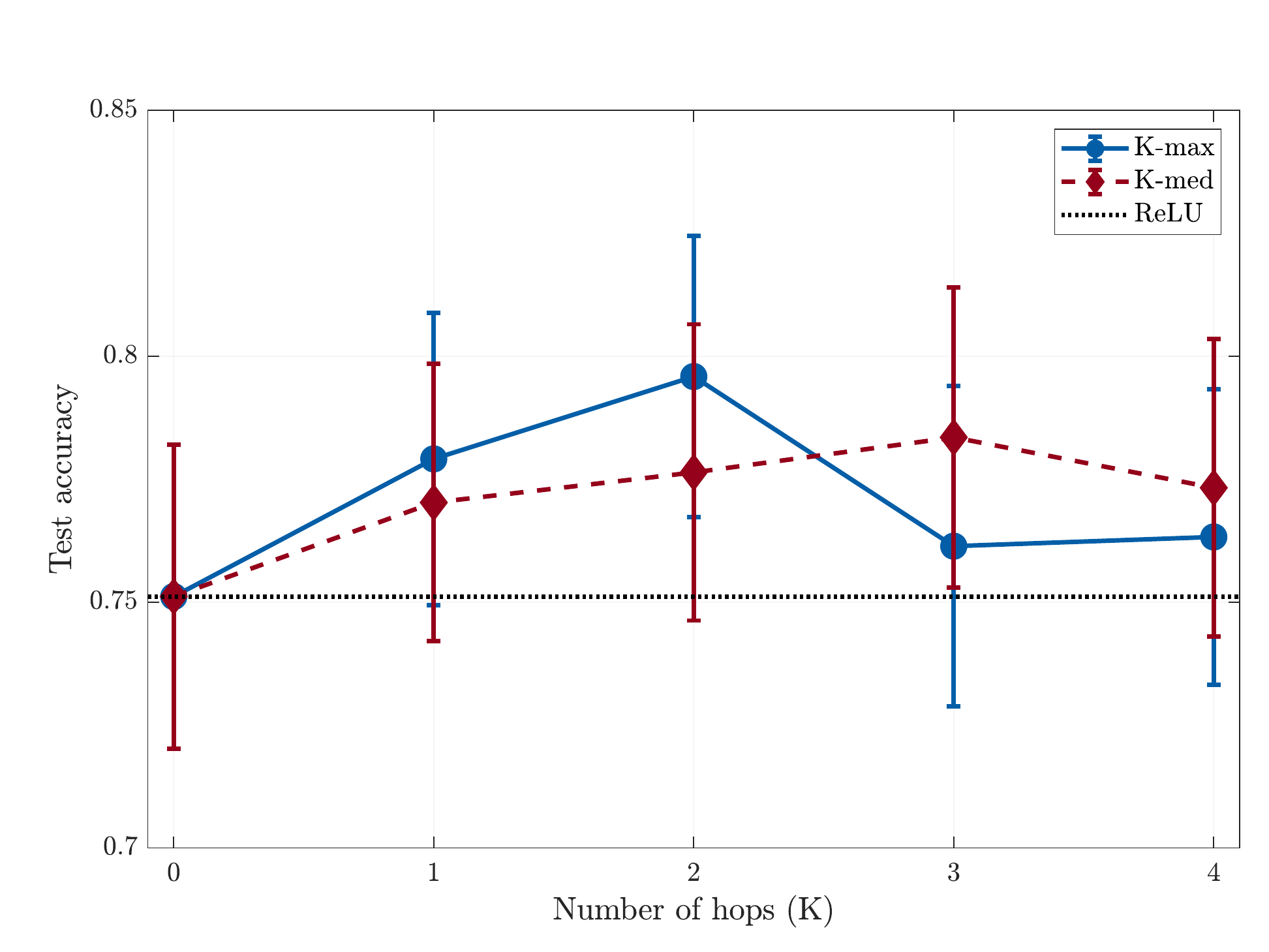} 
		\caption{}
		\label{geom}	
	\end{subfigure}
	\caption{Source localization test accuracy in GNN architectures with localized activation functions, by hop. \subref{ER} Average accuracy and standard error on 40 ER graphs with edge probability $p = 0.4$. \subref{geom} Average accuracy and standard error on 40 geometric graphs with radius $d = 0.15$ on the unit square. The error bars were scaled by 0.2.}
		\label{fig:byhop}
	\end{figure*}	
%%%%%%%%%%%%%%%%%%%%%%%%%%%%%%%%%%%%%%%%%%%%%%%%%%%%%%%%%%%%%%%

For \eqref{eqn:partialW} we have
% eqn:dLdw
\begin{equation} \label{eqn:dLdw}
    \frac{\partial \ \bbarJ}{\partial w_{k'}^{f}} 
        = \sum_{\ccalT}
            \frac{\partial \ J}{\partial \hby} 
            \left. \frac{\partial \ \hby}{\partial w_{k'}^{f}}\right|_{\bbx_{m}} 
        = \sum_{\ccalT}
            \frac{\partial \ J}{\partial \hby^{f}} 
            \left. \frac{\partial \ \hby^{f}}{\partial w_{k'}^{f}}\right|_{\bbx_{m}}
\end{equation}
where we used again the fact that $\partial \hby^{f'}/\partial h_{k}^{fg}=\bbzero$ whenever $f' \neq f$, and where the derivative of the feature $f$ with respect to the nonlinear weight parameter is denoted by $\partial \hby^{f}/\partial w_{k'}^{f} = [(\partial [\hby^{f}]_{1}/\partial w_{k'}^{f})^{\Tr},\ldots,(\partial [\hby^{f}]_{N}/\partial w_{k'}^{f})^{\Tr}] \in \reals^{N \times 1}$. Observe that $\hby^{f}$ is a linear function of $w_{k'}^{f}$, and so using \eqref{eqn:singleLocalActivation} we get
% eqn:dydw
\begin{equation} \label{eqn:dydw}
   \left. \frac{\partial \ \hby^{f}}{\partial w_{k'}^{f}} \right|_{\bbx_{m}} = \bbsigma \left(\bbS^{k'}, \bbu^{f} (\bbx_{m}) \right)\ .
\end{equation}
Replacing \eqref{eqn:dydw} back in \eqref{eqn:dLdw} proves \eqref{eqn:partialW}.
\end{proof}

It should be noted that both in \eqref{eqn:partialH} and \eqref{eqn:partialW} the gradient updates depend on quantities that are either available from the start ---like $\bbS$--- or made available in the forward pass immediately before the backpropagation step. In particular, the weights $w_{k'}^f$ in \eqref{eqn:partialH} are initialized before the first forward and backward passes.

%%%%%%%%%%%%%%%%
%%% SUBSECTION : Computational complexity
%%%%%%%%%%%%%%%%%%%%%
%%
%
\subsection{Computational complexity} \label{sbs:complexity}

In this section, we discuss and quantify the additional computational complexity incurred by localized activation functions in the forward and backward passes of data needed to train the single layer median/max GNNs of the previous subsection. These GNNs are compared with a single-layer GNN containing only pointwise activation functions.

%%%%%%%%%%%%%%%%%%%%%%%%%%%%%%%%%%%%%%%%%%%%%%%%%%%%%%%%%%%%%%%%%%
%%%   C   O   R   O   L   L   A   R   Y   %%%%%%%%%%%%%%%%%%%%%%%%
%%%%%%%%%%%%%%%%%%%%%%%%%%%%%%%%%%%%%%%%%%%%%%%%%%%%%%%%%%%%%%%%%%
%
\begin{corollary}\label{cor:backprop_pointwise}
In Proposition~\ref{prop:backprop_local}, replace the local activation function layer \eqref{eqn:singleLocalActivation} by a pointwise activation function $\sigma: \reals \to \reals$,
% eqn:singlePointwiseActivation
\begin{equation} \label{eqn:singlePointwiseActivation}
    \bbz^{f} = \bbsigma(\bbu^{f})
\end{equation}
where $[\bbsigma(\bbu^{f})]_{i} = \sigma([\bbu^{f}]_{i})$ for all $i=1,\ldots,N$. In this case, the derivatives used in the backpropagation algorithm are
% eqn:partialHpointwise
\begin{equation} \label{eqn:partialHpointwise}
    \frac{\partial \ \bbarJ}{\partial h_{k}^{fg}} 
        = \sum_{\ccalT} \frac{\partial \ J}{\partial \hby^{f}} 
            \diag(\bbq)
            \ \bbS^{k} \bbx_{m}^{g}
\end{equation}
where $\bbq \in \reals^{N}$ with $[\bbq]_{i} = d\sigma/dx \ |_{x = [\bbu^{f}]_{i}}$.
\end{corollary}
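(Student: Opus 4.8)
The plan is to mirror the chain-rule argument used in the proof of Proposition~\ref{prop:backprop_local}, specializing each factor to the pointwise case. First I would write the single-layer forward map as $\hby^{f}(\bbx) = \bbsigma(\bbu^{f}(\bbx))$ with $\bbu^{f}$ given by \eqref{eqn:singleConvLayer}, and observe that, exactly as in \eqref{eqn:dLdh}, applying the chain rule to the total loss gives $\partial\bbarJ/\partial h_{k}^{fg} = \sum_{\ccalT} (\partial J/\partial\hby^{f})\,(\partial\hby^{f}/\partial h_{k}^{fg})|_{\bbx_{m}}$, since $\partial\hby^{f'}/\partial h_{k}^{fg} = \bbzero$ for $f' \neq f$. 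Note that here there is no outer sum over $k'$ and no weights $w_{k'}^{f}$, because the pointwise activation \eqref{eqn:singlePointwiseActivation} is a single term.

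Next I would compute $\partial\hby^{f}/\partial h_{k}^{fg}$ by one more application of the chain rule, $\partial\hby^{f}/\partial h_{k}^{fg} = (\partial\bbsigma/\partial\bbu^{f})(\partial\bbu^{f}/\partial h_{k}^{fg})$, and evaluate the two factors. Because $\bbsigma$ acts entrywise, $[\bbsigma(\bbu^{f})]_{i} = \sigma([\bbu^{f}]_{i})$ depends on $\bbu^{f}$ only through its $i$th entry, so the Jacobian $\partial\bbsigma/\partial\bbu^{f}$ is the diagonal matrix $\diag(\bbq)$ with $[\bbq]_{i} = d\sigma/dx\,|_{x=[\bbu^{f}]_{i}}$; this diagonal matrix plays the role that the binary selection matrix $\bbP_{k'}$ played in Proposition~\ref{prop:backprop_local}. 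For the second factor, linearity of \eqref{eqn:singleConvLayer} in $h_{k}^{fg}$ gives $\partial\bbu^{f}/\partial h_{k}^{fg} = \bbS^{k}\bbx^{g}$, just as in \eqref{eqn:dudh}.

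Finally I would combine the two factors into $\partial\hby^{f}/\partial h_{k}^{fg} = \diag(\bbq)\,\bbS^{k}\bbx^{g}$ and substitute this back into the first chain-rule expression, evaluated at each $\bbx_{m}$, to obtain \eqref{eqn:partialHpointwise}. The only point requiring care is the Jacobian computation --- recording that a pointwise nonlinearity has a diagonal Jacobian whose entries are the scalar derivatives $d\sigma/dx$ evaluated at the pre-activations $[\bbu^{f}]_{i}$ --- but this is elementary and amounts to the degenerate case of the selection matrix $\bbP_{k'}$ in which the neighborhood of each node is the node itself. There is accordingly no real obstacle; the statement is a direct corollary whose purpose is to anchor the complexity comparison of Section~\ref{sbs:complexity}.
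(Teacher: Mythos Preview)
Your proposal is correct and essentially matches the paper's argument. The only cosmetic difference is that the paper phrases the proof as a direct specialization of formula \eqref{eqn:partialH}---setting $K'=0$, $w_{0}^{f}=1$, and observing that $\bbP_{0}$ collapses to the diagonal matrix $\diag(\bbq)$---whereas you re-run the chain-rule computation from scratch; both routes hinge on the same observation that a pointwise nonlinearity has diagonal Jacobian $\diag(\bbq)$.
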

%
%%%%%%%%%%%%%%%%%%%%%%%%%%%%%%%%%%%%%%%%%%%%%%%%%%%%%%%%%%%%%%%%%%
%%%   P   R   O   O   F   %%%%%%%%%%%%%%%%%%%%%%%%%%%%%%%%%%%%%%%%
%%%%%%%%%%%%%%%%%%%%%%%%%%%%%%%%%%%%%%%%%%%%%%%%%%%%%%%%%%%%%%%%%%
%
\begin{proof}
We set $K'=0$ in \eqref{eqn:partialH} in Proposition \ref{prop:backprop_local}, since we only consider the value of the signal at the node individually and $w_{k'}^{f}=1$ (the output of conventional pointwise activation functions is not modified by any trainable weights). In this scenario, the matrix $\bbP_{k'} = \bbP_{0}$ takes the form of a diagonal matrix, because the only nodes contributing to the output of the nonlinearity are the nodes themselves. The derivatives of the nonlinearity $\sigma$ at each node are then the diagonal values of the matrix $\bbP_{0} = \diag(\bbq)$ with $[\bbq]_{i} = d\sigma/dx \ |_{x = [\bbu^{f}]_{i}}$.
\end{proof}

The overall complexity of the activation step in \emph{one} forward pass of the GNN with pointwise activations is $\ccalO(N)$, that is, one nonlinearity per node. In a forward pass of the GNN with localized activation functions, the added complexity stems mainly from the sorting operations performed across multiple neighborhoods of a given node to find the maximum or the median of the graph signal in those regions. As a result, the multi-hop maximum and the multi-hop median activations incur an overall worst-case computational complexity of $\ccalO (NK'd^{K'}\log (d^{K'}))$, with $K'>0$ and $d$ denoting the maximum degree. Although the activation function reach $K'$ is usually small, it exponentiates $d$, which need not be. The graph degree is therefore bound to be a considerable limiting factor complexity-wise, but this is expected since our nonlinearities are now local instead of pointwise. In any case, we note that\blue{, as long as the degree is not a function of the number of nodes (this is the case, for example, of regular networks and of most small-world networks \cite[Ch. 10]{newman2018networks}),} the complexity is still linear in $N$.

In the backward pass, the GNN with pointwise activations incurs $\ccalO(N)$ operations (cf.\eqref{eqn:partialHpointwise}) to compute the derivative with respect to $h_{k}^{fg}$. In the case of the GNN with local activation functions, following Proposition \ref{prop:backprop_local} we have $\ccalO(K'N)$ for the computation of the derivative with respect to $h_{k}^{fg}$ followed by $\ccalO(N)$ operations to compute the derivatives with respect to $w_{k'}^{f}$. These results are summarized in Table \ref{table_comp}.

\begin{table}
	\centering
\begin{tabular}{lcc} \hline
		& \multicolumn{2}{c}{Complexity} \\
Activation    & Forward pass & Backpropagation	\\ \hline
Pointwise		& $\ccalO(N)$ & $\ccalO(N)$ \\
Local	& $\ccalO(NK'd^{K'}\log(d^{K'}))$ & $\ccalO(K'N)$ \\ \hline
\end{tabular}
	\caption{Complexity comparison in a single forward pass and a single backward pass of data for architectures with ReLu and localized activation functions.}
	\label{table_comp}
\end{table}

%%%%%%%%%%%%%%%%%%%%%%%%%%%%%%%
%%% SECTION : Simulations   %%%
%%%%%%%%%%%%%%%%%%%%%%%%%%%%%%%

\section{Numerical Experiments} \label{sec:sims}

%!TEX root = nonlinear.tex

%%%%%%%%%%%%%%%%%%%%%%%%%%%%%%%
%%% SECTION : Simulations   %%%
%%%%%%%%%%%%%%%%%%%%%%%%%%%%%%%

%%%%%%%%%%%%%%%%%%%%%%%%%%%%%%%%%%%%%%%%%%%%%%%%%%%%%%%%%%%%%%%%%%
%%%   F   I   G   U   R   E   %%%%%%%%%%%%%%%%%%%%%%%%%%%%%%%%%%%%
%%%%%%%%%%%%%%%%%%%%%%%%%%%%%%%%%%%%%%%%%%%%%%%%%%%%%%%%%%%%%%%%%%
%

\begin{figure*}[t]
	\centering
	\begin{subfigure}{.4\textwidth}
		\centering
		\includegraphics[width=\textwidth]{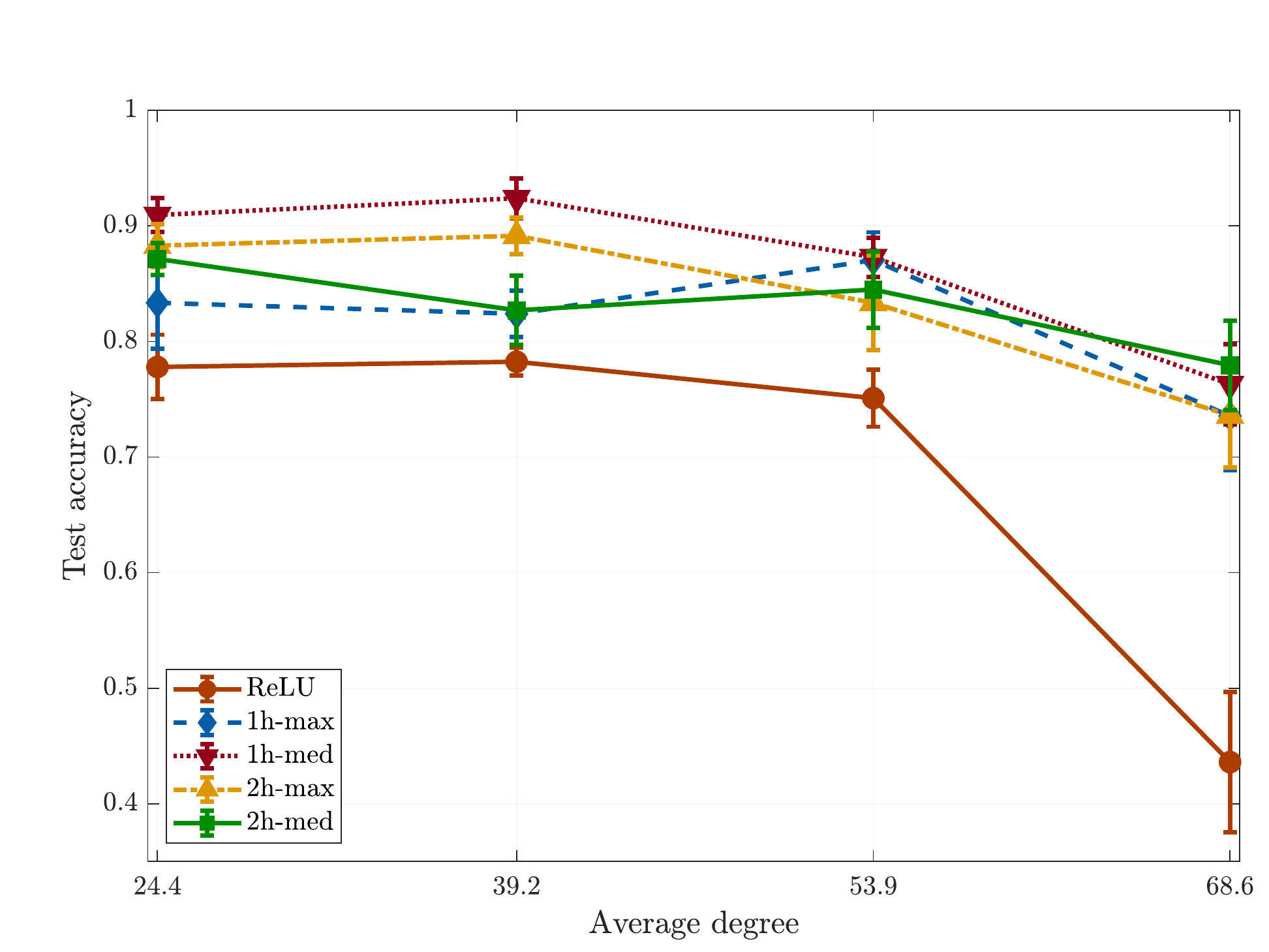}
		\caption{}
		\label{ERdeg}
	\end{subfigure}
	\hspace{5em}
	\begin{subfigure}{.4\textwidth}
		\centering
		\includegraphics[width=\textwidth]{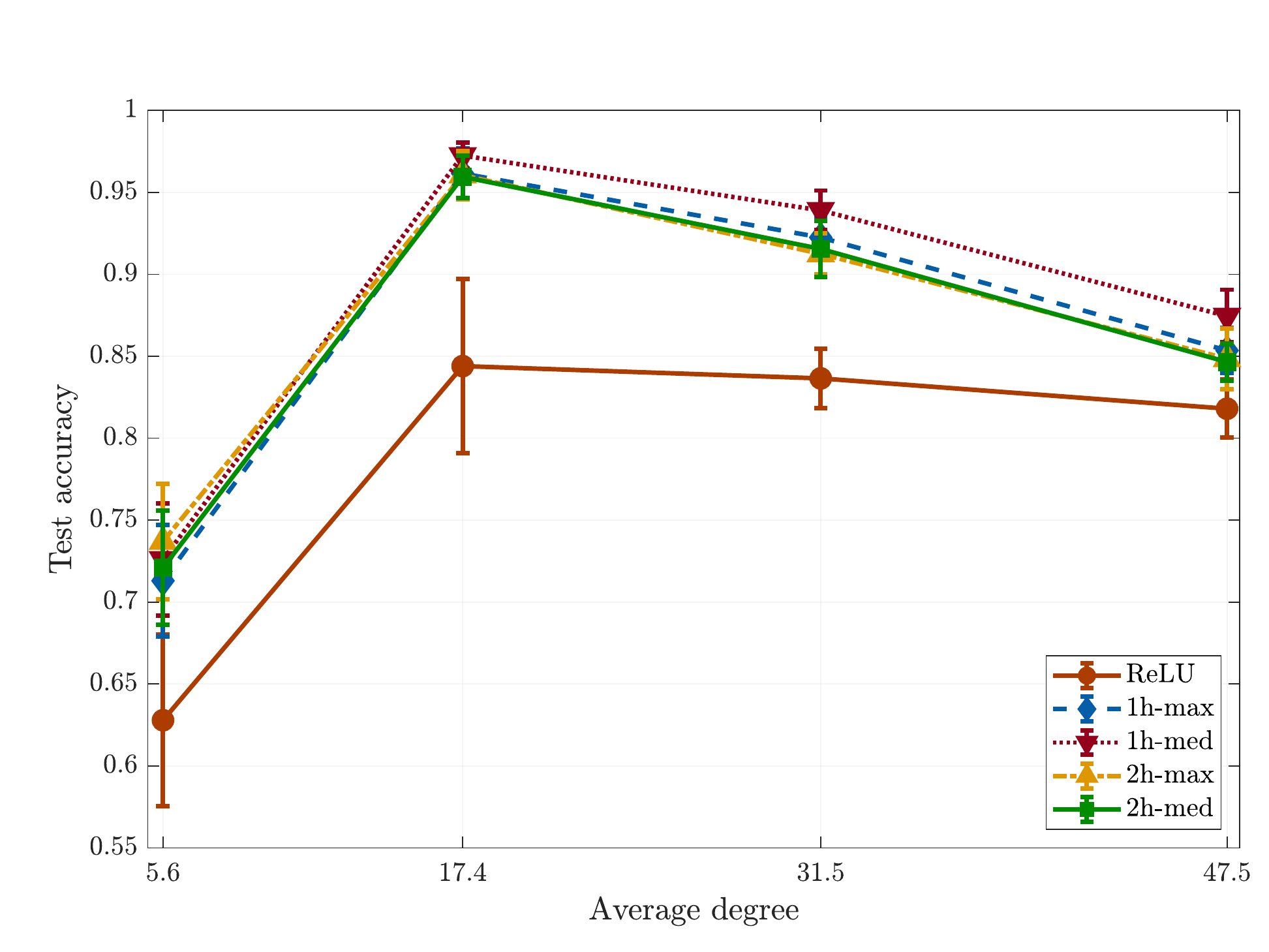} 
		\caption{}
		\label{geomdeg}		
	\end{subfigure}
	\caption{Source localization test accuracy in GNN architectures with localized activation functions, by degree. \subref{ERdeg} Average accuracy and standard error on 10 ER graphs with edge probabilities $p = 0.25, 0.40, 0.55, 0.70$. \subref{geomdeg} Average accuracy and standard error on 10 geometric graphs with radii $d = 0.15, 0.27, 0.38, 0.5$ on the unit square. The error bars were scaled by 0.2.}
	\label{fig:bydegree}
\end{figure*}	
%%%%%%%%%%%%%%%%%%%%%%%%%%%%%%%%%%%%%%%%%%%%%%%%%%%%%%%%%%%%%%%	

We assess the performance of median and max GNNs in \blue{four} different applications. In all scenarios, the multi-hop maximum and the multi-hop median are compared with the ReLU. The first problem is source localization on Erd\H{o}s-R\'{e}nyi (ER) and geometric graphs. In this synthetic setting, we analyze how the activation function reach $K$ and the underlying graph degree affect classification accuracy. The second application is authorship attribution of text excerpts taken from 19th century novels, modeled as a binary classification problem detailed in Section \ref{subsec_auth}. 
\blue{In the third experiment}, we tackle the problem of predicting movie ratings using the MovieLens 100k dataset. In addition to comparing localized and pointwise activation functions, we contrast our performance with that of the recommendation systems proposed in \cite{weiyu18-movie} and \cite{monti17-movie}. \blue{The last experiment is a node classification task in which we use the Cora citation network and associated dataset to classify scientific articles into 7 different classes.}

\blue{In Sections \ref{subsec_sourceloc} through \ref{subsec_movie}, the simulated GNNs predict class labels for graph signals. In Section \ref{subsec_cora}, they predict labels for nodes of the graph. Unless otherwise noted, all models} consist of one convolutional layer with $F_1=32$ linear graph filters that have $K_1=5$ filter taps each, followed by the activation function under analysis (ReLU, $K$-hop median or $K$-hop max). The GSO of the convolutional filters is always the adjacency matrix; \blue{in Sections \ref{subsec_sourceloc}, \ref{subsec_auth} and \ref{subsec_cora}, the best results were obtained when the adjacency matrix was rescaled by the inverse of its largest eigenvalue and so these are the results that we report.} No pooling is performed, and \blue{in the graph signal classification settings (Sections \ref{subsec_sourceloc}-\ref{subsec_movie})} the convolutional layer is followed by a fully connected layer that carries out a softmax classification with $F_2 = C$ nodes. $C$ is the number of classes, which is different for each problem. In all applications, the GNNs were trained using the ADAM algorithm for stochastic optimization. This algorithm keeps an exponentially decaying average of past gradients with decaying factors $\beta_{1} = 0.9$ and $\beta_{2} = 0.999$ \cite{kingma17-adam}.	
	
%%%%%%%%%%%%%%%%%%%
%%% SUBSECTION
%%%	subsec_source
%%%%%%%%%%%%%%%%%%%

\subsection{Source Localization} \label{subsec_sourceloc}

Source localization is a classification problem where we aim to identify the node that originated a diffusion process on a graph \cite{segarra16-deconv}.  Take for instance a graph $\ccalG$ with $N$ nodes and adjacency matrix $\bbW$. To be specific, let $c \in \{1,...,N\}$ represent the index of the source node and consider the seeding graph signal $\bbx_0$, which is defined as $[\bbx_0]_i = 1$ if $i = c$ and $[\bbx_0]_{i}=0$ for all other $i$. The corresponding diffused signal at time $t=1,2,\ldots$ is then $\bbx(t) = \bbW^t \bbx_0$, which satisfies $\bbx(0) = \bbx_0$. Given $\bbx(t)$ and without knowing $t$, we want to identify the node $c$.   

In all of this section's experiments, we train a GNN to predict the source node of a graph diffusion process by optimizing a cross entropy loss with 0.005 learning rate. In every round, 10,000 synthetic training samples consisting of a diffused graph signal $\bbx(t)$ at a random time $t$ (input) and its source $c$ (true label) were evaluated in batches of 100. To prevent overfitting, we set the node dropout probability to 50\% during training \cite{srivastava14-dropout}. The validation and test sets comprised 200 input-output samples in every round. 

We first analyze the evolution of the training and validation losses over 30 epochs for an ER graph with 100 nodes, edge probability 0.4 and and 20 possible sources that we choose uniformly at random. This amounts to a classification problem with 20 classes. The training vs. validation loss plots for the ReLU, the 1-hop maximum and the 1-hop median are shown in Figure \ref{fig:lossandacc}. No architecture overfits the training set, and they all achieve a comparable loss over the unseen validation loss. %all architectures generalized well, but the 1-hop max and the 1-hop median achieve a slightly smaller generalization gap (that is, the difference between the validation and the training error).

Next, we study the source localization problem on 40 random ER \cite{ErdosRenyi59-RandomGraphs} and 40 random geometric graphs \cite[Ch. 4]{penrose07-rgg}. Each graph has 100 nodes, 10 of which are randomly picked to be potential sources (10 classes). The ER graphs have edge probability 0.4 and average degree 39.4. The geometric graphs have radius 0.15 on the unit square and average graph degree 5.6. Training was done in 20 epochs for both types of graphs.

First, we analyze localized activation function performance in terms of their reach in number of hops, i.e., $K$ in expressions \eqref{eqn_median_activation} and \eqref{eqn_max_activation}. We adopt the convention that, for $K = 0$, both activation functions amount to the ReLU, $\mbox{max}\{0,x_i\}$, since expressions \eqref{eqn_median_activation} and \eqref{eqn_max_activation} are linear in $\bbx$ for $K = 0$. Figure \ref{ER} shows the average test accuracies for ER graphs. At $K=1$, localized activation functions increase accuracy of at least 6.5 percentage points over the ReLU.  However, as $K$ grows bigger, their performance appears to stagnate or decay. A plausible explanation for that is that on graphs with such high connectivity and average degree, neighborhoods become more redundant as they grow larger (larger $K$).

The average test accuracies as a function of $K$ are displayed in Figure \ref{geom} for geometric graphs. Even if the 2-hop max showcases the largest accuracy, its performance degrades as the number of hops increases. On the other hand, the median sustains consistent improvements up to $K=3$, but its performance also decays from there to $K=4$. This explained by the fact that geometric graphs with small radii are not very connected and have an almost Euclidean structure. Thus, the more distant the neighborhood, the smallest its influence is likely to be on a node. It is natural, then, to expect a function returning extreme values like the maximum to add a lot of high intensity noise from distant neighborhoods as $K$ grows and thus degrade more abruptly in performance than a smoothing operation like the median. Nonetheless, the median still performs worse as neighborhoods increase, due to the excess of information and possible redundancies. Although all localized activation functions outperform the ReLU, they do so by smaller margins and with higher variances than those observed on ER graphs. This is further evidence that localized activation functions are less powerful on graphs with some structural regularity, which is precisely the case of geometric graphs.% with moderate degrees.

The next analysis was done by keeping all the same simulation parameters and varying the graph degree. Results are presented in Figure \ref{ERdeg} for ER graphs and in Figure \ref{geomdeg} for geometric graphs. On both ER and geometric graphs, the 1-hop median delivers the best results by degree. On ER graphs, localized activation functions are consistently better than the ReLU, and even if classification accuracy decreases with the graph degree regardless of the choice of activation function, the accuracy gap between localized activations and the ReLU increases systematically.

On geometric graphs (Figure \ref{geomdeg}), the best accuracy is obtained in the middle of the degree range. The worst accuracies are observed for graphs with small degree; this is once again related to them having an almost Euclidean structure that is more likely to benefit from the CNN rather than the GNN apparatus. Localized activation functions outperform the ReLU in all scenarios. The steady decay in performance that occurs at higher degrees is accompanied by a reduction in the accuracy gap between localized and pointwise activations, in contrast with what we observed for ER graphs. This can be explained by the highly connected patterns arising from the smaller variances in the individual nodes' degrees of geometric graphs.

%%%%%%%%%%%%%%%%%%%
%%% SUBSECTION
%%%	subsec_auth
%%%%%%%%%%%%%%%%%%%
\subsection{Authorship attribution} \label{subsec_auth}

In this section, we assess the performance of localized activation function GNNs in an authorship attribution problem based on real data. The graphs we consider are author-specific word adjacency netowrks (WANs), which are directed graphs whose nodes are function words and whose edges represent the probability of transitioning between a particular pair of words in a text written by the author. Function words are prepositions, pronouns, conjunctions and other words with syntactic importance and little semantic meaning; their use in authorship attribution was first discussed in \cite{mosteller64-auth} and is based on the fact that their usage carries stylometric information about the author while being content independent. 

%%%%%%%%%%%%%%%%%%%%%%%%%%%%%%%%%%%%%%%%%%%%%%%%%%%%%%%%%%%%%%%%%%
%%%   F   I   G   U   R   E   %%%%%%%%%%%%%%%%%%%%%%%%%%%%%%%%%%%%
%%%%%%%%%%%%%%%%%%%%%%%%%%%%%%%%%%%%%%%%%%%%%%%%%%%%%%%%%%%%%%%%%%
%
\begin{figure}[t]
	\centering
	\includegraphics[width=.95\columnwidth]{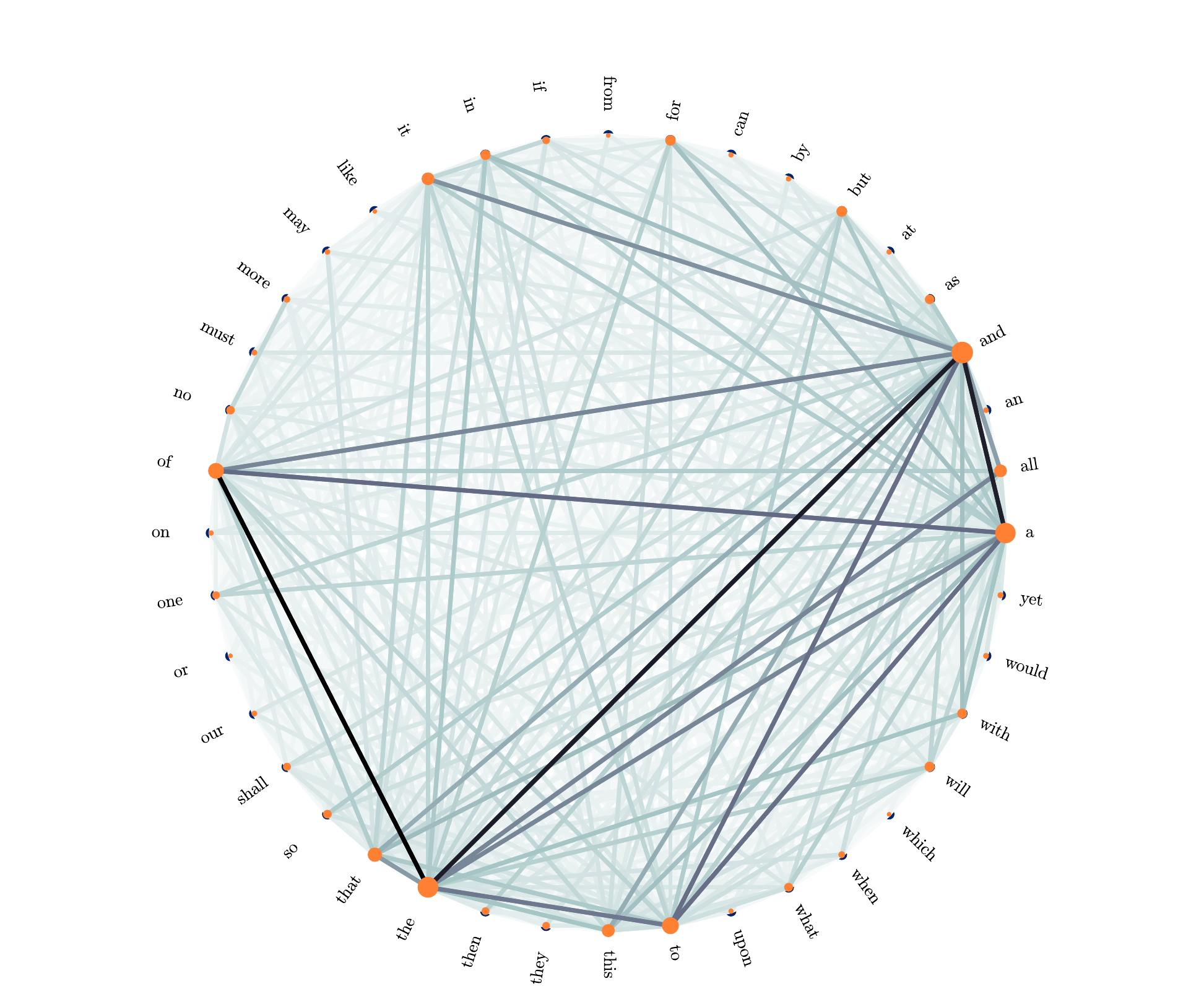}
	\caption{Example of WAN with 40 function words built from the play ``The Humorous Lieutenant'' by John Fletcher. Orange circles represent frequency with radius proportional to word count. The darker the edge color, the higher is edge weight.}
	\label{fig:authorship}
\end{figure}	
%%%%%%%%%%%%%%%%%%%%%%%%%%%%%%%%%%%%%%%%%%%%%%%%%%%%%%%%%%%%%%%

%%%%%%%%%%%%%%%%%%%%%%%%%%%%%%%%%%%%%%%%%%%%%%%%%%%%%%%%%%%%%%%%%%
%%%   F   I   G   U   R   E   %%%%%%%%%%%%%%%%%%%%%%%%%%%%%%%%%%%%
%%%%%%%%%%%%%%%%%%%%%%%%%%%%%%%%%%%%%%%%%%%%%%%%%%%%%%%%%%%%%%%%%%
%
\begin{figure*}[t]
	\centering
	\begin{subfigure}{0.4\textwidth}
		\centering
		\includegraphics[width=\columnwidth]{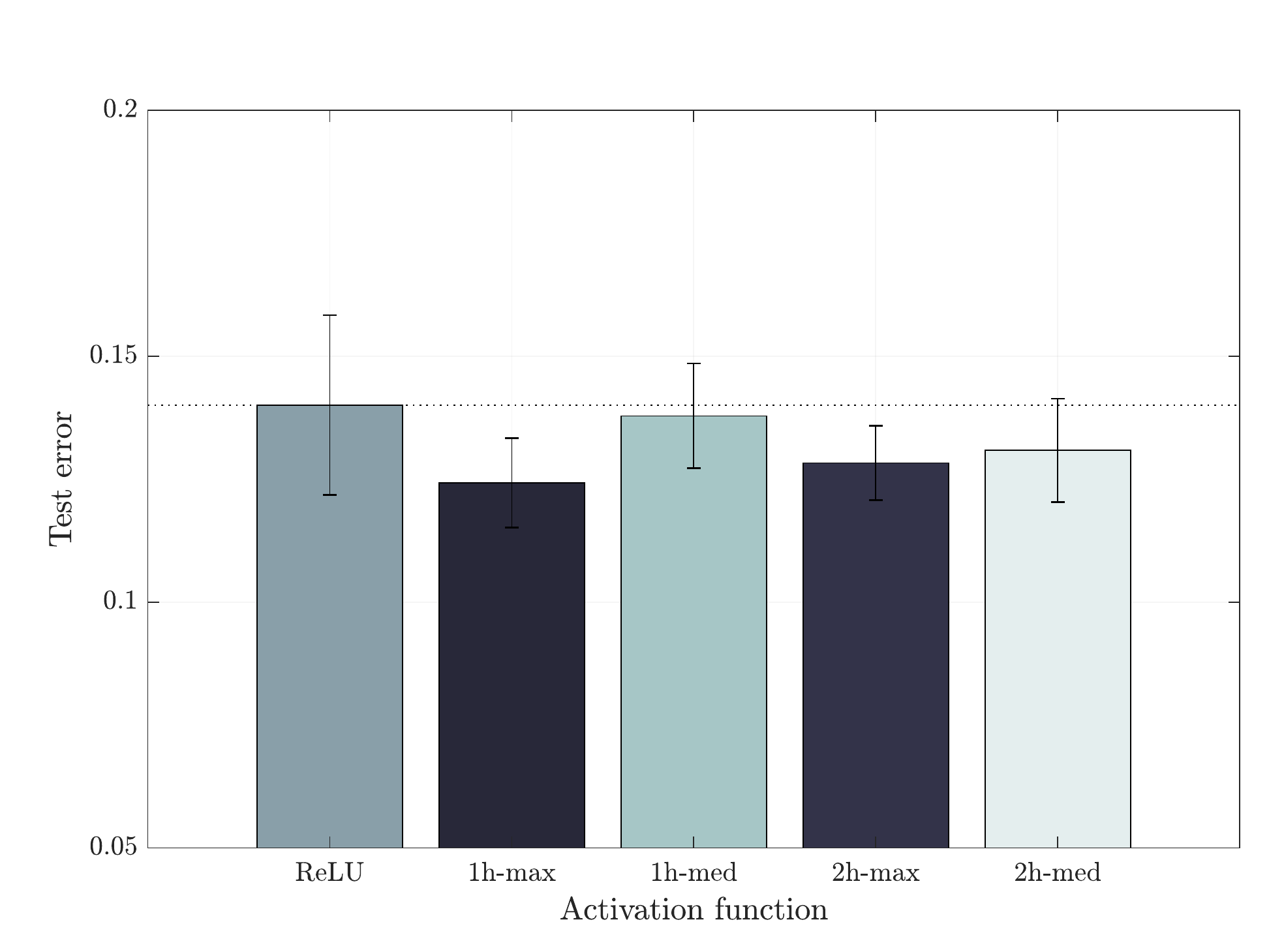}
		\caption{}
		\label{bronte}
	\end{subfigure}
	\hspace{5em}
	\begin{subfigure}{0.4\textwidth}
		\centering
		\includegraphics[width=\columnwidth]{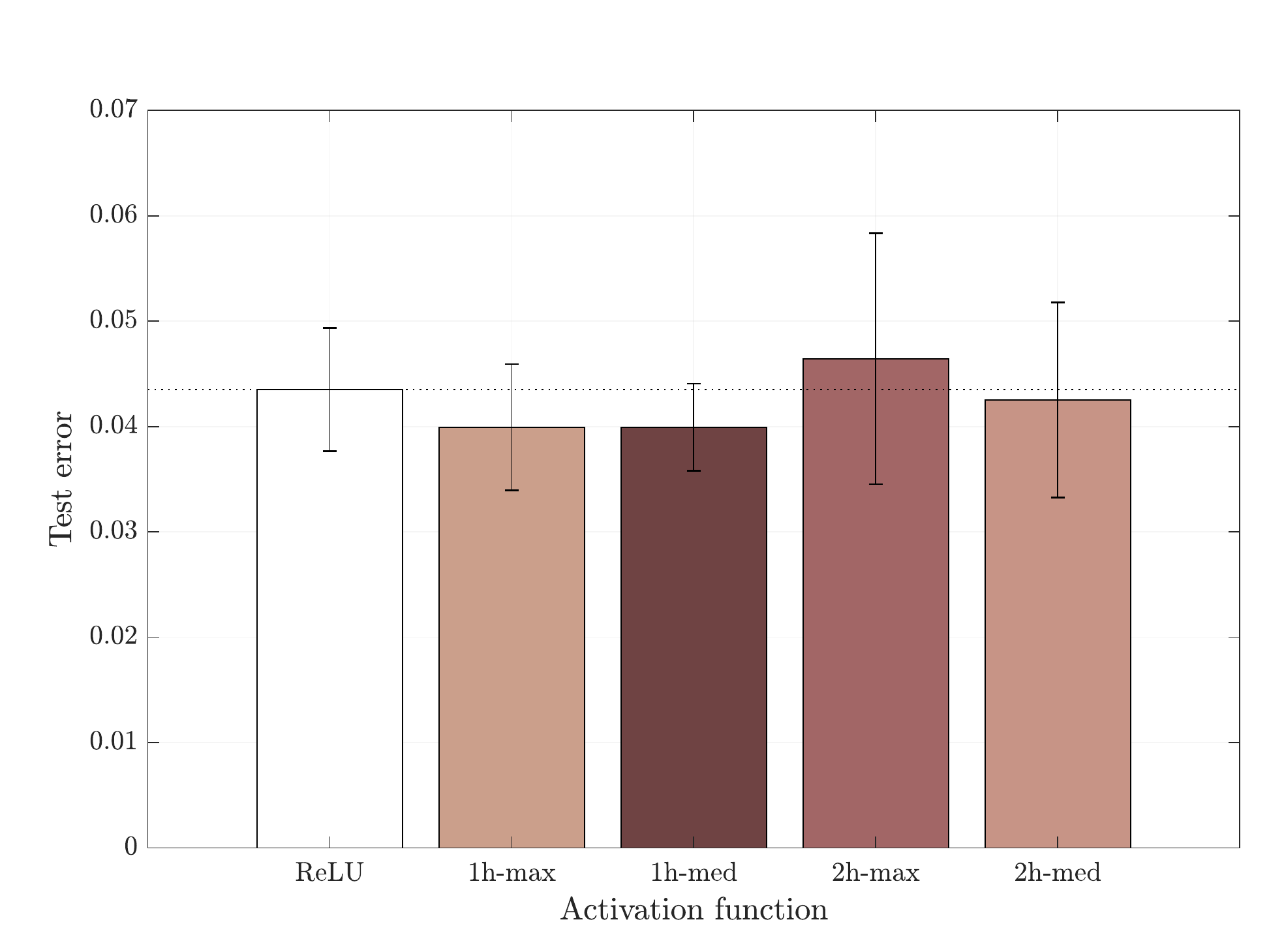}
		\caption{}
		\label{austen}
	\end{subfigure}
	\caption{Authorship attribution test error in GNN architectures with localized activation functions. Average classification error and standard deviation on 10 different training-test 80-20 splits for \subref{bronte} Br{\"o}nte and \subref{austen} Austen. The error bars were scaled by 0.5. }
	\label{fig:authorship2}
\end{figure*}	
%%%%%%%%%%%%%%%%%%%%%%%%%%%%%%%%%%%%%%%%%%%%%%%%%%%%%%%%%%%%%%%	

We consider $N=211$ nodes or functions words. Using the method in \cite{eisen15-auth}, we build single-author WANs for Emily Br{\"o}nte and Jane Austen. To build each author's WAN, we process their texts to count the number of times that each pair of function words co-appear in 10-word windows. These are inputted to a $N \times N$ function-word matrix and normalized row-wise. The resulting matrix is the WAN adjacency matrix, which can also be interpreted as a Markov chain transition matrix. Because the order in which function words appear matters, the resulting graphs are directed. As for the graph signals, they are defined as each function word's count among 1,000 words. Thus, the texts available by a given author are split in 1,000-word excerpts (signals) where we store the frequency of each of the function words. 

Splitting an author's texts between training and test sets on a 80-20 ratio, each author's WAN is generated from function word co-appearance counts in the training set only. An example of such a network is depicted in Figure \ref{fig:authorship}. The graph signals in the training set are the individual function words' counts in these same excerpts and in excerpts by other authors picked at random from a pool of 21 authors to yield a balanced classification problem. Paired with a binary label where 1 indicates a text by the author in question and 0 a text by any other author in the pool, these constitute the input-output pairs used to train the GNN. Test samples are defined analogously, but we only consider excerpts that have not been used to build the author's WAN. The loss function is the cross entropy, which we optimize in 25 training epochs and batches of 20 samples, with learning rate 0.005 and without dropout.

Figure \ref{fig:authorship2} presents the authorship attribution accuracy results for Emily Br{\"o}nte (Figure \ref{bronte}) and Jane Austen (Figure \ref{austen}). Ten rounds of simulations were conducted for each author by varying the training and test splits.

The average average out-degree of the WANs built for author Emily Br{\"o}nte was 77.9. The training and test sets consisted of 1,092 and 272 1,000-word excerpts, both with equally balanced classes. For Jane Austen, the average out-degree of the WANs considered was 88.3, and the training and test sets contained 1,234 and 308 labeled excerpts respectively. 

On Figure \ref{bronte}, we see that median and max GNNs did consistently better than the ReLU GNNs on discerning between texts written by Br{\"o}nte and any other author in the pool. Although the smallest classification error, of 12.43\% (34/272), was obtained with the 1h-max, every other localized activation outperforms the ReLU on average, with significantly smaller test errors and deviations around the average.

For the author Jane Austen (Figure \ref{austen}), three of our schemes perform better than the ReLU. The 2-hop localized activations do worse than the ReLU, which could be explained by the higher average degree of this author's WANs. The gap between the best performing localized activation --- the 1-hop median --- and the ReLU is not as big as in the previous example, but it still amounts to at least one extra excerpt being labeled correctly. What is more impressive is this architecture's ability to correctly attribute text fragments as short as a single page with up to 98.37\% accuracy (303/308). This was the best observed accuracy in all 10 realizations and it was obtained by training the 1-hop max.	

%%%%%%%%%%%%%%%%%%%
%%% SUBSECTION
%%%	subsec_movie
%%%%%%%%%%%%%%%%%%%

\subsection{Recommender systems} \label{subsec_movie}  

\begin{table}
	\centering
	\begin{tabular}{lc}
	\hline
	Activation    & Number of parameters	\\ \hline
	ReLU		& $160$ \\
	$K$-hop median/max	& $162$ \\ \hline
	\end{tabular}
	\caption{Number of parameters in the convolutional layers of the 3 GNNs simulated in the recommender systems problem: ReLU, 1-hop median and 1-hop max.}
	\label{table_params}
\end{table} 

The third application we consider is movie rating prediction using the MovieLens 100k dataset \cite{harper16-movielens}, which contains ratings that a set of users have given to a subset of movies. There are $U=943$ users and $M=1,582$ movies (items), the ratings range from 1 to 5, only 100,000 out of 1,491,826 ratings are known and the ratings for unknown user-movie pairs are set to 0. Given an incomplete $U \times M$ rating matrix, we can define two different graphs --- an \textbf{(a)} user similarity network and a \textbf{(b)} movie similarity network. Both are constructed by computing Pearson correlations considering only \textbf{(a)} items that have been rated by pairs of users or \textbf{(b)} users that have rated the same item pairs. For a given node, we keep only the top-$k$ user or item pairs with highest similarity, which yields a directed graph. Correspondingly, a user-based and a movie-based GNN architecture can be defined atop each one of these graphs.

%%%%%%%%%%%%%%%%%%%%%%%%%%%%%%%%%%%%%%%%%%%%%%%%%%%%%%%%%%%%%%%%%%
%%%   F   I   G   U   R   E   %%%%%%%%%%%%%%%%%%%%%%%%%%%%%%%%%%%%
%%%%%%%%%%%%%%%%%%%%%%%%%%%%%%%%%%%%%%%%%%%%%%%%%%%%%%%%%%%%%%%%%%
%
\begin{figure*}[t]
	\centering
	\begin{subfigure}{.35\textwidth}
		\centering
		\includegraphics[width=\textwidth]{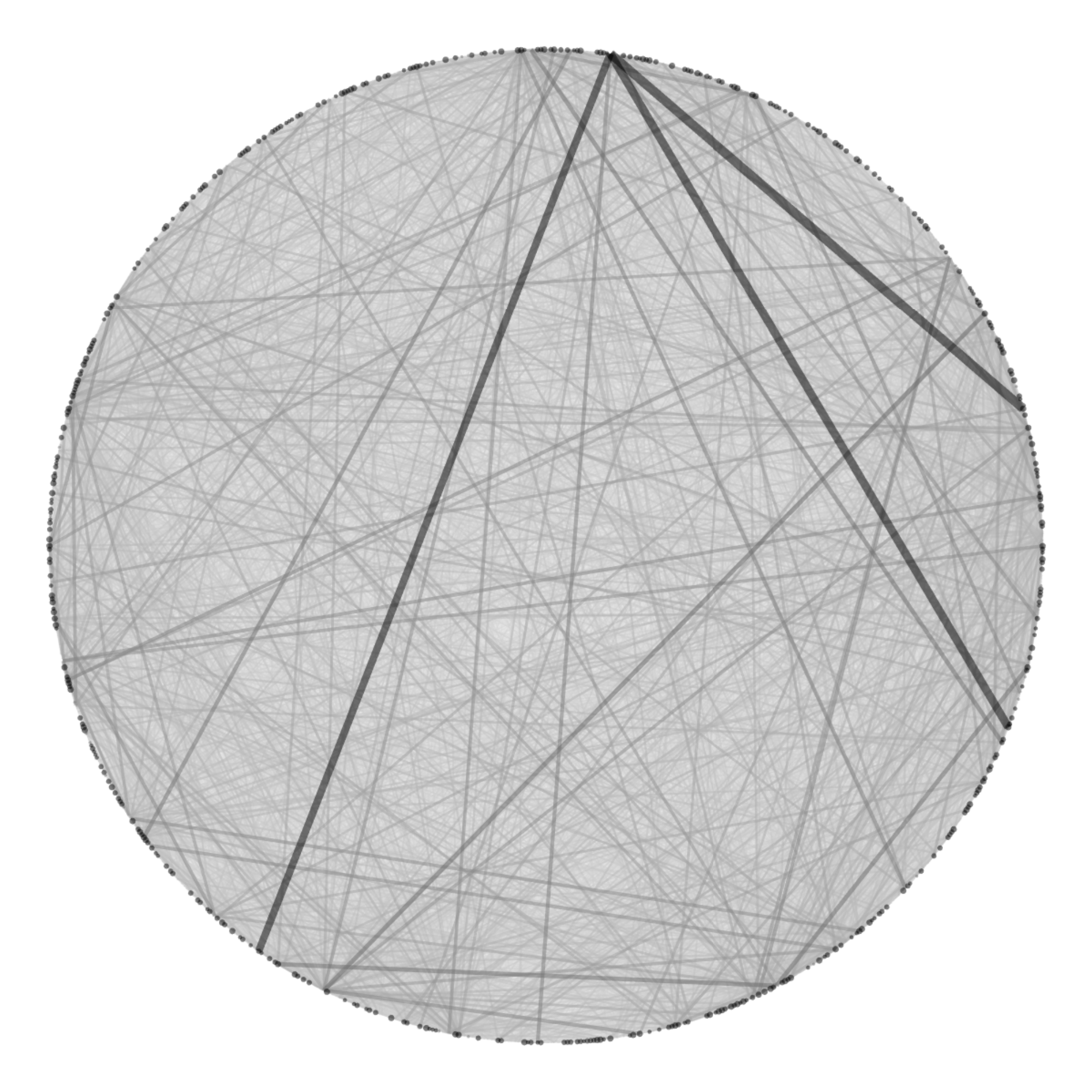}
		\caption{}
		\label{before}
	\end{subfigure}
	\hspace{6em}
	\begin{subfigure}{.35\textwidth}
		\centering
		\includegraphics[width=\textwidth]{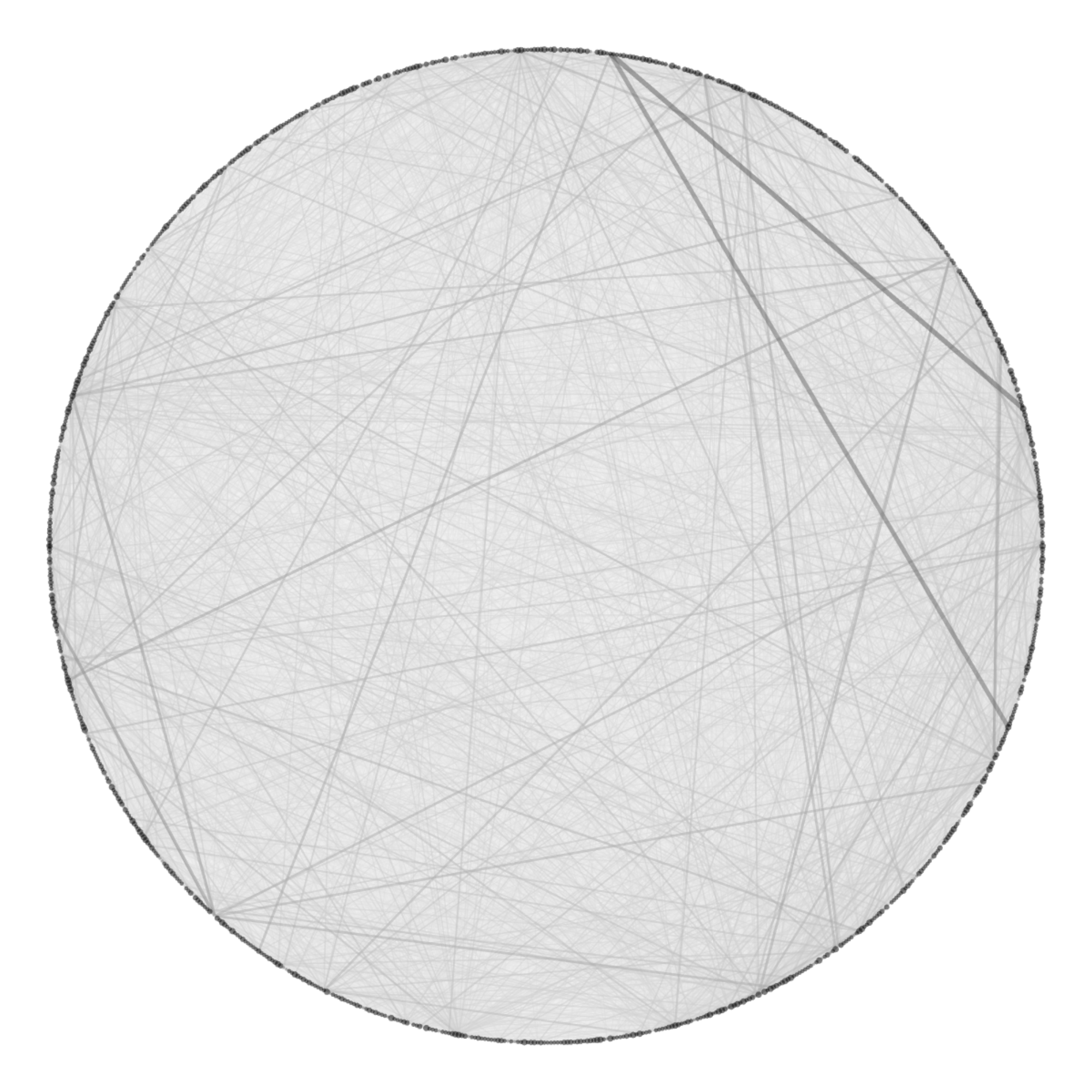}
		\caption{}
		\label{after}
	\end{subfigure}
	\caption{Undirected user similarity network built from 90-10 split of the MovieLens dataset with $U=943$ users and $K=40$. \subref{before} Existing ratings (1-5 scale) to the movie Toy Story. Nonexistent ratings were set to 0. \subref{after} Predicted ratings for the movie Toy Story. Node colors and radii are proportional to graph signal intensity. Edge colors and widths are proportional to signal difference between nodes.}
	\label{fig:user_net}
\end{figure*}	
%%%%%%%%%%%%%%%%%%%%%%%%%%%%%%%%%%%%%%%%%%%%%%%%%%%%%%%%%%%%%%%  

In the user-based approach, each graph signal corresponds to a different movie and consists of the existing ratings by every user in the network who has rated that movie. A depiction of such a signal is shown on Figure \ref{before}. The way in which we create training and test samples from these signals is by ``zero-ing out'' the ratings of the user $u$ in which we are interested. The GNN is then trained to predict ratings by this user to any movie previously rated by other users. We can see this task as equivalent to completing the $u$th row of the rating matrix.

In the movie-based approach, the underlying graph is a movie similarity network. There are as many graph signals as users, and each of the signals correspond to the ratings that the corresponding user has given to the movies in the dataset. We create training and test samples by ``zero-ing out'' the ratings to a movie $m$ of our choice. The GNN is trained to predict ratings to this movie by any user who has already given ratings to other movies in the graph. This is equivalent to completing the $m$th column of the rating matrix. An example of this is given in Figure \ref{after}, where we predicted every rating in the 1st column of the matrix (corresponding to the movie Toy Story) and represented them as a graph signal on top of the user similarity network.

\blue{We choose five 90-10 splits for the training and test sets in both the user-based and movie-based experiments, submitting the GNNs to 40 epochs of training, in batches of 5 and without dropout. We optimize the cross entropy loss with learning rate 0.005. In both cases, we contrast the average performance of our localized activation function-based GNNs on all data splits with that of an all-ReLU GNN. On the best data split, we also compare our method with the recommender systems proposed in \cite{weiyu18-movie} and \cite{monti17-movie}.} In \cite{weiyu18-movie}, the authors also make the distinction between a user-based and a movie-based approach. The user-based approach predicts the entire rating matrix through application of linear graph filters defined on top of the user similarity network. They have up to 6 taps and their coefficients are optimized on the full 90,000-rating training set. This author's movie-based approach does the same, but using linear graph filters defined on the movie similarity network instead. These have up to 3 filter taps that are also optimized on the training set. Because our methods look at each user/movie individually, to make for a fair comparison we test the method in \cite{weiyu18-movie} once for each user/movie, taking only that particular user's/movie's ratings into account in the calculation of the RMSE. Additionally, \cite{weiyu18-movie} presents a third approach to the rating prediction problem: mirror filtering (MiFi), which filters on the user and movie similarity networks simultaneously. Although this is the best performing method in the analysis carried out in \cite{weiyu18-movie}, it cannot be compared to our approaches, because it intertwines user and movie information and does not allow looking at each user or movie individually.

As for the method in \cite{monti17-movie}, it uses a multi-graph CNN (MGCNN) to extract features from existing ratings. We train this CNN on the same 90,000-sample training dataset as before. The extracted features are then fed to a recurrent neural network (RNN) responsible for the score diffusion process. 
%\begin{table}
%	\centering
%	\begin{tabular}{lccccc} \hline
%		& \multicolumn{3}{c}{Val. RMSE} & \multicolumn{2}{c}{Samples} \\
%		User    & ReLU & 1h-max & 1h-med & Train & Test	\\ \hline
%		\blue{1}		& \blue{$0.8819$} & \blue{$0.8819$} & \blue{$0.7935$} & \blue{$251$} & %\blue{$21$}	\\
%		\blue{269}		& \blue{$1.2119$} & \blue{$1.1319$} & \blue{$1.0308$} & \blue{$293$} & %\blue{$30$}	\\
%		\blue{437}		& \blue{$1.0919$} & \blue{$1.0377$} & \blue{$0.9608$} & \blue{$234$} & %\blue{$29$}	\\
%		\blue{754}		& \blue{$0.8165$} & \blue{$0.5774$} & \blue{$0.5774$} & \blue{$30$} & \blue{$4$}	%\\
%		\blue{858}		& \blue{$1.0000$} & \blue{$0.7071$} & \blue{$0.7071$} & \blue{$16$} & \blue{$5$}	%\\ \hline
%	\end{tabular}
%	\caption{Test RMSEs for ReLU, 1-hop max and 1-hop median by user. Number of samples in training and %test sets.}
%	\label{table_user}
%\end{table} 

\begin{table}
	\centering
	\begin{tabular}{l|ccc|cc} \hline
		& \multicolumn{3}{c|}{\blue{Test RMSE}} & \multicolumn{2}{c}{\blue{Samples}} \\
		\blue{User}    & \blue{ReLU} & \blue{1h-max} & \blue{1h-med} & \blue{Train} & \blue{Test}	\\ \hline
		\blue{405}		& \blue{$1.4121$} & \blue{$\mathbf{1.3009}$} & \blue{$1.4135$} & \blue{$664$} & \blue{$73$}	\\
		\blue{655}		& \blue{$0.7809$} & \blue{$\mathbf{0.7384}$} & \blue{$0.7512$} & \blue{$617$} & \blue{$68$}	\\
		\blue{13}		& \blue{$1.3492$} & \blue{$1.4189$} & \blue{$\mathbf{1.3087}$} & \blue{$573$} & \blue{$63$}	\\
		\blue{450}		& \blue{$0.9441$} & \blue{$\mathbf{0.9121}$} & \blue{$0.9315$} & \blue{$486$} & \blue{$54$}	\\
		\blue{276}		& \blue{$0.7409$} & \blue{$0.7651$} & \blue{$\mathbf{0.7277}$} & \blue{$467$} & \blue{$51$}	\\ \hline
		\blue{5 users}		& \blue{$1.0454$} & \blue{$1.0271$} & \blue{$\mathbf{1.0265}$} & \blue{$2805$} & \blue{$311$}	\\ 
		\blue{10 users}		& \blue{$1.0261$} & \blue{$\mathbf{0.9895}$} & \blue{$1.0049$} & \blue{$4960$} & \blue{$551$}	\\ 
		\blue{15 users}		& \blue{$1.0171$} & \blue{$\mathbf{0.9841}$} & \blue{$0.9949$} & \blue{$6842$} & \blue{$760$}	\\ 
		\blue{20 users}		& \blue{$0.9867$} & \blue{$0.9684$} & \blue{$\mathbf{0.9621}$} & \blue{$8617$} & \blue{$957$}	\\ \hline
	\end{tabular}
	\caption{\blue{Average test RMSEs for ReLU, 1-hop max and 1-hop median by user, over 5 data splits. Number of samples in training and test sets.}}
	\label{table_user}
\end{table} 

%\begin{table}
%	\centering
%	\begin{tabular}{lccc} \hline
%		User    & Local activation (min.) & \cite{weiyu18-movie} & \cite{monti17-movie} \\ \hline
%		\blue{1}		& \blue{$0.7935$} & \blue{$0.9091$} & \blue{$0.9250$} \\
%		\blue{269}		& \blue{$1.0308$} & \blue{$1.8533$} & \blue{$1.2392$} \\
%		\blue{437}		& \blue{$0.9608$} & \blue{$1.1870$} & \blue{$1.2073$} \\
%		\blue{754}		& \blue{$0.5774$} & \blue{$0.6313$} & \blue{$0.5435$} \\
%		\blue{858}		& \blue{$0.7071$} & \blue{$0.9688$} & \blue{$1.3026$} \\ \hline
%	\end{tabular}
%	\caption{Comparison of test RMSEs obtained across different methods ---localized activation GNNs, %high order user-based linear graph filters \cite{weiyu18-movie}, multi-graph CNNs \cite{monti17-movie}.}
%	\label{table_user_comp}
%\end{table}

\begin{table}
	\centering
	\begin{tabular}{lccc} \hline
		\blue{User}    & \blue{Local activation (min.)} & \blue{\cite{weiyu18-movie}} & \blue{\cite{monti17-movie}} \\ \hline
		\blue{405}		& \blue{$\mathbf{1.2681}$} & \blue{$1.4620$} & \blue{$1.2753$} \\
		\blue{655}		& \blue{$\mathbf{0.6138}$} & \blue{$0.8555$} & \blue{$0.7284$} \\
		\blue{13}		& \blue{$\mathbf{1.1659}$} & \blue{$1.5831$} & \blue{$1.3954$} \\
		\blue{450}		& \blue{$0.9230$} & \blue{$1.0341$} & \blue{$\mathbf{0.8986}$} \\
		\blue{276}		& \blue{$\mathbf{0.6504}$} & \blue{$1.1796$} & \blue{$0.9559$} \\ \hline
	\end{tabular}
	\caption{\blue{Comparison of test RMSEs obtained across different methods for best user data split ---localized activation GNNs, high order user-based linear graph filters \cite{weiyu18-movie}, multi-graph CNNs \cite{monti17-movie}.}}
	\label{table_user_comp}
\end{table}

In our user-based approach, the user similarity networks are built from the 90,000 ratings in the training set with $k=40$. This results in networks with average out-degree of 38.3. 
Using the same 90-10 \blue{training-to-test ratio} and $k=40$, we also build directed movie similarity networks, whose average out-degree is 1.09. Because these networks are large and highly connected at some nodes, only the 1-hop median and max were considered. The number of parameters in the convolutional layer of the ReLU, the 1-hop median and the 1-hop max architectures are shown in Table \ref{table_params}. The number of parameters of the 1-hop median/max only exceeds the number of parameters of the ReLU GNN by 2, because we regularized it by making $\bbw^1_\ell = \ldots = \bbw^F_\ell = \bbw_\ell$ the same for all features.

\blue{The 20 users with largest number of ratings were chosen to assess GNN performance in the user-based approach; we report test RMSEs for the first 5 and the averages for the first 5, 10, 15, and all 20 in Table \ref{table_user}. Localized activation functions consistently outperform the ReLU on average when the first 5, 10, 15 and all 20 users are considered, as well in most of the individual user cases (at least one of the local architectures outperforms the ReLU architecture for every user).} More than improving upon the ReLU, we note that our localized activation functions incur an increase in capacity given that they only have 2 more parameters than the ReLU GNN (cf. Table \ref{table_params}). On Table \ref{table_user_comp} we contrast the minimum RMSE achieved \blue{by either the max or median GNNs in the best data split --the data split where localized activations outperform the ReLU by the largest margin-- for each individual user} with the RMSEs obtained using the methods in \cite{weiyu18-movie} (user-based) and \cite{monti17-movie}. For \cite{weiyu18-movie}, we report the smallest RMSE of the 6 filters that are trained.
%We test the architecture's performance by predicting the ratings by the 5 users in Table \ref{table_user_comp} on the test set. 
Our user-based method outperforms both \cite{weiyu18-movie} and \cite{monti17-movie} for all users except \blue{450}. Even then, the difference in the recorded RMSEs is minimal relatively to  discrepancies observed in other rows.

As for the movie-based approach, accuracies for \blue{the first 5 movies with most ratings, as well as averages for the first 5, 10, 15 and 20,} are shown in Table \ref{table_item}. Localized activation functions outperform the ReLU for all movies with as little as 2 additional trainable parameters (cf. Table \ref{table_params}), which once again attests to the increased capacity of median and max GNNs.  On Table \ref{table_item_comp}, \blue{the results obtained on the best data split for each user (the data split where localized activations outperform the ReLU by the largest margin)} are compared with those obtained using the movie-based method with 3 filter taps in \cite{weiyu18-movie} and the MGCNN in \cite{monti17-movie} \blue{on these same splits}. In both \cite{weiyu18-movie} and \cite{monti17-movie}, all 1,682 movies were taken into account. 

\blue{On Table \ref{table_item_comp}, localized activation function GNNs outperform \cite{weiyu18-movie} for all movies and \cite{monti17-movie} for 3 of the 5 movies. Except for ``Star Wars'' (where we outperform both methods with at least a 10\% reduction in RMSE) and ``Contact'' (where \cite{monti17-movie} outperforms our method by 5\%), for all other movies the differences in RMSE are not as significant as they were in the user-based approach. The most pertinent observation here is that, in the movie-base approach, our method is able to deliver recommendations that are essentially as accurate as those provided by \cite{weiyu18-movie} and \cite{monti17-movie}, but with less data and less computational complexity. Unlike \cite{weiyu18-movie} and \cite{monti17-movie}, in both the user and movie-based approaches we do not need to use the entire 90,000\blue{-sample} training dataset to train the GNN because we only look at a row/column of the rating matrix at a time. In this sense, another advantage of our architecture is the ability to offer more personalized and possibly on-demand movie recommendations.}

%\begin{table}
%	\centering
%	\begin{tabular}{lccccc} \hline
%		& \multicolumn{3}{c}{Val. RMSE} & \multicolumn{2}{c}{Samples} \\
%		Movie    & ReLU & 1h-max & 1h-med & Train & Test	\\ \hline
%		\blue{Toy Story}	& \blue{$0.9428$} & \blue{$0.8692$} & \blue{$0.8433$} & \blue{$402$} & \blue{$50$}	\\
%		\blue{Il Postino}	& \blue{$0.9129$} & \blue{$0.8165$} & \blue{$0.8498$} & \blue{$164$} & \blue{$19$}	\\
%		\blue{Cape Fear}	& \blue{$0.7454$} & \blue{$0.5774$} & \blue{$0.7454$} & \blue{$72$} & \blue{$14$}	\\
%		\blue{The Right Stuff} & \blue{$0.8660$} & \blue{$0.7500$} & \blue{$0.7071$} & \blue{$136$} & \blue{$21$}	\\
%		\blue{The War Room} & \blue{$1.0000$} & \blue{$0.0000$} & \blue{$0.0000$} & \blue{$7$} & \blue{$2$}	\\  \hline
%	\end{tabular}
%	\caption{Test RMSEs for ReLU, 1-hop max and 1-hop median by movie. Number of samples in training and test sets.}
%	\label{table_item}
%\end{table}

\begin{table}
	\centering
	\begin{tabular}{l|ccc|cc} \hline
		& \multicolumn{3}{c|}{\blue{Test RMSE}} & \multicolumn{2}{c}{\blue{Samples}} \\
		\blue{Movie}    & \blue{ReLU} & \blue{1h-max} & \blue{1h-med} & \blue{Train} & \blue{Test}	\\ \hline
		\blue{Star Wars}	& \blue{$0.9505$} & \blue{$0.9946$} & \blue{$\mathbf{0.9269}$} & \blue{$525$} & \blue{$58$}	\\
		\blue{Contact}	& \blue{$1.1337$} & \blue{$\mathbf{1.0836}$} & \blue{$1.0855$} & \blue{$459$} & \blue{$50$}	\\
		\blue{Fargo}	& \blue{$1.0411$} & \blue{$1.0994$} & \blue{$\mathbf{1.0403}$} & \blue{$458$} & \blue{$50$}	\\
		\blue{Return of the Jedi} & \blue{$0.9294$} & \blue{$\mathbf{0.9236}$} & \blue{$0.9577$} & \blue{$457$} & \blue{$50$}	\\
		\blue{Liar Liar} & \blue{$1.1926$} & \blue{$\mathbf{1.1908}$} & \blue{$1.1988$} & \blue{$437$} & \blue{$48$}	\\  \hline
		\blue{5 movies} & \blue{$1.0494$} & \blue{$1.0584$} & \blue{$\mathbf{1.0418}$} & \blue{$2333$} & \blue{$259$}	\\
		\blue{10 movies} & \blue{$1.1045$} & \blue{$\mathbf{1.0963}$} & \blue{$1.0974$} & \blue{$4377$} & \blue{$486$}	\\
		\blue{15 movies} & \blue{$1.0800$} & \blue{$\mathbf{1.0712}$} & \blue{$1.0793$} & \blue{$6185$} & \blue{$687$}	\\
		\blue{20 movies} & \blue{$1.0615$} & \blue{$\mathbf{1.0520}$} & \blue{$1.0580$} & \blue{$7845$} & \blue{$871$}	\\ \hline
	\end{tabular}
	\caption{\blue{Average test RMSEs for ReLU, 1-hop max and 1-hop median by movie, over 5 data splits. Number of samples in training and test sets.}}
	\label{table_item}
\end{table}

%\begin{table}
%	\centering
%	\begin{tabular}{lcccc} \hline
%		Movie   & Local activation (min.) & \cite{weiyu18-movie} & \cite{monti17-movie} \\ \hline
%		\blue{Toy Story}		& \blue{$0.8433$} & \blue{$0.7911$} & \blue{$0.8934$} \\
%		\blue{Il Postino}		& \blue{$0.8165$} & \blue{$1.4639$} & \blue{$1.1241$} \\
%		\blue{Cape Fear}		& \blue{$0.5774$} & \blue{$0.8843$} & \blue{$0.7342$} \\
%		\blue{The Right Stuff}		& \blue{$0.7071$} & \blue{$0.6049$} & \blue{$0.5826$} \\
%		\blue{The War Room}	& \blue{$0.0000$} & \blue{$1.2006$} & \blue{$0.5466$} \\ \hline
%	\end{tabular}
%	\caption{Comparison of test RMSEs obtained across different methods ---localized activation GNNs, high order movie-based linear graph filters \cite{weiyu18-movie}, multi-graph CNNs \cite{monti17-movie}.}
%	\label{table_item_comp}
%\end{table}

\begin{table}
	\centering
	\begin{tabular}{lcccc} \hline
		\blue{Movie}   & \blue{Local activation (min.)} & \blue{\cite{weiyu18-movie}} & \blue{\cite{monti17-movie}} \\ \hline
		\blue{Star Wars}		& \blue{$\mathbf{0.6823}$} & \blue{$0.7690$} & \blue{$0.7462$} \\
		\blue{Contact}		& \blue{$1.0290$} & \blue{$1.0300$} & \blue{$\mathbf{0.9746}$} \\
		\blue{Fargo}		& \blue{$0.8518$} & \blue{$1.0684$} & \blue{$\mathbf{0.8420}$} \\
		\blue{Return of the Jedi}		& \blue{$\mathbf{0.8402}$} & \blue{$0.8564$} & \blue{$0.8550$} \\
		\blue{Liar Liar}	& \blue{$\mathbf{1.1693}$} & \blue{$1.1708$} & \blue{$1.1697$} \\ \hline
	\end{tabular}
	\caption{\blue{Comparison of test RMSEs obtained across different methods ---localized activation GNNs, high order movie-based linear graph filters \cite{weiyu18-movie}, multi-graph CNNs \cite{monti17-movie}.}}
	\label{table_item_comp}
\end{table}

\blue{

\subsection{Citation networks} \label{subsec_cora}  

To evaluate the performance of the localized activation functions in a node classification setting, we compare max and median GNNs with GNN architectures using only ReLU activations on the Cora dataset.
The Cora dataset consists of $N=2708$ scientific articles that pertain to $C=7$ different classes and make up the nodes of a citation network. Each article is described by a bag-of-words feature vector with $F_{\mbox{{\scriptsize in}}}=1433$ words. Given the articles' feature vectors, the objective is to predict to which class each article belongs.

In our setup, feature vectors are interpreted as multi-feature graph signals and, during training, validation and test, all of them are fed to the GNN models. The GNNs generate intermediate features for all nodes, which are then interpreted as individual samples and processed through a fully connected layer mapping each node's features to a class label between 1 and 7. In the training stage, we only predict labels for $140$ nodes; the validation and test nodes are a total of $300$ and $1000$ respectively. This data split is the same used in \cite{kipf17-classifgcnn}, and can be found at \texttt{http://github.com/tkipf/pygcn}. 

$1$-hop and $2$-hop max and median GNNs with $L=1, F_0 = F_{\mbox{{\scriptsize in}}}, F_1 = 16$, and $K_1=5$ were compared against two ReLU architectures, which we call $\mbox{ReLU}_1$ and $\mbox{ReLU}_2$.  $\mbox{ReLU}_1$ has the same hyperparameters as the localized activation function GNNs, while $\mbox{ReLU}_2$ has hyperparameters $L=4, F_0 = F_{\mbox{{\scriptsize in}}}, \{F_i\}_{i=1}^4 = 16$, and $\{K_i\}_{i=1}^4=2$. $\mbox{ReLU}_2$ was designed so as to provide for a fair comparison with \cite{kipf17-classifgcnn}. Even if the GNN architecture in \cite{kipf17-classifgcnn} only considers convolutional filters with $1$-hop diffusions ($K=2$), by setting the number of layers to $L=4$ we can force information exchanges at most $4$-hops away, which is equivalent to having $K=5$. 

All models were trained by optimizing the cross entropy loss, for a total of $150$ epochs and with learning rate $0.005$. The classification accuracy achieved by each architecture is presented in Table \ref{table_cora}. Regardless of the number of hops or of the type of nonlinearity, the localized activation functions outperform both ReLU architectures by a significant margin, attesting to the value of encoding the graph structure in the computation of nonlinearities to improve GNN capacity.

\begin{table}
	\centering
	\begin{tabular}{l|cc|cc|cc} \hline
	   & \multicolumn{2}{c|}{\blue{ReLU}} & \multicolumn{2}{c|}{\blue{Max}} & \multicolumn{2}{c}{\blue{Median}} \\ 
		\blue{Architecture} & \blue{$L=1$}		& \blue{$L=4$} & \blue{1h} & \blue{2h} & \blue{1h} & \blue{2h} \\ \hline
		\blue{Accuracy (\%)} & \blue{$72.4$} & \blue{$46.5$} & \blue{$\mathbf{80.5}$} & \blue{$77.7$} & \blue{$78.8$} & \blue{$78.4$} \\ \hline
	\end{tabular}
	\caption{\blue{Cora test classification accuracy for $\mbox{ReLU}_1$, $\mbox{ReLU}_2$ and $1$-hop and $2$-hop max and median GNNs.}}
	\label{table_cora}
\end{table}

}

%%\\\\\\\ IEEE Transactions inclusion of figures:

%\begin{figure}[!t]
%\centering
%\includegraphics[width=2.5in]{myfigure}
% where an .eps filename suffix will be assumed under latex, 
% and a .pdf suffix will be assumed for pdflatex; or what has been declared
% via \DeclareGraphicsExtensions.
%\caption{Simulation results for the network.}
%\label{fig_sim}
%\end{figure}

%\pagebreak

%%%%%%%%%%%%%%%%%%%%%%%%%%%%%%%
%%% SECTION : Conclusions   %%%
%%%%%%%%%%%%%%%%%%%%%%%%%%%%%%%

\section{Conclusions} \label{sec:conclusions}

%!TEX root = nonlinear.tex

%%%%%%%%%%%%%%%%%%%%%%%%%%%%%%%
%%% SECTION : Conclusions   %%%
%%%%%%%%%%%%%%%%%%%%%%%%%%%%%%%

We have presented GNN architectures that replace pointwise nonlinearities by activation functions with multiple local inputs. These activation functions perform a linear combination of signals observed at the output of nonlinear max or median operators in node neighborhoods of increasing resolution. By using either max or median filters, we achieved greater model capacity at the expense of only a slight increase in computational complexity, with the architectures still being linear in the number of nodes. As weighted linear combinations of nonlinear operators, they also endow GNNs with the ability to learn activation function parameters from data. We have additionally shown that the gradients of localized activation functions can be efficiently computed through backpropagation.

Median and max GNNs were compared with GNNs using only pointwise activation functions in 3 different problems, and we observed performance improvements across all of them. In source localization on synthetic graphs, localized activation functions improved GNN capacity and outperformed the traditional ReLU-based designs regardless of the number of hops and of the graph degree. In authorship attribution, the classification accuracy improved in 1.8\% for Emily Br\"{o}nte and 0.4\% for Jane Austen, approaching a little more than 98\% of texts correctly classified as having been written or not by the author. \blue{We have additionally} proposed a user/movie-oriented movie recommendation system that is fit for online implementation and that improved upon \blue{both the conventional GNN implementation and two comparable methods. Finally, on the Cora dataset localized activation GNNs were shown to improve performance upon conventional GNNs with pointwise activation functions in at least 7\%.}

\appendices
% \section{Proof of the First Zonklar Equation}

% Uncomment

\bibliographystyle{IEEEtran}
\bibliography{myIEEEabrv,bib-nonlinear}

% Generated by IEEEtran.bst, version: 1.14 (2015/08/26)
\begin{thebibliography}{10}
\providecommand{\url}[1]{#1}
\csname url@samestyle\endcsname
\providecommand{\newblock}{\relax}
\providecommand{\bibinfo}[2]{#2}
\providecommand{\BIBentrySTDinterwordspacing}{\spaceskip=0pt\relax}
\providecommand{\BIBentryALTinterwordstretchfactor}{4}
\providecommand{\BIBentryALTinterwordspacing}{\spaceskip=\fontdimen2\font plus
\BIBentryALTinterwordstretchfactor\fontdimen3\font minus
  \fontdimen4\font\relax}
\providecommand{\BIBforeignlanguage}[2]{{%
\expandafter\ifx\csname l@#1\endcsname\relax
\typeout{** WARNING: IEEEtran.bst: No hyphenation pattern has been}%
\typeout{** loaded for the language `#1'. Using the pattern for}%
\typeout{** the default language instead.}%
\else
\language=\csname l@#1\endcsname
\fi
#2}}
\providecommand{\BIBdecl}{\relax}
\BIBdecl

\bibitem{ruiz18-local}
L.~Ruiz, F.~Gama, A.~G.~Marques, and A.~Ribeiro, ``Median activation functions
  for graph neural networks,'' in \emph{44th {IEEE} Int. Conf. Acoust., Speech
  and Signal Process.}\hskip 1em plus 0.5em minus 0.4em\relax Brighton, UK:
  IEEE, 12-17 May 2019, pp. 7440--7447.

\bibitem{bruna14-deepspectralnetworks}
J.~Bruna, W.~Zaremba, A.~Szlam, and Y.~LeCun, ``Spectral networks and deep
  locally connected networks on graphs,'' in \emph{2nd Int. Conf. Learning
  Representations}.\hskip 1em plus 0.5em minus 0.4em\relax Banff, AB: Assoc.
  Comput. Linguistics, 14-16 Apr. 2014, pp. 1--14.

\bibitem{henaff2015deep}
\BIBentryALTinterwordspacing
M.~Henaff, J.~Bruna, and Y.~LeCun, ``Deep convolutional networks on
  graph-structured data,'' \emph{arXiv:1506.05163v1 [cs.LG]}, 16 June 2015.
  [Online]. Available: \url{http://arxiv.org/abs/1506.05163}
\BIBentrySTDinterwordspacing

\bibitem{defferrard17-cnngraphs}
M.~Defferrard, X.~Bresson, and P.~Vandergheynst, ``Convolutional neural
  networks on graphs with fast localized spectral filtering,'' in \emph{30th
  Conf. Neural Inform. Process. Syst.}\hskip 1em plus 0.5em minus 0.4em\relax
  Barcelona, Spain: Neural Inform. Process. Syst. Foundation, 5-10 Dec. 2016,
  pp. 3844--3858.

\bibitem{kipf17-classifgcnn}
T.~N. Kipf and M.~Welling, ``Semi-supervised classification with graph
  convolutional networks,'' in \emph{5th Int. Conf. Learning
  Representations}.\hskip 1em plus 0.5em minus 0.4em\relax Toulon, France:
  Assoc. Comput. Linguistics, 24-26 Apr. 2017, pp. 1--14.

\bibitem{ying18-pinterest}
R.~Ying, R.~He, K.~Chen, P.~Eksombatchai, W.~L.~Hamilton, and J.~Leskovec,
  ``Graph convolutional neural networks for web-scale recommender systems,'' in
  \emph{24th ACM SIGKDD Int. Conf. Knowledge Discovery \& Data Mining}.\hskip
  1em plus 0.5em minus 0.4em\relax London, UK: Assoc. Computing Machinery,
  19-23 Aug. 2018, pp. 974--983.

\bibitem{gama18-gnnarchit}
F.~Gama, A.~G.~Marques, G.~Leus, and A.~Ribeiro, ``Convolutional neural network
  architectures for signals supported on graphs,'' \emph{{IEEE} Trans. Signal
  Process.}, vol.~67, no.~4, pp. 1034--1049, Feb. 2019.

\bibitem{xu2018how}
K.~Xu, W.~Hu, J.~Leskovec, and S.~Jegelka, ``How powerful are graph neural
  networks?'' in \emph{7th Int. Conf. Learning Representations}.\hskip 1em plus
  0.5em minus 0.4em\relax New Orleans, LA: Assoc. Comput. Linguistics, 6-9 May
  2019, pp. 1--17.

\bibitem{kuo17-recos}
C.-C.~J. Kuo, ``The {CNN} as a guided multilayer {RECOS} transform,''
  \emph{{IEEE} Signal Process. Mag.}, vol.~34, no.~3, pp. 81--89, May 2017.

\bibitem{segarra17-linear}
S.~Segarra, A.~G.~Marques, and A.~Ribeiro, ``Optimal graph-filter design and
  applications to distributed linear network operators,'' \emph{{IEEE} Trans.
  Signal Process.}, vol.~65, no.~15, pp. 4117--4131, Aug. 2017.

\bibitem{segarra16-globalsip}
S.~Segarra, A.~G.~Marques, G.~R.~Arce, and A.~Ribeiro, ``Center-weighted median
  graph filters,'' in \emph{2016 {IEEE} Global Conf. Signal and Inform.
  Process.}\hskip 1em plus 0.5em minus 0.4em\relax Washington, DC: IEEE, 7-9
  Dec. 2016, pp. 336--340.

\bibitem{segarra17-camsap}
------, ``Design of weighted median graph filters,'' in \emph{2017 {IEEE} Int.
  Workshop Comput. Advances Multi-Sensor Adaptive Process.}\hskip 1em plus
  0.5em minus 0.4em\relax Curacao, Netherlands Antilles: IEEE, 10-13 Dec. 2017,
  pp. 1--5.

\bibitem{hodgson85-rank}
R.~M. Hodgson, D.~G. Bailey, M.~J. Naylor, A.~L. Ng, and S.~McNeill,
  ``Properties, implementations and applications of rank filters,'' \emph{Image
  and Vision Computing}, vol.~3, no.~1, pp. 3--14, Feb. 1985.

\bibitem{ZouLerman18-Scattering}
\BIBentryALTinterwordspacing
D.~Zou and G.~Lerman, ``Graph convolutional neural networks via scattering,''
  \emph{Appl. Comput. Harmonic Anal.}, 13 June 2019, accepted for publication
  (in press). [Online]. Available:
  \url{http://doi.org/10.1016/j.acha.2019.06.003}
\BIBentrySTDinterwordspacing

\bibitem{gama19-stability}
\BIBentryALTinterwordspacing
F.~Gama, J.~Bruna, and A.~Ribeiro, ``Stability properties of graph neural
  networks,'' \emph{arXiv:1905.04497v2 [cs.LG]}, 4 Sep. 2019. [Online].
  Available: \url{http://arxiv.org/abs/1905.04497}
\BIBentrySTDinterwordspacing

\bibitem{gama19-neurips}
------, ``Stability of graph scattering transforms,'' in \emph{33rd Conf.
  Neural Inform. Process. Syst.}\hskip 1em plus 0.5em minus 0.4em\relax
  Vancouver, BC: Neural Inform. Process. Syst. Foundation, 8-14 Dec. 2019.

\bibitem{rezatofighi17-deepsetnet}
S.~H. Rezatofighi, V.~Kumar B.~G., A.~Milan, E.~Abbasnejad, A.~Dick, and
  I.~Reid, ``{DeepSetNet}: Predicting sets with deep neural networks,'' in
  \emph{3rd {IEEE} Int. Conf. Comput. Vision}.\hskip 1em plus 0.5em minus
  0.4em\relax Venice, Italy: Comput. Vision Foundation, 22-29 Oct. 2017, pp.
  5247--5256.

\bibitem{zaheer17-deepsets}
M.~Zaheer, S.~Kottur, S.~Ravanbhakhsh, B.~P\'{o}czos, R.~Salakhutdinov, and
  A.~J. Smola, ``Deep sets,'' in \emph{31st Conf. Neural Inform. Process.
  Syst.}\hskip 1em plus 0.5em minus 0.4em\relax Long Beach, CA: Neural Inform.
  Process. Syst. Foundation, 4-9 Dec. 2017, pp. 1--11.

\bibitem{chandra04-sigmoid}
P.~Chandra and Y.~Singh, ``An activation function adapting training algorithm
  for sigmoidal feedforward networks,'' \emph{Neurocomputing}, vol.~61, pp.
  429--437, Oct. 2004.

\bibitem{he15-rect}
K.~He, X.~Zhang, S.~Ren, and J.~Sun, ``Delving deep into rectifiers: Surpassing
  human-level performance on {ImageNet} classification,'' in \emph{2nd {IEEE}
  Int. Conf. Comput. Vision}.\hskip 1em plus 0.5em minus 0.4em\relax Santiago,
  Chile: Comput. Vision Foundation, 7-13 Dec. 2015, pp. 1026--1034.

\bibitem{goodfellow13-maxout}
I.~Goodfellow, D.~Warde-Farley, M.~Mirza, A.~Courville, and Y.~Bengio, ``Maxout
  networks,'' in \emph{30th Int. Conf. Mach. Learning}.\hskip 1em plus 0.5em
  minus 0.4em\relax Atlanta, GA: J. Mach. Learning Res., 16-21 June 2013, pp.
  1319--1327.

\bibitem{goodfellow16-deeplearn}
I.~Goodfellow, Y.~Bengio, and A.~Courville, \emph{Deep Learning}, ser. The
  Adaptive Computation and Machine Learning Series.\hskip 1em plus 0.5em minus
  0.4em\relax Cambridge, MA: The {MIT} Press, 2016.

\bibitem{scardapane18-kernelactiv}
S.~Scardapane, S.~Van~Vaerenbergh, D.~Comminiello, and A.~Uncini, ``Improving
  graph convolutional networks with non-parametric activation functions,'' in
  \emph{26th Eur. Signal Process. Conf. 2018}.\hskip 1em plus 0.5em minus
  0.4em\relax Rome, Italy: EURASIP, 3-7 Sep. 2018, pp. 872--876.

\bibitem{sandryhaila13-dspg}
A.~Sandryhaila and J.~M.~F. Moura, ``Discrete signal processing on graphs,''
  \emph{{IEEE} Trans. Signal Process.}, vol.~61, no.~7, pp. 1644--1656, Apr.
  2013.

\bibitem{sandryhaila14-freq}
A.~Sandyhaila and J.~M.~F. Moura, ``Discrete signal processing on graphs:
  Frequency analysis,'' \emph{{IEEE} Trans. Signal Process.}, vol.~62, no.~12,
  pp. 3042--3054, June 2014.

\bibitem{shuman13-mag}
D.~I. Shuman, S.~K. Narang, P.~Frossard, A.~Ortega, and P.~Vandergheynst, ``The
  emerging field of signal processing on graphs: Extending high-dimensional
  data analysis to networks and other irregular domains,'' \emph{{IEEE} Signal
  Process. Mag.}, vol.~30, no.~3, pp. 83--98, May 2013.

\bibitem{du2018graph}
J.~Du, J.~Shi, S.~Kar, and J.~M.~F. Moura, ``On graph convolution for graph
  {CNNs},'' in \emph{2018 {IEEE} Data Sci. Workshop}.\hskip 1em plus 0.5em
  minus 0.4em\relax Lausanne, Switzerland: IEEE, 4-6 June 2018, pp. 239--243.

\bibitem{lecun15-deeplearning}
Y.~LeCun, Y.~Bengio, and G.~Hinton, ``Deep learning,'' \emph{Nature}, vol. 521,
  no. 7553, pp. 85--117, 2015.

\bibitem{shen18-online}
G.~B. Giannakis, Y.~Shen, and G.~V. Karanikolas, ``Topology identification and
  learning over graphs: Accounting for nonlinearities and dynamics,''
  \emph{Proc. {IEEE}}, vol. 106, no.~5, pp. 787--807, May 2018.

\bibitem{newman2018networks}
M.~E.~J. Newman, \emph{Networks: An Introduction}.\hskip 1em plus 0.5em minus
  0.4em\relax New York, NY: Oxford University Press, 2010.

\bibitem{weiyu18-movie}
W.~Huang, A.~G.~Marques, and A.~Ribeiro, ``Rating prediction via graph signal
  processing,'' \emph{{IEEE} Trans. Signal Process.}, vol.~66, no.~19, pp.
  5066--5081, Oct. 2018.

\bibitem{monti17-movie}
F.~Monti, M.~M.~Bronstein, and X.~Bresson, ``Geometric matrix completion with
  recurrent multi-graph neural networks,'' in \emph{31st Conf. Neural Inform.
  Process. Syst.}\hskip 1em plus 0.5em minus 0.4em\relax Long Beach, CA: Neural
  Inform. Process. Syst. Foundation, 4-9 Dec. 2017, pp. 3697--3707.

\bibitem{kingma17-adam}
D.~P. Kingma and J.~L. Ba, ``{ADAM}: A method for stochastic optimization,'' in
  \emph{3rd Int. Conf. Learning Representations}.\hskip 1em plus 0.5em minus
  0.4em\relax San Diego, CA: Assoc. Comput. Linguistics, 7-9 May 2015, pp.
  1--15.

\bibitem{segarra16-deconv}
S.~Segarra, G.~Mateos, A.~G. Marques, and A.~Ribeiro, ``Blind identification of
  graph filters,'' \emph{{IEEE} Trans. Signal Process.}, vol.~65, no.~5, pp.
  1146--1159, March 2017.

\bibitem{srivastava14-dropout}
N.~Srivastava, G.~Hinton, A.~Krishevsky, I.~Sutskever, and R.~Slakhutdinov,
  ``Dropout: A simple way to prevent neural networks from overfitting,''
  \emph{J. Mach. Learning Res.}, vol.~15, pp. 1929--1958, June 2014.

\bibitem{ErdosRenyi59-RandomGraphs}
P.~Erd{\H{o}}s and A.~R{\'e}nyi, ``On random graphs {I},'' \emph{Publicationes
  Mathematicae Debrecen}, vol.~6, pp. 290--297, 1959.

\bibitem{penrose07-rgg}
M.~Penrose, \emph{Random Geometric Graphs}, ser. Oxford Studies in
  Probability.\hskip 1em plus 0.5em minus 0.4em\relax New York, NY: Oxford
  University Press, 2003, vol.~5.

\bibitem{mosteller64-auth}
F.~Mosteller and D.~Wallace, \emph{Inference and disputed authorship: The
  Federalist}.\hskip 1em plus 0.5em minus 0.4em\relax Norwood, MA:
  Addison-Wesley, 1964.

\bibitem{eisen15-auth}
S.~Segarra, M.~Eisen, and A.~Ribeiro, ``Authorship attribution through function
  word adjacency networks,'' \emph{{IEEE} Trans. Signal Process.}, vol.~63,
  no.~20, pp. 5464 -- 5478, June 2015.

\bibitem{harper16-movielens}
F.~M. Harper and J.~A. Konstan, ``The {MovieLens} datasets: History and
  context,'' \emph{ACM Trans. Interactive Intell. Syst.}, vol.~5, no.~4, pp.
  19:(1--19), Jan. 2016.

\end{thebibliography}

\end{document}